\newif\ifextended
\pgfplotsset{compat=newest}
\theoremstyle{definition}
\newtheorem*{theorem*}{Theorem}
\definecolor{pfcolor}{RGB}{110,180,227}
\newcommand{\pfbullet}{{\color{pfcolor}{\ensuremath{\blacksquare}}}\xspace}
\definecolor{bdscolor}{RGB}{229,158,12}
\newcommand{\bdsbullet}{{\color{bdscolor}{\ensuremath{\bullet}}}\xspace}
\definecolor{pmdscolor}{RGB}{147,3,211}
\newcommand{\pmdsbullet}{{\color{pmdscolor}{\ensuremath{\blacktriangledown}}}\xspace}
\definecolor{ndscolor}{RGB}{69,155,118}
\newcommand{\ndsbullet}{{\color{ndscolor}{\ensuremath{\blacktriangle}}}\xspace}
\newcommand*{\SavedLstInline}{}
\LetLtxMacro\SavedLstInline\lstinline
\DeclareRobustCommand*{\lstinline}{\ifmmode
    \let\SavedBGroup\bgroup
    \def\bgroup{\let\bgroup\SavedBGroup
      \hbox\bgroup
    }\fi
  \SavedLstInline
}
\lstdefinelanguage{zelus}
   {morekeywords={
	let,in,rec,where,end,if,then,else,do,done,run,
	open,
	fun,node,hybrid,proba,
	match,with,automaton,emit,
	pre,when,whenot,fby,merge,on,clock,
	or,and,not,as,mod,
	unless,until,continue,reset,every,await,
	init,der,type,last,
	period,local,present,sample, observe, factor, infer, eval
    },
    sensitive=true,
    morecomment=[n]{(*}{*)},
    morestring=[b]",
escapechar=\%,
    columns=fullflexible,
    keepspaces=true,
    basicstyle=\ttfamily,
    mathescape=true,
    }
\def\zl{\lstinline[basicstyle=\small\ttfamily]}
\def\zlm{\lstinline[basicstyle=\small\ttfamily]}
\def\zlmf{\lstinline[basicstyle=\footnotesize\ttfamily]}
\def\zlmm{\lstinline[basicstyle=\scriptsize\ttfamily]} 
\newcommand{\lustre}{Lustre\xspace}
\newcommand{\zelus}{Zelus\xspace}
\newcommand{\esterel}{Esterel\xspace}
\newcommand{\scade}{SCADE\xspace}
\newcommand{\simulink}{Simulink\xspace}
\newcommand{\gcdsacro}{SDS\xspace}
\newcommand{\muF}{$\mu F$\xspace}
\tikzstyle{initialized}=[circle, draw=black, minimum width=1ex]
\tikzstyle{marginalized}=[circle, draw=black, fill=gray!30, minimum width=0.2cm]
\tikzstyle{realized}=[circle, draw=black, fill=gray!80, minimum width=0.2cm]
\newcommandx{\link}[4][1={-stealth}, 4={}]{\draw[#1] (#2) to[bend left=25] node[pos=0.5] {#4} (#3)}
\newcommand{\xmark}{\text{\ding{55}}}
\newcommand{\rmark}{{\Large{\color{red!90!black}\text{\ding{55}}}}}
\renewcommand{\paragraph}[1]{\noindentparagraph{#1}}
\begin{document}

\title{Reactive Probabilistic Programming}

\author{Guillaume Baudart}
\affiliation{
  \institution{MIT-IBM Watson AI Lab,\\IBM Research}
  \country{USA}}

\author{Louis Mandel}
\affiliation{
  \institution{MIT-IBM Watson AI Lab,\\IBM Research}
  \country{USA}}

\author{Eric Atkinson}
\affiliation{
  \institution{MIT}
  \country{USA}}

\author{Benjamin Sherman}
\affiliation{
  \institution{MIT}
  \country{USA}}

\author{Marc Pouzet}
\affiliation{
  \institution{École Normale Supérieure,\\PSL Research University}
  \country{France}}

\author{Michael Carbin}
\affiliation{
  \institution{MIT}
  \country{USA}}

\renewcommand{\shortauthors}{}

\newcommand{\ProbZelus}{ProbZelus\xspace}

\begin{abstract}

Synchronous modeling is at the heart of programming languages like
\lustre, \esterel, or \scade{} used routinely for implementing safety
critical control software, e.g., fly-by-wire and engine control in planes.
However, to date these languages have had limited modern support for
modeling uncertainty --- probabilistic aspects of the software's
environment or behavior --- even though modeling uncertainty is a
primary activity when designing a control system.

In this paper we present \ProbZelus the first {\em synchronous
  probabilistic programming language}. 
\ProbZelus conservatively provides the
facilities of a synchronous language to write control software, with
probabilistic constructs to model uncertainties and perform {\em
  inference-in-the-loop}.

We present the design and implementation of the language. We propose a measure-theoretic semantics of probabilistic stream functions and a simple type discipline to
separate deterministic and probabilistic expressions. We demonstrate a
semantics-preserving compilation into a first-order functional language
that lends itself to a simple presentation of inference algorithms for streaming models.
We also redesign the delayed sampling inference algorithm to provide efficient \emph{streaming} inference.
Together with an evaluation on several reactive applications, our results
demonstrate that \ProbZelus enables the design of reactive probabilistic applications and efficient, bounded memory inference.

\end{abstract}

\begin{CCSXML}
<ccs2012>
   <concept>
       <concept_id>10003752.10003753.10003760</concept_id>
       <concept_desc>Theory of computation~Streaming models</concept_desc>
       <concept_significance>500</concept_significance>
       </concept>
   <concept>
       <concept_id>10011007.10011006.10011008.10011009.10011016</concept_id>
       <concept_desc>Software and its engineering~Data flow languages</concept_desc>
       <concept_significance>500</concept_significance>
       </concept>
 </ccs2012>
\end{CCSXML}

\ccsdesc[500]{Theory of computation~Streaming models}
\ccsdesc[500]{Software and its engineering~Data flow languages}

\keywords{Probabilistic programming, Reactive programming, Streaming inference, Semantics, Compilation}

\maketitle

\section{Introduction}
\label{sec:introduction}

\newcommand{\bN}{\mathbb{N}}
\newcommand{\xo}{\mathit{xo}}
\newcommand{\Pre}[1]{\zl{pre}\,{#1}}
\newcommand{\Arrow}[2]{{#1}\; \zl{->}\; {#2}}
\newcommand{\Nil}{\mathit{nil}}

Synchronous languages~\cite{synchronous-twelve-years-later} were
introduced thirty years ago for designing and implementing real-time
control software.
They are founded on the synchronous abstraction~\cite{esterel:ifip89}
where a system is modeled ideally, as if communications and
computations were instantaneous and paced on a global clock. This
abstraction is simple but powerful: input, output and local signals
are streams that advance synchronously and a system is a stream
function. It is at the heart of the data-flow languages
\lustre~\cite{lustre:ieee91} and \scade~\cite{lucy:tase17};
it is also the underlying model
behind the discrete-time subset of \simulink.

The data-flow programming style is very well adapted to the direct
expression of the classic control blocks of control engineering (e.g., relays, filters, PID controllers, control logic), and a
discrete time model of the environment, with the feedback between the
two. For example, consider a backward Euler integration method
defined by the following stream equations and its corresponding
implementation in \zelus~\cite{lucy:hscc13}, a
language reminiscent of Lustre:
$$\begin{array}{@{}l}
  x_0 = \xo_0
  \qquad 
  x_n = x_{n-1} + x'_n \times h \quad \forall n \in \bN, n > 0
\end{array}$$

\begin{lstlisting}[aboveskip=0.25em]
    let node integr (xo, x') = x where 
      rec x = xo -> (pre x + x' * h)
\end{lstlisting}

\smallskip

\noindent
The \emph{node} \zl{integr} is a function from input streams \zl{xo} and
\zl{x'} to output stream \zl{x}. The \emph{initialization} operator \zl{->} returns its left-hand side value at the first time step and its right-hand side expression on every time step thereafter. The \emph{unit-delay} operator \zl{pre} returns the value of its expression at the previous time step.
The following table presents a sample \emph{timeline} showing the sequences of values taken by the streams defined in the program (where \zl{h} is set to \zl{0.1}).

\begin{center}
\begin{tabular}{l|rrrrrrrr}
\zlmf|xo|      &\zlmf|0|  &\zlmf|0|&\zlmf|0|&\zlmf|0|&\zlmf|0| &\zlmf|0| &\zlmf|0| & \dots \\
\zlmf|x'|      &\zlmf|1|  &\zlmf|2|&\zlmf|1|&\zlmf|0|&\zlmf|-1|&\zlmf|-1|&\zlmf|1| & \dots \\
\zlmf|x' * h|  &\zlmf|0.1|&\zlmf|0.2|&\zlmf|0.1|&\zlmf|0|  &\zlmf|-0.1|&\zlmf|-0.1|&\zlmf|0.1| & \dots \\
\zlmf|pre x|   & $\bot$  &\zlmf|0|&\zlmf|0.2|&\zlmf|0.3|&\zlmf|0.3| &\zlmf|0.2| &\zlmf|0.1| & \dots \\
\zlmf|x|       &\zlmf|0|  &\zlmf|0.2|&\zlmf|0.3|&\zlmf|0.3|&\zlmf|0.2| &\zlmf|0.1| &\zlmf|0.2| & \dots
\end{tabular}
\end{center}

\medskip

The node \zl{integr} can be used to define other stream functions, e.g., a PID controller,
which can be called in control structures like
hierarchical automata, e.g., to express a system that switches between automatic
and manual control.
Compared to a general
purpose functional language (or an embedded DSL), the expressiveness of
a synchronous language
is purposely constrained to modularily ensure safety properties that
are critical for the targeted applications: determinacy, deadlock
freedom (reactivity), the generation of statically scheduled code that
runs in bounded time and~space.

However, to date, these languages have had limited
support for modeling uncertainty (e.g., a noisy sensor or channel, a
variable delay), to simulate the interaction of a software controller
and a partially unknown environment, or to infer parameters from
noisy observations. 
Moreover, uncertainty is a first-order design concern for
a controller that operates under the assumption of a probabilistic model
of their environment (e.g., object tracking).
Using a probabilistic
environment model and data gathered from observing the environment,
a controller can infer
a distribution of likely environments given the observations.
Existing approaches consist in hand-coding stochastic controllers that have a known solution (e.g., Kalman filters) which can be tedious and
error-prone, or to simply perform
off-line statistical testing on the generated code of a controller.
Alternatively, in recent years, {\em probabilistic
  programming} has developed as an approach to endow general purpose
languages with the ability to automate inference.

\paragraph{Probabilistic Programming.}

Probabilistic programming languages are used to describe probabilistic models and automatically infer distributions of latent~(i.e., unobserved) parameters from observations.

A popular approach~\cite{goodman_stuhlmuller_2014,TranHSB0B17,BinghamCJOPKSSH19,MurrayS18,tolpin_et_al_2016,wu2016SwiftCI} consists in extending a general-purpose programming language with three constructs:
(1)~\zl{x = sample(d)} introduces a random variable \zl{x} of distribution \zl{d},
(2)~\zl{observe(d, y)} conditions on the fact that a sample from distribution \zl{d} produced the observation \zl{y}, and
(3)~\zl{infer m obs} computes the distribution of the output values of a program or \emph{model}~\zl{m} w.r.t. the observation of the input data~\zl{obs}.

Probabilistic programming languages offer a variety of automatic inference techniques ranging from exact inference~\cite{psi} to approximate inference~\cite{hakaru} and include hybrid approaches that combine exact and approximate techniques when part of the program is analytically tractable~\cite{murray18delayed_sampling}.
However, a standing challenge for these programming languages is that none of them meet the design goals of synchronous reactive languages by being immediately amenable to techniques to ensure that for example a program with an indefinite execution time runs in bounded memory.

\paragraph{Inference in the Loop.}
In this paper we extend \zelus\footnote{Language distribution and manual available at \url{http://zelus.di.ens.fr}.} to provide a synchronous probabilistic programming language, \ProbZelus. \ProbZelus enables one to combine deterministic reactive data-flow programs, such as \zl{integr} (above), with probabilistic programming constructs to produce reactive probabilistic programs.

Compared to other probabilistic languages (e.g. WebPPL, Church, Stan) where inference is executed on terminating pure functions, our probabilistic models are stateful stream processors.
Inference on probabilistic models runs in parallel with deterministic processes that interact with the environment.
The distributions computed by \zl{infer} at each step can thus be used by deterministic processes to compute new inputs for the next inference step.
We term this capability {\em inference-in-the-loop}.

\paragraph{Streaming Inference.}
\ProbZelus provides multiple inference algorithms, most notably the \emph{delayed sampling} inference algorithm~\cite{murray18delayed_sampling}.
This hybrid strategy combines the approximate inference technique of particle filtering~\cite{particlefilter} with exact inference when it is possible to symbolically determine the exact distribution for some or all of the latent variables of the program~\cite{rbpf}.

However, the memory consumption of delayed sampling strictly increases with the number of random variables which is not practical for reactive applications that operate on infinite streams.
We propose a novel \emph{streaming} implementation of delayed sampling that can operate over infinite streams in constant memory for a large class of models.
\ProbZelus therefore provides an expressive language for reactive probabilistic programming with appropriate memory consumption properties.

\paragraph{Contributions.} We present the following contributions:

\begin{description}[wide, labelwidth=!, labelindent=0pt, topsep=0.25em, itemsep=0.25em]
\item[Design, Semantics, Compilation.]
We present \ProbZelus, the first synchronous probabilistic programming language, combining language constructs for streams (reactivity) with those for probabilistic programming thus enabling \emph{inference-in-the-loop}.
We give a measure-based co-iterative semantics for \ProbZelus that forms the basis of a compiler and demonstrate a semantics-preserving compilation strategy to a first-order functional language~\muF.

\item[Streaming Inference.]
We define the semantics of multiple inference algorithms on \muF including particle filtering and delayed sampling.
We then present a novel \emph{streaming delayed sampling} implementation which enables partial exact inference over infinite streams in bounded memory for a large class of models.

\item[Evaluation.]
We evaluate the performance of \ProbZelus on a set of benchmarks that illustrate multiple aspects of the language.
We demonstrate that streaming delayed sampling drastically reduces the number of particles required to achieve better accuracy compared to a particle filter.
\end{description}

The result is \ProbZelus, a synchronous probabilistic language that enables us to write, in the very same source, a deterministic model for the control software and a probabilistic model with complex interactions between the two.
On one hand, a deterministic model of a controller can rely on predictions computed by a probabilistic model.
On the other hand, a probabilistic model can be programmed in an expressive reactive language.
\ProbZelus is open source (\url{https://github.com/IBM/probzelus}).
\ifextended
This paper is a version with appendices of the paper published at PLDI 2020~\cite{rppl-short}.
\else
An extended version with appendices of the paper is also available~\cite{rppl-extended}.
\fi

 \section{Example}\label{sec:example}

In this section, we demonstrate how \ProbZelus  provides probabilistic modeling, inference-in-the-loop, and bounded-memory inference for a robot navigation system.
The results of the inference are continuously used by a controller to correct the robot trajectory.

\subsection{Inference in the Loop.}

\begin{figure}
\begin{lstlisting}[basicstyle=\footnotesize\ttfamily,aboveskip=0em]
let proba kalman (xo, u, acc, gps) = x where
  rec mu = xo -> (a *@ pre x) +@ (b *@ u)
  and x = sample (mv_gaussian (mu, noise))
  and () = observe (gaussian (vec_get (x, 2), 1.0), acc)
  and () = present gps(pos) ->
             observe (gaussian (vec_get (x, 0), 0.01), pos)
           else ()

let node robot (xo, uo, acc, gps) = u where
  rec x_dist = infer 1000 kalman (xo, u, acc, gps)
  and u = uo -> lqr a b (mean (pre x_dist))
\end{lstlisting}
\caption{Robot controller with inference-in-the-loop. \zlm{+@} and \zlm{*@} are matrix operations, \zlm{vec\_get x i} is the $i$th projection.}
\label{fig:lqr_src}
\end{figure}

We consider a robot equipped with an accelerometer and a GPS.
We assume that the motion of the robot can be described as: ${x_{t+1} = Ax_t + Bu_t}$ where~$x_t$ denotes the state of the robot~(position, velocity, and acceleration) at a given time step~$t$, and~$u_t$ denotes the command sent to the robot.
$A$~and~$B$ are constant matrices.
In addition, the robot receives at each step noisy \emph{observations} from the accelerometer~$a_t$, and sporadically an estimation of the position from a GPS~$p_t$.

\Cref{fig:lqr_src} presents a controller, \zl{robot}, that given an initial state~\zl{xo}, an initial command~\zl{uo}, and inputs from the accelerometer~\zl{acc} and the GPS~\zl{gps} computes a stream~\zl{u} of commands that drives the robot to a given target.
The body of \zl{robot} is the parallel composition of (1)~the inference of a probabilistic process \zl{kalman} that estimates \zl{x_dist} the stream of distributions over the robot's state, and (2)~a deterministic process that computes the stream \zl{u} of commands.
It is written as two mutually recursive equations that define~\zl{x_dist} from~\zl{u} and~\zl{u} from the previous value of~\zl{x_dist}.

The command \zl{u} is set to the initial command \zl{uo} at the first time step, and is then computed by a \emph{Linear-Quadratic Regulator} (LQR)~\cite{sontag13control} --- a stable and optimal controller for such dynamic systems --- given the estimation of the state at the previous step. 
Because LQR controllers depend only on mean posterior state, the example in \Cref{fig:lqr_src} uses the \zl{mean} function to compute the mean of \zl{x_dist} before invoking the LQR controller.

\paragraph{Inference.}
The stream~\zl{x_dist} of distributions of state is inferred from the model defined by the probabilistic node \zl{kalman} given the initial state~\zl{xo}, the command~\zl{u}, and the observations~\zl{acc} and~\zl{gps}.
The keyword \zl{proba} indicates a probabilistic model.

In this example, the model is a \emph{Kalman Filter} illustrated in \Cref{fig:hmm}.
A Kalman filter is a time-dependent probabilistic model used to describe inference problems such as tracking, in which a tracker estimates the true position of an object given noisy, sensed observations.
The robot's state~$x_t$ is a \emph{latent} random variable in that the tracker is not able to directly observe it.
Each arrow connecting two random variables denotes a dependence of the variable at the head of the arrow on the variable at the tail.
In this case, the observations at each time step depends on the current state, and the robot's state at a given time step depends only on its states at the previous time step.

\tikzstyle{latent}=[circle, draw=black, minimum width=1cm]
\tikzstyle{observed}=[circle, draw=black, fill=gray!70, minimum width=1cm]
\begin{figure}[t]
  \centering
  \scalebox{0.7}{
    \begin{tikzpicture}[node distance=0.5cm and 0.8cm, every node/.style={font = \normalsize}]
    \node (past) {{\normalsize\dots}};

    \node[right=of past, latent] (pos1) {$x_{t-1}$};
    \node[observed, below=of pos1] (obs1) {$a_{t-1}$};
    \draw[-latex] (pos1) -- (obs1);
    \draw[-latex] ([xshift=0.2cm]past.east) -- (pos1);

    \node[latent, right=of pos1] (pos2) {$x_{t}$};
    \node[observed, below=of pos2] (obs2) {$a_{t}$};
    \draw[-latex] (pos2) -- (obs2);
    \draw[-latex] (pos1) -- (pos2);

    \node[latent, right=of pos2, xshift=1.25em] (pos3) {$x_{t+1}$};
    \node[observed, below=of pos3, xshift=-1.75em] (obs3) {$a_{t+1}$};
    \node[observed, below=of pos3, xshift=1.75em] (obs4) {$p_{t+1}$};
    \draw[-latex] (pos3) -- (obs3);
    \draw[-latex] (pos3) -- (obs4);
    \draw[-latex] (pos2) -- (pos3);

    \node[latent, right=of pos3, xshift=1.25em] (pos4) {$x_{t+2}$};
    \node[observed, below=of pos4] (obs4) {$a_{t+2}$};
    \draw[-latex] (pos4) -- (obs4);
    \draw[-latex] (pos3) -- (pos4);

    \node[right=of pos4] (future) {{\normalsize\dots}};
    \draw[-latex] (pos4) -- ([xshift=-0.2cm]future.west);
  \end{tikzpicture}}
  \caption{Kalman filter for the robot example. Variables are either latent (white, e.g., state~$x$) or observed (gray, e.g., acceleration~$a$). The position~$p$ is only sporadically observed.}
  \label{fig:hmm}
\end{figure}

\begin{figure*}
\centering
{\small\sf
\begin{minipage}{0.4\textwidth}
\centering
\begingroup
  \makeatletter
  \providecommand\color[2][]{\GenericError{(gnuplot) \space\space\space\@spaces}{Package color not loaded in conjunction with
      terminal option `colourtext'}{See the gnuplot documentation for explanation.}{Either use 'blacktext' in gnuplot or load the package
      color.sty in LaTeX.}\renewcommand\color[2][]{}}\providecommand\includegraphics[2][]{\GenericError{(gnuplot) \space\space\space\@spaces}{Package graphicx or graphics not loaded}{See the gnuplot documentation for explanation.}{The gnuplot epslatex terminal needs graphicx.sty or graphics.sty.}\renewcommand\includegraphics[2][]{}}\providecommand\rotatebox[2]{#2}\@ifundefined{ifGPcolor}{\newif\ifGPcolor
    \GPcolorfalse
  }{}\@ifundefined{ifGPblacktext}{\newif\ifGPblacktext
    \GPblacktexttrue
  }{}\let\gplgaddtomacro\g@addto@macro
\gdef\gplbacktext{}\gdef\gplfronttext{}\makeatother
  \ifGPblacktext
\def\colorrgb#1{}\def\colorgray#1{}\else
\ifGPcolor
      \def\colorrgb#1{\color[rgb]{#1}}\def\colorgray#1{\color[gray]{#1}}\expandafter\def\csname LTw\endcsname{\color{white}}\expandafter\def\csname LTb\endcsname{\color{black}}\expandafter\def\csname LTa\endcsname{\color{black}}\expandafter\def\csname LT0\endcsname{\color[rgb]{1,0,0}}\expandafter\def\csname LT1\endcsname{\color[rgb]{0,1,0}}\expandafter\def\csname LT2\endcsname{\color[rgb]{0,0,1}}\expandafter\def\csname LT3\endcsname{\color[rgb]{1,0,1}}\expandafter\def\csname LT4\endcsname{\color[rgb]{0,1,1}}\expandafter\def\csname LT5\endcsname{\color[rgb]{1,1,0}}\expandafter\def\csname LT6\endcsname{\color[rgb]{0,0,0}}\expandafter\def\csname LT7\endcsname{\color[rgb]{1,0.3,0}}\expandafter\def\csname LT8\endcsname{\color[rgb]{0.5,0.5,0.5}}\else
\def\colorrgb#1{\color{black}}\def\colorgray#1{\color[gray]{#1}}\expandafter\def\csname LTw\endcsname{\color{white}}\expandafter\def\csname LTb\endcsname{\color{black}}\expandafter\def\csname LTa\endcsname{\color{black}}\expandafter\def\csname LT0\endcsname{\color{black}}\expandafter\def\csname LT1\endcsname{\color{black}}\expandafter\def\csname LT2\endcsname{\color{black}}\expandafter\def\csname LT3\endcsname{\color{black}}\expandafter\def\csname LT4\endcsname{\color{black}}\expandafter\def\csname LT5\endcsname{\color{black}}\expandafter\def\csname LT6\endcsname{\color{black}}\expandafter\def\csname LT7\endcsname{\color{black}}\expandafter\def\csname LT8\endcsname{\color{black}}\fi
  \fi
    \setlength{\unitlength}{0.0500bp}\ifx\gptboxheight\undefined \newlength{\gptboxheight}\newlength{\gptboxwidth}\newsavebox{\gptboxtext}\fi \setlength{\fboxrule}{0.5pt}\setlength{\fboxsep}{1pt}\begin{picture}(3400.00,2380.00)\gplgaddtomacro\gplbacktext{\csname LTb\endcsname \put(-132,594){\makebox(0,0)[r]{\strut{}$10^{1}$}}\put(-132,1042){\makebox(0,0)[r]{\strut{}$10^{2}$}}\put(-132,1491){\makebox(0,0)[r]{\strut{}$10^{3}$}}\put(-132,1939){\makebox(0,0)[r]{\strut{}$10^{4}$}}\put(78,374){\makebox(0,0){\strut{}$1$}}\put(1185,374){\makebox(0,0){\strut{}$10$}}\put(2292,374){\makebox(0,0){\strut{}$100$}}\put(3399,374){\makebox(0,0){\strut{}$1000$}}}\gplgaddtomacro\gplfronttext{\csname LTb\endcsname \put(-605,1266){\rotatebox{-270}{\makebox(0,0){\strut{}LQR loss}}}\put(1699,154){\makebox(0,0){\strut{}Particles}}\put(1699,2049){\makebox(0,0){\strut{}Robot Accuracy}}\csname LTb\endcsname \put(3267,1766){\makebox(0,0)[r]{\strut{}PF}}\csname LTb\endcsname \put(3267,1546){\makebox(0,0)[r]{\strut{}SDS}}}\gplbacktext
    \put(0,0){\includegraphics{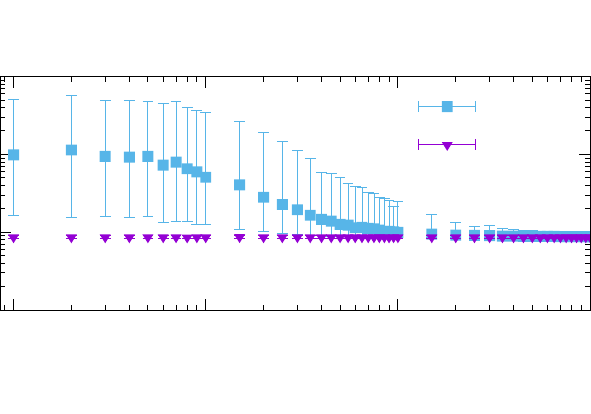}}\gplfronttext
  \end{picture}\endgroup
 \end{minipage}
\hspace{0.1\textwidth}
\begin{minipage}{0.4\textwidth}
\centering
\begingroup
  \makeatletter
  \providecommand\color[2][]{\GenericError{(gnuplot) \space\space\space\@spaces}{Package color not loaded in conjunction with
      terminal option `colourtext'}{See the gnuplot documentation for explanation.}{Either use 'blacktext' in gnuplot or load the package
      color.sty in LaTeX.}\renewcommand\color[2][]{}}\providecommand\includegraphics[2][]{\GenericError{(gnuplot) \space\space\space\@spaces}{Package graphicx or graphics not loaded}{See the gnuplot documentation for explanation.}{The gnuplot epslatex terminal needs graphicx.sty or graphics.sty.}\renewcommand\includegraphics[2][]{}}\providecommand\rotatebox[2]{#2}\@ifundefined{ifGPcolor}{\newif\ifGPcolor
    \GPcolorfalse
  }{}\@ifundefined{ifGPblacktext}{\newif\ifGPblacktext
    \GPblacktexttrue
  }{}\let\gplgaddtomacro\g@addto@macro
\gdef\gplbacktext{}\gdef\gplfronttext{}\makeatother
  \ifGPblacktext
\def\colorrgb#1{}\def\colorgray#1{}\else
\ifGPcolor
      \def\colorrgb#1{\color[rgb]{#1}}\def\colorgray#1{\color[gray]{#1}}\expandafter\def\csname LTw\endcsname{\color{white}}\expandafter\def\csname LTb\endcsname{\color{black}}\expandafter\def\csname LTa\endcsname{\color{black}}\expandafter\def\csname LT0\endcsname{\color[rgb]{1,0,0}}\expandafter\def\csname LT1\endcsname{\color[rgb]{0,1,0}}\expandafter\def\csname LT2\endcsname{\color[rgb]{0,0,1}}\expandafter\def\csname LT3\endcsname{\color[rgb]{1,0,1}}\expandafter\def\csname LT4\endcsname{\color[rgb]{0,1,1}}\expandafter\def\csname LT5\endcsname{\color[rgb]{1,1,0}}\expandafter\def\csname LT6\endcsname{\color[rgb]{0,0,0}}\expandafter\def\csname LT7\endcsname{\color[rgb]{1,0.3,0}}\expandafter\def\csname LT8\endcsname{\color[rgb]{0.5,0.5,0.5}}\else
\def\colorrgb#1{\color{black}}\def\colorgray#1{\color[gray]{#1}}\expandafter\def\csname LTw\endcsname{\color{white}}\expandafter\def\csname LTb\endcsname{\color{black}}\expandafter\def\csname LTa\endcsname{\color{black}}\expandafter\def\csname LT0\endcsname{\color{black}}\expandafter\def\csname LT1\endcsname{\color{black}}\expandafter\def\csname LT2\endcsname{\color{black}}\expandafter\def\csname LT3\endcsname{\color{black}}\expandafter\def\csname LT4\endcsname{\color{black}}\expandafter\def\csname LT5\endcsname{\color{black}}\expandafter\def\csname LT6\endcsname{\color{black}}\expandafter\def\csname LT7\endcsname{\color{black}}\expandafter\def\csname LT8\endcsname{\color{black}}\fi
  \fi
    \setlength{\unitlength}{0.0500bp}\ifx\gptboxheight\undefined \newlength{\gptboxheight}\newlength{\gptboxwidth}\newsavebox{\gptboxtext}\fi \setlength{\fboxrule}{0.5pt}\setlength{\fboxsep}{1pt}\begin{picture}(3400.00,2380.00)\gplgaddtomacro\gplbacktext{\csname LTb\endcsname \put(-132,594){\makebox(0,0)[r]{\strut{}$10^{1}$}}\put(-132,930){\makebox(0,0)[r]{\strut{}$10^{2}$}}\put(-132,1267){\makebox(0,0)[r]{\strut{}$10^{3}$}}\put(-132,1603){\makebox(0,0)[r]{\strut{}$10^{4}$}}\put(-132,1939){\makebox(0,0)[r]{\strut{}$10^{5}$}}\put(78,374){\makebox(0,0){\strut{}$1$}}\put(1185,374){\makebox(0,0){\strut{}$10$}}\put(2292,374){\makebox(0,0){\strut{}$100$}}\put(3399,374){\makebox(0,0){\strut{}$1000$}}}\gplgaddtomacro\gplfronttext{\csname LTb\endcsname \put(-605,1266){\rotatebox{-270}{\makebox(0,0){\strut{}Speed (ms)}}}\put(1699,154){\makebox(0,0){\strut{}Particles}}\put(1699,2049){\makebox(0,0){\strut{}Robot Latency}}\csname LTb\endcsname \put(1119,1766){\makebox(0,0)[r]{\strut{}PF}}\csname LTb\endcsname \put(1119,1546){\makebox(0,0)[r]{\strut{}SDS}}}\gplbacktext
    \put(0,0){\includegraphics{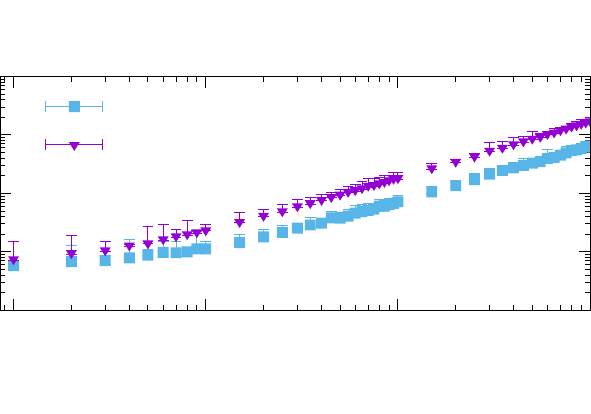}}\gplfronttext
  \end{picture}\endgroup
 \end{minipage}}
\vspace{-1em}
\caption{Particle filter (PF) and streaming delayed sampling (\gcdsacro) performances for the robot example of \Cref{fig:lqr_src}. Accuracy is measured using the loss function of the LQR. Speed corresponds to the execution time of 500 steps.}
\label{fig:kalman-perf}
\end{figure*}

\paragraph{Sampling.} Inside the \zl{kalman} node, the \zl{sample} operator samples a value from a probability distribution.
In this case, the expression samples the current state~\zl{x} from a multivariate Gaussian with mean obtained by applying the motion model to the previous state and the command.
This code models the trajectory of a robot where at each time step, the state is Gaussian-distributed around an estimation computed from the motion model.

\paragraph{Observations.}
The expression \zl{observe} conditions the execution on observed data.
Its first parameter denotes a distribution that models the observation and its second parameter denotes the observed value itself.
In this case, at each step, the first \zl{observe} statement models a Gaussian-distributed observation of the current acceleration \zl{vec_get x 2} given by~\zl{acc}.
The input \zl{gps} is a \emph{signal} that is only emitted when the GPS computes a new position.
When a value \zl{pos} is emitted on \zl{gps}, the \zl{present} construct executes its left branch, further conditioning the model by adding a Gaussian-distributed observation of the current position \zl{vec_get x 0} given by~\zl{pos}.

\subsection{Streaming Inference}
\label{sec:example-inference}

A classic operational interpretation of a probabilistic model is an \emph{importance sampler} that generates random samples from the model together with an importance weight measuring the quality of the sample.
In this model, each execution of a \zl{sample} operator samples a value from the operator's corresponding distribution.
Each execution of an \zl{observe} evaluates the likelihood of the provided observation and multiplies the current importance weight by this value.
Then, each execution step of \zl{infer} yields a distribution represented as a set of pairs (output, weight) or \emph{particles}. The particles can be re-sampled at each step to build a \emph{particle filter}~\cite{doucet-smc-2006}.

The integer parameter to \zl{infer} determines how many particles to use: the more particles the user specifies, the more accurate the estimate of the distribution becomes.
The PF points in \Cref{fig:kalman-perf} present this improvement in accuracy as a function of increasing the number of particles for the robot example.
However, as the latency results presents, the more particles the user specifies, the more computation is required for each step because each particle requires a full, independent execution of each time step of the model.

\paragraph{Streaming Delayed Sampling.}
Delayed sampling~\cite{murray18delayed_sampling} can reduce the number of particles required to achieve a given desired quality of inference.
Specifically, delayed sampling exploits the opportunity to symbolically
reason about the relationships between random variables to compute closed-form distributions whenever possible.
To capture relationships between random variables, delayed sampling maintains a graph: a \emph{Bayesian network} that can be used to compute closed-form distributions involving subsets of random variables.
For instance, this inference scheme is able to compute the exact posterior distribution for our robot example.
The \gcdsacro dots in \Cref{fig:kalman-perf} show that the accuracy is independent of the number of particles since each particle computes the exact solution.

\newcommandx{\dep}[4][1={-stealth}, 4={}]{\draw[#1] (#2) to node[pos=0.4] {#4} (#3)}

\begin{figure*}
\centering
\begin{small}
\begin{tabular}{cccc}
\begin{subfigure}[b]{.11\textwidth}
\centering
\begin{tikzpicture}[node distance=0.4cm and 0.4cm,every node/.style={font=\scriptsize, align=center}]
    \node[marginalized, label={\texttt{x}}] (x1) {};
    \node[label={[label distance=0.1ex]0:\texttt{acc}}, realized, below=of x1] (o1) {};
    \draw[-stealth] (x1) -- (o1);
\end{tikzpicture}
\caption{step $1$}
\end{subfigure}
& \begin{subfigure}[b]{.22\textwidth}
\centering
\begin{tikzpicture}[node distance=0.4cm and 0.4cm,every node/.style={font=\scriptsize, align=center}]
    \node[marginalized, label={\texttt{pre x}}] (x1) {};
    \node[realized, below=of x1] (o1) {};
    \node[marginalized, label={\texttt{\phantom{p}x}}, right=of x1] (x2) {};
    \draw[-stealth] (x1) -- (x2);
    \link[->, densely dotted]{x2}{x1};
    \node[realized, label={[label distance=0.1ex]0:\texttt{acc}}, below=of x2] (o2) {};
    \draw[-stealth] (x1) -- (o1);
    \draw[-stealth] (x2) -- (o2);
\end{tikzpicture}
\caption{step 2}
\label{fig:ds_kalman_addx}
\end{subfigure}
& \begin{subfigure}[b]{.24\textwidth}
\centering
\begin{tikzpicture}[node distance=0.4cm and 0.4cm,every node/.style={font=\scriptsize, align=center}]
    \node[marginalized] (x1) {};
    \node[realized, below=of x1] (o1) {};
    \node[marginalized, label={\texttt{pre x}}, right=of x1] (x2) {};
    \draw[-stealth] (x1) -- (x2);
    \link[->, densely dotted]{x2}{x1};
    \node[realized, below=of x2] (o2) {};
    \node[marginalized, label={\texttt{\phantom{p}x}}, right=of x2] (x3) {};
    \draw[-stealth] (x2) -- (x3);
    \link[->, densely dotted]{x3}{x2};
    \node[realized, label={[label distance=0.1ex]0:\texttt{acc}}, below=of x3] (o3) {};
    \draw[-stealth] (x1) -- (o1);
    \draw[-stealth] (x2) -- (o2);
    \draw[-stealth] (x3) -- (o3);
\end{tikzpicture}
\caption{step 3}
\label{fig:ds_kalman_addo}
\end{subfigure}
& \begin{subfigure}[b]{.27\textwidth}
\centering
\begin{tikzpicture}[node distance=0.4cm and 0.4cm,every node/.style={font=\scriptsize, align=center}]
    \node[marginalized] (x1) {};
    \node[realized, below=of x1] (o1) {};
    \node[marginalized, right=of x1] (x2) {};
    \draw[-stealth] (x1) -- (x2);
    \link[->, densely dotted]{x2}{x1};
    \node[realized, below=of x2] (o2) {};
    \node[marginalized, label={\texttt{pre x}}, right=of x2] (x3) {};
    \draw[-stealth] (x2) -- (x3);
    \link[->, densely dotted]{x3}{x2};
    \node[realized, below=of x3] (o3) {};
    \node[marginalized, label={\texttt{\phantom{p}x}}, right=of x3] (x4) {};
    \draw[-stealth] (x3) -- (x4);
    \link[->, densely dotted]{x4}{x3};
    \node[realized, label={[label distance=0.1ex]0:\texttt{acc}}, below=of x4] (o4) {};
    \draw[-stealth] (x1) -- (o1);
    \draw[-stealth] (x2) -- (o2);
    \draw[-stealth] (x3) -- (o3);
    \draw[-stealth] (x4) -- (o4);
\end{tikzpicture}
\caption{step 4}
\label{fig:ds_kalman_margx}
\end{subfigure}
\end{tabular}
\end{small}
\caption{Evolution of the delayed sampling graph for the model of \Cref{fig:lqr_src}. Each node denotes either a value (dark gray) or a distribution (light gray). Plain arrows represent dependencies in the underlying Bayesian network. The dotted arrow represents the pointers in the original data-structure implementing the graph. Labels indicate the program variables.}
\label{fig:ds_kalman_naive}
\end{figure*}

\Cref{fig:ds_kalman_naive} illustrates the evolution of the delayed sampling graph as it proceeds through the first four time steps of the robot example (for simplicity we assume that there is no GPS activation in these four steps).
A notable challenge with the traditional delayed sampling algorithm is that the graph grows linearly in the number of samples.
This property is not tractable in our reactive context because we would like to deploy our programs under the model that they run indefinitely, thus requiring that they execute with bounded resources.
To address this problem, we propose a novel \emph{streaming delayed sampling}~(SDS) implementation of the delayed sampling algorithm.
Specifically, in \Cref{fig:ds_kalman_naive} the node denoting the marginal posterior for \zl{x} at step~1 can be eliminated from the graph at step~3 because the distributions for \zl{pre x} and \zl{x} have fully incorporated its effect on
their values and, moreover, the program no longer maintains a reference to the node.

\begin{figure}[t]
\centering
{\small\sf
\begingroup
  \makeatletter
  \providecommand\color[2][]{\GenericError{(gnuplot) \space\space\space\@spaces}{Package color not loaded in conjunction with
      terminal option `colourtext'}{See the gnuplot documentation for explanation.}{Either use 'blacktext' in gnuplot or load the package
      color.sty in LaTeX.}\renewcommand\color[2][]{}}\providecommand\includegraphics[2][]{\GenericError{(gnuplot) \space\space\space\@spaces}{Package graphicx or graphics not loaded}{See the gnuplot documentation for explanation.}{The gnuplot epslatex terminal needs graphicx.sty or graphics.sty.}\renewcommand\includegraphics[2][]{}}\providecommand\rotatebox[2]{#2}\@ifundefined{ifGPcolor}{\newif\ifGPcolor
    \GPcolorfalse
  }{}\@ifundefined{ifGPblacktext}{\newif\ifGPblacktext
    \GPblacktexttrue
  }{}\let\gplgaddtomacro\g@addto@macro
\gdef\gplbacktext{}\gdef\gplfronttext{}\makeatother
  \ifGPblacktext
\def\colorrgb#1{}\def\colorgray#1{}\else
\ifGPcolor
      \def\colorrgb#1{\color[rgb]{#1}}\def\colorgray#1{\color[gray]{#1}}\expandafter\def\csname LTw\endcsname{\color{white}}\expandafter\def\csname LTb\endcsname{\color{black}}\expandafter\def\csname LTa\endcsname{\color{black}}\expandafter\def\csname LT0\endcsname{\color[rgb]{1,0,0}}\expandafter\def\csname LT1\endcsname{\color[rgb]{0,1,0}}\expandafter\def\csname LT2\endcsname{\color[rgb]{0,0,1}}\expandafter\def\csname LT3\endcsname{\color[rgb]{1,0,1}}\expandafter\def\csname LT4\endcsname{\color[rgb]{0,1,1}}\expandafter\def\csname LT5\endcsname{\color[rgb]{1,1,0}}\expandafter\def\csname LT6\endcsname{\color[rgb]{0,0,0}}\expandafter\def\csname LT7\endcsname{\color[rgb]{1,0.3,0}}\expandafter\def\csname LT8\endcsname{\color[rgb]{0.5,0.5,0.5}}\else
\def\colorrgb#1{\color{black}}\def\colorgray#1{\color[gray]{#1}}\expandafter\def\csname LTw\endcsname{\color{white}}\expandafter\def\csname LTb\endcsname{\color{black}}\expandafter\def\csname LTa\endcsname{\color{black}}\expandafter\def\csname LT0\endcsname{\color{black}}\expandafter\def\csname LT1\endcsname{\color{black}}\expandafter\def\csname LT2\endcsname{\color{black}}\expandafter\def\csname LT3\endcsname{\color{black}}\expandafter\def\csname LT4\endcsname{\color{black}}\expandafter\def\csname LT5\endcsname{\color{black}}\expandafter\def\csname LT6\endcsname{\color{black}}\expandafter\def\csname LT7\endcsname{\color{black}}\expandafter\def\csname LT8\endcsname{\color{black}}\fi
  \fi
    \setlength{\unitlength}{0.0500bp}\ifx\gptboxheight\undefined \newlength{\gptboxheight}\newlength{\gptboxwidth}\newsavebox{\gptboxtext}\fi \setlength{\fboxrule}{0.5pt}\setlength{\fboxsep}{1pt}\begin{picture}(3400.00,2380.00)\gplgaddtomacro\gplbacktext{\csname LTb\endcsname \put(-132,594){\makebox(0,0)[r]{\strut{}$10^{0}$}}\put(-132,1267){\makebox(0,0)[r]{\strut{}$10^{3}$}}\put(-132,1939){\makebox(0,0)[r]{\strut{}$10^{3}$}}\put(0,374){\makebox(0,0){\strut{}$0$}}\put(340,374){\makebox(0,0){\strut{}$50$}}\put(680,374){\makebox(0,0){\strut{}$100$}}\put(1020,374){\makebox(0,0){\strut{}$150$}}\put(1360,374){\makebox(0,0){\strut{}$200$}}\put(1700,374){\makebox(0,0){\strut{}$250$}}\put(2039,374){\makebox(0,0){\strut{}$300$}}\put(2379,374){\makebox(0,0){\strut{}$350$}}\put(2719,374){\makebox(0,0){\strut{}$400$}}\put(3059,374){\makebox(0,0){\strut{}$450$}}\put(3399,374){\makebox(0,0){\strut{}$500$}}}\gplgaddtomacro\gplfronttext{\csname LTb\endcsname \put(-605,1266){\rotatebox{-270}{\makebox(0,0){\strut{}thousands of words}}}\put(1699,154){\makebox(0,0){\strut{}Step}}\put(1699,2049){\makebox(0,0){\strut{}Robot Memory}}\csname LTb\endcsname \put(921,1766){\makebox(0,0)[r]{\strut{}SDS}}\csname LTb\endcsname \put(921,1546){\makebox(0,0)[r]{\strut{}DS}}}\gplbacktext
    \put(0,0){\includegraphics{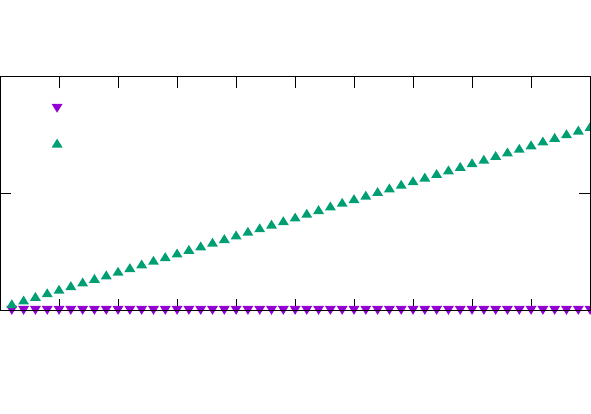}}\gplfronttext
  \end{picture}\endgroup
 }
\caption{Delayed sampling (DS) and streaming delayed sampling (SDS) memory consumption in thousands of live words in the heap per steps for the robot example.}
\label{fig:kalman-memory}
\end{figure}

While the standard delayed sampling algorithm will keep this node alive through the edge pointers
it maintains, SDS builds a {\em pointer-minimal} graph representation with a minimal number of edges that 1) ensure that the graph has sufficient connectivity to support operations in the traditional delayed sampling algorithm and 2) only maintain the reachability of nodes that can effect the distribution of future nodes in the graph.
The result is  that the memory consumption of SDS is constant across the number of steps while the memory consumption of the original delayed sampling implementation~DS increases linearly in the number of steps~(\Cref{fig:kalman-memory}).

 \newcommand{\muZ}{\ProbZelus\xspace}
\newcommand{\compile}{\mathcal{C}}
\newcommand{\defs}{\mathcal{D}}
\newcommand{\inits}{\mathcal{I}}
\newcommand{\alloc}{\mathcal{A}}
\newcommand{\inferds}{\mathcal{I}_\mathrm{ds}}
\newcommand{\sem}[1]{\llbracket #1 \rrbracket}
\newcommand{\psem}[1]{\{\mkern-3.8mu[ #1 ]\mkern-3.8mu\}}
\newcommand{\indicator}{\mathbb{1}}
\newcommand{\letin}[1]{\ensuremath{\mathit{let} \; #1 \; \textit{in} \;}}
\newcommand{\fun}[1]{\ensuremath{\lambda #1. \;}}

\section{Language: Syntax, Typing, Semantics}
\label{sec:language}

\ProbZelus is a reactive probabilistic language with inference-in-the-loop which enables interaction between probabilistic models and deterministic processes.
This capability introduces two design requirements.
First, a probabilistic model must be able to receive inputs from an evolving environment.
Second, instead of awaiting the final result of the inference,
deterministic processes running in parallel need access to
intermediate results. The resulting inference-in-the-loop enables
feedback loops between inferred distributions from probabilistic models and deterministic processes,
which our design controls by enforcing a separation between the semantics of probabilistic and deterministic
execution.

In this section, we formalize the syntax of \ProbZelus, introduce a type system that imposes a clear separation between deterministic and probabilistic expressions, and define the semantics of the language in a co-iteration framework where the semantics of probabilistic processes is adapted from Staton's measure-theoretic semantics for probabilistic programs~\cite{staton17}.
The co-iterative semantics forms the basis of a compiler that is described in \Cref{sec:compilation}.

\subsection{Syntax}

We focus on the following kernel of \ProbZelus. The missing constructs (e.g., \zl{pre} and \zl{->}) can be compiled into this kernel via a source-to-source transformation.

\smallskip
  
\begin{lstlisting}
$d ::=$ let node $f$ $x$ = $e$  $|$  let proba $f$ $x$ = $e$  $|$  $d$ $d$
$e ::=$ $c$  $|$  $x$  $|$  ($e$,$e$)  $|$  $\mathit{op}$($e$)  $|$  $f$($e$)  $|$ last $x$  
   $|$ $e$ where rec $E$
   $|$ present $e$ -> $e$ else $e$  $|$  reset $e$ every $e$
   $|$ sample($e$)  $|$  observe($e$, $e$)  $|$  infer($e$)
$E ::=$ $x$ = $e$  $|$  init $x$ = $c$  $|$  $E$ and $E$
\end{lstlisting}

\medskip
    
A program is a sequence of \emph{declarations}~$d$ of stream functions (\zl{node}) and probabilistic stream functions (\zl{proba}).
An \emph{expression}~$e$ is either a constant~($c$), a variable~($x$), a pair, an external operator application~($\textit{op}$), a function application~(\zl{$f$($e$)}), a delay (\zl{last $x$}) that returns a value ($x$) from the previous step, or a set of locally recursive equations~(\zl{$e$ where rec $E$}).
A set of \emph{equations}~$E$ is either an equation \zl{$x$ = $e$} that define~\zl{x} with the stream~$e$, the initialization of a variable with a constant \zl{init $x$ = c}, or parallel composition of sets of equations.

Operators~($\textit{op}$) include boolean and arithmetic operators.
In addition, \ProbZelus offers a library of dedicated operators to analyze distributions, such as \zl{mean} and \zl{variance}, that can be used in any context (probabilistic or deterministic), e.g., on the result of the inference.

The control structure \zl{present $e$ -> $e_1$ else $e_2$} is an \emph{activation condition} that executes the expression~$e_1$ only when the value of~$e$ is \zl{true} and executes~$e_2$ otherwise.
It differs from \zl{if $e$ then $e_1$ else $e_2$}, where both $e_1$ and $e_2$ are computed at each step~(making their internal states evolve) and the returned value is chosen based on the value of~$e$.\footnote{The \zlm+if+ construct can thus be considered as an external operator.}
In the following example, \zl{o1} and \zl{o2} are different streams:
\smallskip
\begin{lstlisting}[belowskip=0em]
let node cpt () = o where rec o = 0 -> pre o + 1
\end{lstlisting}
\begin{lstlisting}[belowskip=0em]
let node present_vs_if (b) = (o1, o2) where
  rec o1 = present (b) -> cpt () else 0
  and o2 = if b then cpt  () else 0
\end{lstlisting}
\begin{center}
  \begin{tabular}{l|lllllllll}
    \zlm|b|  & \zlm|true| & \zlm|true| & \zlm|false| & \zlm|true| & \zlm|false| & \zlm|false| & \zlm|true| & ... \\
    \zlm|o1| & \zlm|0| & \zlm|1| & \zlm|0| & \zlm|2| & \zlm|0| & \zlm|0| & \zlm|3| & ... \\
    \zlm|o2| & \zlm|0| & \zlm|1| & \zlm|0| & \zlm|3| & \zlm|0| & \zlm|0| & \zlm|6| & ...
  \end{tabular}
\end{center}

\medskip

The \zl{reset $e_1$ every $e$} construct re-initializes the values of the \zl{init} equations and the corresponding \zl{last} expressions in $e_1$ each time $e$ is true.

The language is extended with the classic probabilistic expressions: \zl{sample} to draw from a distribution, \zl{observe} to condition on observations, and \zl{infer} to compute the distribution described by a model.

\paragraph{Scheduling.}
In the expression \zl{$e$ where rec $E$}, $E$ is a set of mutually recursive equations.
In practice, a scheduler reorders the equations according to their dependencies.
Initializations \zl{init $x_j$ = $c_j$} are grouped at the beginning, and an equation $x_j = e_j$ must be scheduled after the equation $x_i = e_i$ if the expression $e_j$ uses $x_i$ outside a \zl{last} construct.
A program satisfying this partial order is said to be \emph{scheduled}.
The compiler can also introduce additional equations to relax the scheduling constraints and rejects programs that cannot be statically scheduled~\cite{clock-lctes08}.
After scheduling, the expression \zl{$e$ where rec $E$} has the following form.
\begin{lstlisting}
$e$ where rec init $x_1$ = $c_1$ ... and init $x_k$ = $c_k$
        and $y_1$ = $e_1$ ... and $y_n$ = $e_n$
\end{lstlisting}

\noindent
For simplicity, we also assume that every initialized variable is defined in a subsequent equation, i.e., $\{x_i\}_{1..k} \cap \{y_j\}_{1..n} = \{x_i\}_{1..k}$.
If it is not the case, in this kernel we can always add additional equations of the form \zl!$x_i$ = last $x_i$!.

\paragraph{Kernel.}
All \ProbZelus programs can be encoded in this kernel language.
For instance, the program \zl{integr} of \Cref{sec:introduction}
can be rewritten in the kernel as follows:
\begin{lstlisting}
let node integr (xo, x') = x where
  rec init first = true and init x = 0.
  and first = false
  and x = if last first then xo else last x + (x' * h)
\end{lstlisting}
A stream \zl{first} is defined such that \zl{last first} is only true at the first step. 
The \zl{->} operator is then compiled into an \zl{if} statement. 
The \zl{pre} operator is compiled into a \zl{last} operator. 
The initialization value is arbitrary, the compiler's initialization analysis guaranties that this value is never used~\cite{pouzet-initialization}.
Similarly, other constructs like hierarchical automata can be re-written using \zl{present} and \zl{reset}~\cite{automaton-emsoft06}.

\subsection{Typing: Deterministic vs. Probabilistic}

\newcommand{\typeD}{\ensuremath{\texttt{D}}\xspace}
\newcommand{\typeP}{\ensuremath{\texttt{P}}\xspace}
\newcommand{\type}[4]{\ensuremath{{#1} \vdash^{#4} \zlm!#2! : {#3}}}

\newcommand{\typebool}{\zlm!bool!}
\newcommand{\typefloat}{\zlm!float!}
\newcommand{\typeunit}{\zlm!unit!}
\newcommand{\typetimes}[2]{{#1} \times {#2}}
\newcommand{\typefun}[3]{{#1} \rightarrow^{#2} {#3}}
\newcommand{\typedist}[1]{{#1}\ \zlm!dist!^*}
\newcommand{\typesampler}[1]{{#1}\ \zlm!dist!}

\newcommand{\kindof}[1]{\ensuremath{\mathit{kindOf}(#1)}}
\newcommand{\typeof}[1]{\ensuremath{\mathit{typeOf}(#1)}}

\newcommand{\infrule}[2]{
  {\begin{array}[b]{cc}
      {#1} \vspace{-.2cm}\\
      \hrulefill\\
      {#2}
   \end{array}}}
\newcommand{\axiom}[1]{
    {\begin{array}[b]{cc}
        \hrulefill\\
        {#1}
     \end{array}}}

Deterministic and probabilistic expressions have distinct interpretations.
A dedicated type system discriminates between the two kinds of expressions,
assigning one of two \emph{kinds} to each expression:
\typeD for deterministic, or \typeP for probabilistic.
The typing judgment $\type{G}{$e$}{T}{k}$ states that in the environment~$G$ which maps variable names to their type, the expression~$e$ has kind~$k$ and type~$T$.
Function types $\typefun{T}{k}{T'}$ are extended with the kind~$k$ of the body and we introduce a new datatype \zl{$T$ dist} for the probability distribution over values of type~$T$.

The expressions \zl{sample}, and \zl{observe} are probabilistic but their arguments must be deterministic.
Any deterministic expression can be lifted to a probabilistic expression using a sub-typing rule.
The transition from probabilistic to deterministic is realized via \zl{infer}
(the complete set of typing rules is presented in
\ifextended
\Cref{fig:typing} of \Cref{app:typing}\else
Figure~12 of Appendix A.1\fi).

\medskip

\begin{small}
$$
\begin{array}{c}
\infrule
{\type{G}{$e$}{\typesampler{T}}{\typeD}}
{\type{G}{sample($e$)}{T}{\typeP}}
\qquad
\infrule
{\type{G}{$e_1$}{\typedist{T}}{\typeD} \quad \type{G}{$e_2$}{T}{\typeD}}
{\type{G}{observe($e_1$,\ $e_2$)}{\typeunit}{\typeP}}
\\[0.7em]
\infrule
{\type{G}{$e$}{T}{\typeD}}
{\type{G}{$e$}{T}{\typeP}}
\quad
\infrule
{\type{G}{$e$}{T}{\typeP}}
{\type{G}{infer($e$)}{\typesampler{T}}{\typeD}}
\quad
\infrule
{\type{G}{$e$}{\typedist{T}}{\typeD}}
{\type{G}{$e$}{\typesampler{T}}{\typeD}}
\end{array}
$$
\end{small}

\medskip

The type $\typesampler{T}$ represents distributions over values of type~$T$.
Distributions can be sampled (\zl{sample} statement) and analyzed with external operators such as \zl{mean} and \zl{variance}.
The type $\typedist{T}$ is a subtype of $\typesampler{T}$ that represents distributions known to have a density, i.e., discrete distributions~(w.r.t. the counting measure) and a subset of continuous distributions~(w.r.t. the Lebesgue measure).
In \ProbZelus, to simplify the semantics of the language, the observe statement requires a value of type $\typedist{T}$.
In \ifextended\Cref{app:probzelus}\else Appendix~A\fi, we extend the language with the \zl{factor} statement for arbitrary conditioning.

\subsection{Co-Iterative Semantics}
\label{sec:co-iteration}

We now give the semantics of \ProbZelus in a co-iteration framework~\cite{pouzet-cmcs98}.
In this framework, a \emph{deterministic stream} of type~$T$
is defined by an initial state of type~$S$ and a transition function of type $S \rightarrow T \times S$.

\begin{small}
$$
\mathit{CoStream}(T, S) = S \times (S \rightarrow T \times S)
$$
\end{small}
Repeatedly executing the transition function from the initial state yields a stream of values of type~$T$.

The semantics of a \emph{deterministic expression} ($\kindof{e} = {\typeD}$) is defined using two auxiliary functions.
If $\gamma$ is an environment mapping variable names to values, $\sem{e}^i_\gamma$ denotes the initial state, and $\sem{e}^s_\gamma$ denotes the transition function:

\begin{small}
$$
\sem{e}_\gamma: \mathit{CoStream}(T, S) = \sem{e}^i_\gamma, \sem{e}^s_\gamma
$$
\end{small}

The deterministic semantics of \ProbZelus presented in \Cref{fig:sem_deterministic}  is an extension of~\cite{pouzet-cmcs98} with the control structures \zl{present} and~\zl{reset} (see also
\ifextended
\Cref{fig:sem_deterministic_full} of \Cref{app:semantics}\else
Figure~13 of Appendix~A.2\fi).

\begin{figure}
\begin{small}
$$
\begin{array}{@{}l@{}}
  \begin{array}{@{}lcl}
  \sem{x}^i_\gamma &=& ()\\
  \sem{x}^s_\gamma &=& \fun{s} (\gamma(x),s)
  \end{array}\\[1.5em]
\begin{array}{@{}lcl}
  \sem{\zlm!present\ $e$ ->\ $e_1$ else\ $e_2$!}^i_\gamma &=& 
    (\sem{e}^i_\gamma, \sem{e_1}^i_\gamma, \sem{e_2}^i_\gamma)\\
  \sem{\zlm!present\ $e$ ->\ $e_1$ else\ $e_2$!}^s_\gamma &=& \\
  \multicolumn{3}{l}{\quad
    \begin{array}[t]{@{}l@{}}
    \fun{(s, s_1, s_2)} \letin{v,s' = \sem{e}^s_\gamma(s)}\\
      \quad \mathit{if} \; v \; \begin{array}[t]{@{}l@{}} 
      \mathit{then} \; 
        \letin{v_1,s_1' = \sem{e_1}^s_\gamma(s_1)}
        (v_1, (s', s_1', s_2))\\
      \mathit{else} \;\;
        \letin{v_2,s_2' = \sem{e_2}^s_\gamma(s_2)}
        (v_2, (s', s_1, s_2'))
      \end{array}
    \end{array}
  }
  \end{array}\\[3.5em]
\begin{array}{@{}l@{}}
 \left \llbracket
\begin{minipage}[t][1em]{22.1em}
\begin{lstlisting}[aboveskip=-1.5em, xleftmargin=0pt, xrightmargin=0pt]
$e$ where rec init $x_1$ = $c_1$ ... and init $x_k$ = $c_k$
        and $y_1$ = $e_1$ ... and $y_n$ = $e_n$
\end{lstlisting}
\end{minipage}
\right \rrbracket^i_\gamma = \\[1.1em]
\qquad (
  (c_1, \dots, c_k),\;
  (\sem{e_1}^i_\gamma, \dots, \sem{e_n}^i_\gamma),\;
  \sem{e}^i_\gamma)\\[1.3em]
\left \llbracket
\begin{minipage}[t][1em]{22.1em}
\begin{lstlisting}[aboveskip=-1.5em, xleftmargin=0pt, xrightmargin=0pt]
$e$ where rec init $x_1$ = $c_1$ ... and init $x_k$ = $c_k$
        and $y_1$ = $e_1$ ... and $y_n$ = $e_n$
\end{lstlisting}
\end{minipage}
\right \rrbracket^s_\gamma
= \\[1.1em]
\quad\begin{array}{@{}l@{}}
\fun{((m_1, \dots, m_k), (s_1, \dots, s_n), s)}\\
\quad \begin{array}[t]{@{}l@{}}
\letin{\gamma_1 = \gamma[m_1/\zlm!$x_1$_last!]} \dots
\letin{\gamma_k = \gamma_{k-1}[m_k/\zlm!$x_k$_last!]}\\
\letin{v_1, s_1' = \sem{e_1}^s_{\gamma_k}(s_1)}
\letin{\gamma'_1 = \gamma_k[v_1/y_1]} \dots\\
\letin{v_n, s_n' = \sem{e_n}^s_{\gamma'_{n-1}}(s_n)}
\letin{\gamma'_n = \gamma'_{n-1}[v_n/y_n]}\\
\letin{v, s' = \sem{e}^s_{\gamma'_n}(s)}\\
v, ((\gamma'_n[x_1], \dots, \gamma'_n[x_k]), (s_1', \dots, s_n'), s')
\end{array}
\end{array}
\end{array}
\end{array}
$$
\end{small}
\caption{Semantics of deterministic expressions.}
\label{fig:sem_deterministic}
\end{figure}

The transition function of a variable always returns the corresponding value stored in the environment~$\gamma$.

The \zl{present $e$ -> $e_1$ else $e_2$} construct introduced in \Cref{sec:example} returns the value of~$e_1$ when~$e$ is true and the value of~$e_2$ otherwise.
The state $(s, s_1, s_2)$ stores the state of the three sub-expressions.
The transition function lazily executes the expression $e_1$ or $e_2$ depending on the value of~$e$ and returns the updated state.

The state of a set of scheduled locally recursive definitions \zl{$e$ where rec $E$} comprises three parts: the value of the local variables at the previous step which can be accessed via the \zl{last} operator, the state of the defining expressions, and the state of expression~$e$.
The initialization stores the initial values introduced by \zl{init} and the initial states of all sub-expressions.
The transition function incrementally builds the local environment defined by~$E$. 
First the environment is populated with a set of fresh variables \zl{$x_i$_last} initialized with the values stored in the state that can then be accessed via the \zl{last} operator.  
Then the environment is extended with the definition of all the variables~$y_i$ by executing all the defining expressions (where $\{x_i\}_{1..k} \cap \{y_j\}_{1..n} = \{x_i\}_{1..k}$).
Finally, the expression~$e$ is executed in the final environment.
The updated state contains the value of the initialized variables defined in~$E$ that will the be used to start the next step, and the updated state of the sub-expressions.

\paragraph{Probabilistic extension.}
The semantics of a \emph{probabilistic expression} ($\kindof{e} = {\typeP}$) follows the same scheme, but the transition function returns a \emph{measure} over the set of possible pairs (result, state):
\begin{center}
\begin{small}
$
\mathit{CoPStream}(T, S) = S \times (S \rightarrow (\Sigma_{T \times S} \rightarrow [0, \infty]))
$
\end{small}
\end{center}
A measure~$\mu$ associates a positive number to each measurable set~$U \in \Sigma_{T \times S}$ where $\Sigma_{T \times S}$ denotes the $\Sigma$-algebra of~$T \times S$, i.e., the set of measurable sets over pairs (result, state).
We use the following notation for the semantics of a probabilistic expression~$e$:
\begin{center}
\begin{small}
$
\psem{e}_\gamma : \mathit{CoPStream}(T, S) = \psem{e}^i_\gamma, \psem{e}^s_\gamma
$
\end{small}
\end{center}

\begin{figure}
  \begin{small}
    $$
    \begin{array}{@{}l}
    \begin{array}{@{}lclr}
    \psem{e}^i_\gamma &=& \sem{e}^i_\gamma
        & \mathit{if}\ \kindof{e} = \typeD \\
    \psem{e}^s_\gamma &=& \fun{s} \fun{U} \delta_{\sem{e}^s_\gamma(s)}(U) &
    \mathit{if}\ \kindof{e} = \typeD
    \end{array}\\[1.5em]

    \begin{array}{@{}lcl}
    \psem{\zlm!sample($e$)!}^i_\gamma &=& \sem{e}^i_\gamma\\
    \psem{\zlm!sample($e$)!}^s_\gamma &=& 
    \begin{array}[t]{@{}l@{}}
      \fun{s}\fun{U}
      \letin{\mu, s' = \sem{e}^s_\gamma(s)}
      \int_T \mu(dv) \; \delta_{v, s'}(U)
    \end{array}
  \end{array}\\[1.5em]
  
  \begin{array}{@{}lcl}
  \psem{\zlm!observe($e_1$,\ $e_2$)!}^i_\gamma &=&
    \zlm!($\sem{e_1}^i_\gamma$,$\sem{e_2}^i_\gamma$)!
  \\
  \psem{\zlm!observe($e_1$,\ $e_2$)!}^s_\gamma &=&\\
    \multicolumn{3}{l}{
      \quad\begin{array}[t]{@{}l@{}}
      \fun{(s_1, s_2)}\fun{U}\\ 
      \quad \begin{array}[t]{@{}l@{}} 
      \letin{\mu,s_1' = \sem{e_1}^s_\gamma(s_1)}\\
      \letin{v,s_2' = \sem{e_2}^s_\gamma(s_2)}
      \mu_{\rm{pdf}}(v) * \delta_{\mathtt{()}, (s_1', s_2)}(U)
      \end{array}
    \end{array}}
  \end{array}\\[3.5em]

  \begin{array}{@{}l@{}}
    \left \{ \mkern-5mu \left [
    \begin{minipage}[t][1em]{22.1em}
    \begin{lstlisting}[aboveskip=-1.5em, xleftmargin=0pt, xrightmargin=0pt]
$e$ where rec init $x_1$ = $c_1$ ... and init $x_k$ = $c_k$
            and $y_1$ = $e_1$ ... and $y_n$ = $e_n$
  \end{lstlisting}
  \end{minipage}
  \right ] \mkern-5mu \right \}^s_\gamma
  =\\[1.1em]
  \quad\begin{array}{@{}l}
  \fun{((m_1, \dots, m_k), (s_1, \dots, s_n), s)} \fun{U}\\
  \quad \begin{array}[t]{@{}l@{}}
  \letin{\gamma_1 = \gamma[m_1/\zlm!$x_1$_last!]}
  \dots
  \letin{\gamma_k = \gamma_{k-1}[m_k/\zlm!$x_k$_last!]}\\
  \letin{\mu_1 = \psem{e_1}^s_{\gamma_k}(s_1)}\\
  \int \mu_1(dv_1, ds_1') \letin{\gamma'_1 = \gamma_k[v_1/  y_1]} \dots \\
  \quad\begin{array}[t]{@{}l@{}}
  \letin{\mu_n = \sem{e_n}^s_{\gamma'_{n-1}}(s_n)}\\
  \int \mu_n(dv_n, ds'_n)  \letin{\gamma'_n = \gamma'_  {n-1}[v_n/y_n]}\\
  \quad \begin{array}[t]{@{}l@{}}
  \letin{\mu = \psem{e}^s_{\gamma'_n}(s)}\\
  \int \mu(dv, ds') \; \delta_{v, ((\gamma'_n[x_1], \dots, \gamma'_n[x_k]), (s_1',  \dots, s_n'), s')}(U)
  \end{array}
  \end{array}
  \end{array}
  \end{array}
  \end{array}
  
\end{array}
  $$
\end{small}
\caption{Semantics of probabilistic expressions.}
\label{fig:sem-probabilistic}
\end{figure}

The semantics of probabilistic expressions is presented in \Cref{fig:sem-probabilistic}~(the complete semantics is in
\ifextended
\Cref{fig:sem-probabilistic-full} of \Cref{app:semantics}\else
Figure~14 of Appendix~A.2\fi).
This measure-based semantics is adapted from~\cite{staton17} to explicitly handle the state of the transition functions.

First, any deterministic expression can be lifted as a probabilistic expression.
The transition function returns the Dirac delta measure~({$\delta_x(U) = 1$ if $x \in U$, $0$ otherwise}) on the pair returned by the deterministic transition function applied on the current state: $\sem{e}^s_\gamma(s) : T \times S$.

The probabilistic operator \zl{sample($e$)} evaluates~$e$ which returns a distribution~$\mu : \zl{$T$ dist}$ and a new state~$s': S$, and returns a measure over the pair (result, state) where the state is fixed to the value~$s'$.
\zl{observe($e_1, e_2$)} evaluates~$e_1$ and~$e_2$ into a distribution with density~$\mu : \typedist{T}$ and a value~$v : T$, and weights execution paths using the likelihood of~$v$ w.r.t.~$\mu$~($\mu_{\rm{pdf}}$ denotes the density function of the distribution~$\mu$).

The state of a set of locally recursive definitions is the same as in \Cref{fig:sem_deterministic} and contains the previous value of the initialized variables and the states of the sub-expressions.
The transition starts by adding the variables \zl!$x_i$_last! to the environment.
We note $\int \mu(dv, ds) f(v, s)$ the integral of~$f$ w.r.t. the measure~$\mu$ where variables~$v$ and~$s$ are the integration variables. 
The integration measure appears on the right of the integral to maintain the expression order of the source code and we allow local definitions (e.g., $\letin{x = v} \dots$) inside the integral to simplify the presentation.
Local definitions are interpreted by successively integrating the measure on pairs (value, state) returned by the defining expressions.
In other words, we integrate over all possible executions.
Integrals need to be nested to capture the eventual dependencies in the successive expressions.
The returned value is a measure on pairs (value, state) where the state captures the value of the initialized variables and the state of the sub-expressions.

\paragraph{Inference in the loop.}
The \zl{infer} operator is the boundary between the deterministic and the probabilistic expressions. 
Given a probabilistic model defined by an expression~$e$, at each step the inference computes a distribution of results and a distribution of possible next states.
Expression~$e$ can contain free variables thus capturing inputs from deterministic processes.
The distribution of results can be used by deterministic processes to produce new inputs for the next inference step.

\begin{small}
$$
\begin{array}{@{}l}
\sem{\zl!infer($e$)!}^i_\gamma =
  \fun{U} \delta_{\sem{e}^i_\gamma}(U)\\
\sem{\zl!infer($e$)!}^s_\gamma = \fun{\sigma}
  \begin{array}[t]{@{}l}
  \letin{\mu = \fun{U} \int_{S} \sigma(ds) \psem{e}^s_\gamma(s)(U)}\\
  \letin{\nu = \fun{U} \mu(U) / \mu(\top)}\\
  (\pi_{1*}(\nu), \pi_{2*}(\nu))
  \end{array}
\end{array}
$$
\end{small}

\noindent
The state of \zl{infer($e$)} is a distribution over the possible states for~$e$. 
The initial state is the Dirac delta measure on the initial state of~$e$. 
The transition function integrates the measure defined by~$e$ over all the possible states and normalize the result~$\mu$ to produce a distribution~$\nu: \typesampler{T \times S}$ ($\top$ denotes the entire space).
This distribution is then split into a pair of marginal distributions using the pushforward of~$\mu$ across the projections $\pi_1$ and $\pi_2$. 

\paragraph{Remark.}
In \ProbZelus deterministic processes that interact with the environment cannot rollback on actions based on past estimations~(e.g., the command of the robot controller).
However, the inferred distribution of a random variable captured in the state may evolve at each time step based on subsequent observations (e.g., the initial position of the robot).
These two properties follow from the separation of the distribution of results and the distribution of states in the semantics of \zl{infer}. 

Alternatively, we could define a fix-point semantics of streams based on a Scott order, as simple lazy streams in Haskell, where the value of a stream can depend on future computations~(as illustrated
\ifextended
\Cref{app:sem-alt}\else
Appendix~A.3\fi).
However, this approach is not practical in a reactive context where processes interact with the environment~\cite{gilles1974semantics,caspi92}.

 \section{Compilation}
\label{sec:compilation}

Following the semantics described in \Cref{sec:co-iteration} each expression is compiled into a transition function that can be written in a simple functional first-order language extended with probabilistic operators we call~\muF.
Importantly, the compilation process is the same for deterministic and probabilistic expressions.
We can then give a classic interpretation to deterministic terms, and a measure-based semantics to probabilistic terms following~\cite{staton17}.
We then show that the semantics of the compiled code coincides with the co-iterative semantics described in \Cref{sec:co-iteration}.

\subsection{A First-Order Functional Probabilistic Language}

The syntax of \muF is the following:

\noindent
\begin{lstlisting}
$d ::=$ let $f$ = $e$ $|$ $d$ $d$
$e ::=$ $c$ $|$ $x$ $|$ $(e,e)$ $|$ $\mathit{op}$($e$) $|$ $e$($e$)
   $|$ if $e$ then $e$ else $e$ $|$ let $p$ = $e$ in $e$ $|$ fun $p$ -> $e$
   $|$ sample($e$) $|$ observe($e$, $e$) $|$ infer((fun $x$ -> $e$), $e$)
$p ::=$ $x$ $|$ ($p$, $p$)
\end{lstlisting}

\smallskip

\noindent
A program is a set of definitions.
An expression is either a constant, a variable, a pair, an operator, a function call, a conditional, a local definition, an anonymous function, or one of the probabilistic operators \zl{sample}, \zl{observe}, or \zl{infer}.
The \zl{infer} operator is tailored for \ProbZelus and always takes two arguments: a transition function of the form \zl{fun $x$ -> $e$}, and a distribution of states.
This operator computes the distribution of results and the distribution of possible next steps.
A type system similar to the one of \ProbZelus is used to distinguish deterministic from probabilistic expressions (see
\ifextended
\Cref{app:muF-typing}\else
Appendix~B.1\fi).

\subsection{Compilation to \muF}

The compilation function~$\compile$ generates a function that closely follows the transition function defined by the co-iterative semantics presented in \Cref{sec:co-iteration}.
Each expression is compiled into a function of type $S \rightarrow T \times S$ which given a state returns a value and an updated state (see \ifextended\Cref{app:compilation}\else Appendix C\fi{} for the complete definition).
The compilation of \zl{present} is thus:

\begin{lstlisting}[aboveskip=1em, belowskip=1em]
$\compile$$($present $e$ -> $e_1$ else $e_2$$)$ $=$ fun (s,s1,s2) -> 
  let v,zs' = $\compile$$($$e$$)$(s) in
  if v then let v1,s1' = $\compile$$($$e_1$$)$(s1) in (v1,(s',s1',s2))
  else let v2,s2' = $\compile$$($$e_2$$)$(s2) in (v2,(s',s1,s2'))
\end{lstlisting}

The probabilistic operators \zl{sample}, 
and \zl{observe} are treated as external operators.
The compilation generates code that simply calls the \muF version of these operators.
The compilation of \zl{infer} passes the distribution over states to the \muF version of \zl{infer}.
The inference is thus aware of the distribution over states at the previous step.

\begin{lstlisting}[aboveskip=1em, belowskip=1em]
$\compile$$($infer($e$)$)$ $=$ fun sigma ->
  let mu, sigma' = infer($\compile$$($$e$$)$, sigma) 
  in (mu, sigma')
\end{lstlisting}

The compilation of a node declaration generates two definitions: the transition function \zl!$f$_step! and the initial state \zl!$f$_init!.
The transition function is the result of compiling the body of the node with an additional argument to capture the input.
The initial step is generated by the allocation function~$\alloc$ which follows the definition of the initial state in the semantics of \Cref{sec:co-iteration} (see the \ifextended\Cref{app:compilation}\else Appendix C\fi{} for the complete definition).
\begin{lstlisting}[aboveskip=1em, belowskip=1em]
$\compile$$($let node $f$ $x$ = $e$$)$ =  
  let $f$_step = fun (s,$x$) -> $\compile$$($$e$$)$(s)
  let $f$_init = $\alloc$$($e$)$
\end{lstlisting}

\subsection{Semantics equivalence}
\label{sec:sem-muF}

We showed how to compile \ProbZelus to \muF a simple functional language with no loops, no recursion, and no higher-order functions, extended with the probabilistic operators.
This language is similar to the kernel presented in~\cite{staton17} for which a measure-based probabilistic semantics is defined~(see also
\ifextended
\Cref{fig:muF-sem-full} of \Cref{app:sem-muF}\else
Figure~16 of Appendix B.2\fi).

We can now prove that the semantics of the generated code corresponds to the semantics of the source language described in \Cref{sec:co-iteration}.

\begin{theorem*}
  For all \ProbZelus expression $e$, for all state $s$ and environment $\gamma$:
  \vspace{0.5em}
\begin{itemize}[itemsep=0.5em]
  \item if $\kindof{e} = {\typeD}$ then $\sem{e}^s_\gamma(s) = \sem{\compile(e)}_\gamma(s)$, and,
  \item if $\kindof{e} = {\typeP}$ then $\psem{e}^s_\gamma(s) = \psem{\compile(e)}_\gamma(s)$.
\end{itemize}
\end{theorem*}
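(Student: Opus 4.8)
The plan is to prove both statements simultaneously by structural induction on the \ProbZelus expression $e$, since the two claims are entangled: the probabilistic semantics of \Cref{fig:sem-probabilistic} lifts any deterministic expression to a Dirac measure, so the probabilistic case of a lifted expression must fall back on the deterministic case. The outer induction is on the syntax of $e$, and within each case I would split on $\kindof{e}$. The driving observation is that the compilation $\compile$ was defined precisely to transcribe, clause by clause, the transition-function equations of \Cref{fig:sem_deterministic} and \Cref{fig:sem-probabilistic} into \muF syntax. Hence for every construct the task reduces to unfolding the \muF denotation of $\compile(e)$ applied to $s$, rewriting the subterms $\compile(e_i)$ using the induction hypotheses, and checking that the result is syntactically the right-hand side of the corresponding co-iterative clause.

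First I would dispatch the structural base cases --- variables, constants, pairs, operator and function applications, and \zl{last}. For these the co-iterative transition function is a one-line expression (e.g.\ $\sem{x}^s_\gamma = \fun{s}(\gamma(x), s)$) and the compiled \muF term denotes the same function under the same $\gamma$; the only thing to verify is that $\compile$ preserves free-variable names so the two environments agree. The control constructs \zl{present} and \zl{reset} follow from the explicit compilation clause given in the body: unfolding the \muF semantics of the generated \zl{fun (s,s1,s2) -> ...} term reproduces, line for line, the nested \zl{let}/\zl{if} structure of the co-iterative clause, with the induction hypotheses supplying the equalities for the guard and the two branches.

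The probabilistic operators come next. For \zl{sample} and \zl{observe}, compiled to calls to their \muF counterparts, I would appeal to the fact that the \muF probabilistic semantics (adapted from Staton's kernel) interprets these operators by exactly the measure-theoretic definitions of \Cref{fig:sem-probabilistic}: the push-forward integral against $\mu$ for \zl{sample}, and the density weighting $\mu_{\rm{pdf}}(v)$ times a Dirac for \zl{observe}. The lifting rule (first clause of \Cref{fig:sem-probabilistic}) is the hinge between the two statements: when $\kindof{e} = \typeD$, the deterministic half of the induction gives $\sem{e}^s_\gamma(s) = \sem{\compile(e)}_\gamma(s)$, and since \muF lifts a deterministic term to the Dirac measure on its value in the same way, $\psem{e}^s_\gamma(s) = \delta_{\sem{e}^s_\gamma(s)} = \delta_{\sem{\compile(e)}_\gamma(s)} = \psem{\compile(e)}_\gamma(s)$ follows. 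The \zl{infer} case is likewise a direct reading of its compilation clause together with the \muF semantics of the two-argument \zl{infer}.

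The main obstacle is the \zl{where rec} construct in the probabilistic case. Here the co-iterative clause is a chain of nested integrals $\int \mu_1(dv_1, ds_1')\cdots\int \mu_n(dv_n, ds_n')\int \mu(dv, ds')\,\delta_{\dots}(U)$ that threads an incrementally extended environment $\gamma, \gamma_1, \dots, \gamma'_n$ through the successive defining expressions, whereas the compiled code is a cascade of \muF \zl{let}-bindings. I would close this gap using the defining property of the \muF semantics that \zl{let $p$ = $e_1$ in $e_2$} in a probabilistic context denotes integration of the continuation against the measure of $e_1$; applying this $n+1$ times, and using the induction hypothesis to replace each $\psem{\compile(e_i)}$ by $\psem{e_i}^s$ (and each \muF binding of $y_i$ and of the fresh last-value variables by the corresponding environment update), turns the nested \muF \zl{let}s into exactly the nested integrals of the semantic clause. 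The care points are (i)~that the order of integration matches the scheduled order of the equations, which holds because scheduling fixes a single dependency-respecting order used identically by both semantics, and (ii)~that the state and last-value bookkeeping carried in the returned $\delta_{v,(\dots)}$ agrees, which is a matter of tracking the tuple components through both derivations. The deterministic \zl{where rec} case is the same argument with every integral collapsed to a Dirac evaluation.
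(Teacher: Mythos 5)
Your proposal is correct and is essentially the paper's proof: the paper likewise proceeds by structural induction on $e$, unfolds the compiled \muF term, uses the \muF interpretation of \zl{let} as integration (which collapses to substitution on Dirac measures for deterministic sub-terms), applies the induction hypothesis, and recognizes the resulting measure as the corresponding co-iterative clause --- it simply writes out only the \zl{sample($e$)} case as a representative example. The one ingredient you leave implicit is the paper's auxiliary lemma that compilation preserves kinds (\Cref{lem:typing}), which is what justifies interpreting $\compile(e)$ with the \emph{deterministic} \muF semantics whenever $\kindof{e}=\typeD$ inside a probabilistic context.
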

\begin{proof}
  The proof is done by induction on the structure of~$e$.
As an example consider the expression \zl{sample($e$)}.
  If this expression is well-typed and since typing is preserved by compilation (see \ifextended\Cref{lem:typing}\else Lemma~C.1\fi{}) $\kindof{\compile(e)} = \typeD$. 
  Using the induction hypothesis on $\sem{\compile(\zl!$e$!)}_\gamma(s)$ = $\sem{$e$}_\gamma(s)$ we have:

\begin{small}
$$
\begin{array}{@{}l@{}}
    \psem{\compile(\zlm!sample($e$)!)}_\gamma(s)\\[1em] 
\quad =
\left \{ \mkern-5mu \left [
\begin{minipage}[t][1em]{19em}
\begin{lstlisting}[aboveskip=-1.5em, xleftmargin=0pt, xrightmargin=0pt]
fun s -> let mu, s' = $\compile$$($$e$$)$(s) in
         let v = sample(mu) in (v, s')
\end{lstlisting}
\end{minipage}
\right ] \mkern-5mu \right \}_\gamma\!\!\!\!(s)\\[1.5em]
\quad =
  \; \fun{U}
  \begin{array}[t]{@{}l@{}}
  \int \delta_{\sem{\compile(\zlm!$e$!)}_\gamma(s)}(d\mu, ds')\\[0.3em]
  \quad \psem{\zlm!let v = sample(mu) in (v, s')!}_{\gamma[\mu/\zlmm!mu!, \; s'/\zlmm!s'!]}(U)
  \end{array}\\[2.25em]
\quad =
  \; \fun{U}
  \begin{array}[t]{@{}l@{}}
  \letin{\zlm!($\mu$, $s'$)! = \sem{\compile(\zl!$e$!)}_\gamma(s)}\\[0.3em]
  \psem{\zlm!let v = sample(mu) in (v, s')!}_{\gamma[\mu/\zlmm!mu!, \; s'/\zlmm!s'!]}(U)
  \end{array}\\[2.25em]
\quad =
  \; \fun{U}
  \letin{(\mu, s') = \sem{\zl!$e$!}^s_\gamma(s)}
  \int \mu(dv) \delta_{v, s'}(U)\\[0.5em]
\quad = \psem{\zlm!sample($e$)!}^s_\gamma(s) \end{array}
$$
\end{small}
\end{proof}

\paragraph{Remark.}
The probabilistic semantics of \muF is commutative (see \cite[Theorem~4]{staton17}).
We can thus show that the semantics of a \ProbZelus program does not depend on the schedule used by the compiler to order the equations of local definitions.

 \section{Inference}
\label{sec:inference}

The measure-based semantics of \zl{infer} presented in \Cref{sec:co-iteration} and \Cref{sec:sem-muF} includes often intractable integrals.
An additional challenge is to design inference techniques that can operate in bounded memory to be practical in a reactive context where the inference is a non-terminating process.

In this section, we show how to adapt particle filtering~\cite{doucet-smc-2006}
to explicitly handle the state of the transition functions.
We then present a novel implementation of \emph{delayed sampling}, a recently proposed semi-symbolic inference, which enables partial exact inference over infinite streams in bounded memory for a large class of models including state-space models like the robot example of \Cref{fig:lqr_src} in \Cref{sec:example}.

\subsection{Particle Filtering}
\label{sec:pf}

In conventional probabilistic programming, the operational interpretation of a model is an \emph{importance sampler}
that randomly generates a sample of the model together with an importance weight measuring the quality of the sample.

Following the conventions of \Cref{sec:co-iteration} we write $\sem{e}_\gamma$ for the semantics of a deterministic expression, and $\psem{e}_{\gamma, w}$ for the semantics of a probabilistic expression.
The additional argument~$w$ captures the weight.
The probabilistic operator \zl{sample} draws a sample from a distribution without changing the score.
\zl{observe} increments the score by the likelihood of the observation.
A deterministic expression can be lifted in a probabilistic context: the corresponding sample is the return value of the expression and the score is unchanged.
The \zl{let} construct illustrates that the score is accumulated along the execution path (the complete semantics is presented in
\ifextended
\Cref{fig:importance-full} of \Cref{app:inference}\else
Figure~19 of Appendix~D\fi).

\noindent
\begin{small}
$$
\begin{array}{@{}l@{\ }l@{\ \,}l}
\psem{\zlm!sample($e$)!}_{\gamma, w} &=&
  ({\rm draw}(\sem{e}_\gamma), w)
\\[0.75em]

\psem{\zlm!observe($e_1$, $e_2$)!}_{\gamma, w} &=&
  \letin{\mu = \sem{e_1}_\gamma} (\zlm!()!, w * \mu_{\textrm{pdf}}(\sem{e_2}_\gamma))
\\[0.75em]

\psem{e}_{\gamma, w} &=&
  (\sem{e}_\gamma, w) \quad \mathit{if} \; \kindof{$e$} = {\typeD}
\\[0.75em]

\psem{\zlm!let $\;p$ =$\;e_1\;$in $\;e_2$!}_{\gamma, w} &=&
  \letin{v_1, w_1 = \psem{e_1}_{\gamma, w}} \psem{e_2}_{\gamma[v_1 / p], w_1}
\end{array}
$$
\end{small}

Such a sampler is the basis of a \emph{particle filter} or a \emph{bootstrap filter}~\cite{doucet-smc-2006}.
\zl{infer} independently launches~$N$ particles.
At each step, each particle samples the distribution of states~$\sigma$ obtained at the previous step and executes the sampler to compute a pair (result, state) along with its weight.
The resulting pairs are then normalized according to their weights to form a categorical distribution~$\mu$ over pairs of values and states (we write $\overline{w_i} = w_i / \sum_{i=1}^N w_i$ for the normalized weights).
This distribution is then split into the distribution of returned values and the distribution of next states.

\begin{small}
$$
\begin{array}{l}
\sem{\zlm!infer(fun\ $s$ ->\ $e$,\ $\sigma$)!}_\gamma = \\ \quad
  \begin{array}[t]{@{}l@{}}
  \letin{\mu =
    \begin{array}[t]{@{}l@{}}
    \fun{U} \sum\limits_{i=1}^N \;
      \begin{array}[t]{@{}l@{}}
        \letin{s_i = \rm{draw}(\sem{\sigma}_\gamma)}\\
        \letin{(v_i, s_i'), w_i = \psem{\zlm!fun\ $s$ ->\ $e$!}_{\gamma, 1}(s_i)}\\
        \overline{w_i} *  \delta_{v_i, s_i'}(U)
      \end{array}
    \end{array}\\}
    (\pi_{1*}(\mu), \pi_{2*}(\mu))
  \end{array}
\end{array}
$$
\end{small}

\paragraph{Remark.}
The resampling step requires the ability to clone particles in the middle of the execution.
A classic technique is to compile the model in \emph{continuation passing style}~(CPS)~\cite{ritchie_c3_2016} and use the probabilistic constructs \zl{sample} and \zl{observe} as checkpoints for resampling.
In our context, the compilation presented in \Cref{sec:compilation} externalizes the state of the transition functions.
Duplicating the state effectively clones a particle during its execution.
The code does not need to be compiled in CPS form and we avoid the alignment problem~\cite{lunden19alignment}.

\subsection{Delayed Sampling}
\label{sec:ds}

\newcommand{\dsassume}{\ensuremath{\mathit{assume}}}
\newcommand{\dsobserve}{\ensuremath{\mathit{observe}}}
\newcommand{\dsvalue}{\ensuremath{\mathit{value}}}
\newcommand{\dsdistribution}{\ensuremath{\mathit{distribution}}}

\newcommand{\dsllinitialize}{\ensuremath{\mathit{initialize}}}
\newcommand{\dsllmarginalize}{\ensuremath{\mathit{marginalize}}}
\newcommand{\dsllrealize}{\ensuremath{\mathit{realize}}}
\newcommand{\dsllcondition}{\ensuremath{\mathit{condition}}}
\newcommand{\dsllinitializeshort}{\ensuremath{\mathit{init}}}
\newcommand{\dsllmarginalizeshort}{\ensuremath{\mathit{marg}}}

The basis of our streaming inference algorithm is delayed sampling, which we first review to explain its conceptual approach.
Delayed sampling is an inference technique combining partial exact inference with approximate particle filtering to reduce estimation errors~\cite{murray18delayed_sampling,lunden17}.

In addition to the importance weight, each particle exploits \emph{conjugacy} relationships between pairs of random variables to maintains a graph: a \emph{Bayesian network} representing closed-form distributions involving subsets of random variables.
Observations are incorporated by analytically \emph{conditioning} the network.
Particles are thus only required to draw sample when forced to, i.e., when exact computation is not possible, or when a concrete value is required.

To perform analytic computations, delayed sampling manipulates symbolic terms where random variables are referenced in the graph.
The semantics of an expression~$\psem{e}_{\gamma, g, w}$ takes an additional argument~$g$ for the graph and returns a symbolic term, an updated weight, and an updated graph.
Given a graph, a symbolic term can be evaluated into a concrete value by sampling the random variables that appear in the term.
The graph can be accessed and modified using the three following functions defined in~\cite{murray18delayed_sampling}.

\begin{figure}
  \begin{small}
  $$
    \begin{array}{l}
\psem{\zlm!$op$($e$)!}_{\gamma, g, w} ~ = \\ \qquad
       \mathit{let} \; (e', g_e, w_e) = \psem{e}_{\gamma, g, w} \; \mathit{in} \;
       (\zlm!app!(\textit{op}, e'), g_e, w_e)
     \\[0.5em]
\psem{\zlm!if$\;e\;$then$\;e_1\;$else$\;e_2$!}_{\gamma, g, w} ~= \\ \qquad
       \begin{array}{@{}l@{}}
       \letin{e', g_e, w_e = \psem{e}_{\gamma, g, w}} \\
       \letin{v, g_v = \dsvalue(e', g_e)}\\
       \mathit{if} \; v \; \mathit{then} \;
           \psem{e_1}_{\gamma, g_v, w_e} \;
       \mathit{else} \;
           \psem{e_2}_{\gamma, g_v, w_e} \;
      \end{array}
      \\[2em]
\psem{\zlm!sample($e$)!}_{\gamma, g, w} ~ = \\ \qquad
      \begin{array}{@{}l@{}}
        \letin{\mu, g_e, w' = \psem{e}_{\gamma, g, w}}\\
        \letin{X, g' = \dsassume (\mu, g_e)}
        (X, g', w')
      \end{array}
      \\[1em]
\psem{\zlm!observe($e_1$, $e_2$)!}_{\gamma, g, w} ~= \\ \qquad
      \begin{array}{@{}l@{}}
        \letin{\mu, g_1, w_1 \; = \psem{e_1}_{\gamma, g, w}}
        \letin{X, g_x = \dsassume(\mu, g_1)}\\
        \letin{e'_2, g_2, w_2 = \psem{e_2}_{\gamma, g_x, w_1}}
        \letin{v, g_v = \dsvalue(e'_2, g_2)}\\
        \letin{g' = \dsobserve(X, v, g_v)}
        (\zlm!()!, g', w_2 * \mu_{\rm pdf}(v))
      \end{array}
      \\

\end{array}
  $$
  \end{small}
\caption{Delayed sampling sampler. Expressions return a pair (symbolic expression, weight).}
  \label{fig:ds-sem}
  \end{figure}

\begin{description}[itemsep=0.3em]
\item[$v, g' = \dsvalue(e, g)$]
  evaluate a symbolic term and return a concrete value.
\item[$X, g' = \dsassume(\mu, g)$]
  add a random variable~$X \sim \mu$ to the graph and return the variable.
\item[$g' = \dsobserve(X, v, g)$]
condition the graph by observing the value~$v$ for the variable~$X$.
\end{description}

Compared to the particle filter, any expression, probabilistic or deterministic, can contribute to a symbolic term.
The evaluation function~$\psem{e}_{\gamma, g, w}$ partially presented in \Cref{fig:ds-sem} must thus be defined on the entire language and not only on probabilistic constructs.
For instance, the application of an operator \zl!$\textit{op}$($e$)! returns a symbolic term $\zl!app!(op, e')$ that represents the application of $\mathit{op}$ on the evaluation of~$e$.
Some terms are partially evaluated when symbolic computation is not possible.
For instance, in the general case, to compute the importance weight of \zl!if $e$ then $e_1$ else $e_2$!, each particle must compute a concrete value for the condition~$e$.

The probabilistic \zl!sample(e)! adds a new random variable to the graph without drawing a sample.
\zl!observe($e_1$, $e_2$)! adds a new random variable $X \sim \mu$ where~$\mu$ is defined by~$e_1$, then computes a concrete value~$v$ for~$e_2$ and conditions the graph by observing the value~$v$ for~$X$.
As for the particle filter, the score is incremented by the likelihood of the observation.

\paragraph{Symbolic Computations.}
The functions $\dsvalue$, $\dsassume$, and $\dsobserve$ used in \Cref{fig:ds-sem} rely on the following mutually recursive lower level operations~($Y$ is the parent of~$X$)~\cite{murray18delayed_sampling}:

\begin{description}[itemsep=0.3em]
\item[$X, g' = \dsllinitialize(\mu, Y, g)$]
    add a new node $X$ with a distribution $p_{X|Y}=\mu$ as a child of $Y$ in $g$.
\item[$g' = \dsllmarginalize(X, g)$]
    compute and store $p_X$ in $g'$ from $p_Y$ and $p_{X|Y}$ where $p_Y$ and $p_{X|Y}$ are in $g$.
\item[$g' = \dsllrealize(X,v,g)$] assign in $g'$ a concrete value to a random variable $X$.
\item[$g' = \dsllcondition(Y, g)$]
  compute $p_{Y|X}$ given $p_X$, $p_{X|Y}$, and a concrete value $X = v$ where $v$ is in $g$.
\end{description}
\noindent
In the class of Bayesian networks maintained by the delayed sampler, marginalization w.r.t. a parent node, and conditioning a parent on the value of a child are tractable operations.

To reflect these operations, nodes are characterized by a state (see \Cref{fig:ds_kalman_naive,fig:bds_kalman}).
\emph{Initialized} nodes~\tikz[baseline=-0.75ex]{\node[initialized] (x1) {};} are random variables with a conditional distribution $p_{X|Y}$ where the parent~$Y$ has no concrete value yet.
\emph{Marginalized} nodes~\tikz[baseline=-0.75ex]{\node[marginalized] (x1) {};} are random variables with a marginal distribution~$p_X$ that incorporate the distributions of the ancestors.
\emph{Realized} nodes~\tikz[baseline=-0.75ex]{\node[realized] (x1) {};} are random variables that have been assigned a concrete value via sampling or observation.

The evaluation function $\dsvalue(e, g)$ forces the realization by sampling of all the random variables referenced in~$e$ to produce a concrete value.
Similarly, the function $\dsobserve(X, v, g)$ realizes a variable~$X$ with a given observation~$v$.
The realization of a random variable comprises three steps: (1)~compute the distribution $p(x)$ by recursively marginalizing the parents from a root node, (2)~sample a value, or use the observation, and (3)~use the concrete value to update the children and condition the parent which removes the dependencies.

The function $\dsassume(\mu, g)$ adds a new node to the graph and is defined case by case on the shape of the symbolic term~$\mu$.
If there is a conjugacy relationship between~$\mu$ and a random variable~$Y$ present in the graph, e.g., $\mu = \mathit{Bernoulli}(Y)$ with $Y \sim \mathit{Beta}(\alpha, \beta)$, a new initialized node~$X \sim \mu$ is added as a child of~$Y$.
Otherwise, since symbolic computation is not possible, dependencies are broken by realizing the random variables that appear in~$\mu$, e.g.,~$\mu' = \mathit{Bernoulli}(\dsvalue(Y, g))$, and~$X \sim \mu'$ is added as a new root node.

\begin{figure*}
\begin{footnotesize}
\begin{tabular}{ccccccc}
\begin{subfigure}[b]{.11\textwidth}
\centering
\begin{tikzpicture}[node distance=0.4cm and 0.4cm,every node/.style={font=\scriptsize, align=center}]
    \node (past) {{\normalsize\dots}};
    \node[marginalized, label={\texttt{pre x}}, right=of past] (x1) {};
    \dep[-stealth]{past}{x1};
    \node[realized, below=of x1] (o1) {};
    \dep[-stealth]{x1}{o1};
    \link[->, densely dotted]{x1}{o1};
\end{tikzpicture}
\caption{Initial state}
\end{subfigure}
& \begin{subfigure}[b]{.13\textwidth}
\centering
\begin{tikzpicture}[node distance=0.4cm and 0.4cm,every node/.style={font=\scriptsize, align=center}]
    \node (past) {{\normalsize\dots}};
    \node[marginalized, label={\texttt{pre x}}, right=of past] (x1) {};
    \dep[-stealth]{past}{x1};
    \node[realized, below=of x1] (o1) {};
    \dep[-stealth]{x1}{o1};
    \link[->, densely dotted]{x1}{o1};
    \node[initialized, label={\texttt{\phantom{p}x}}, right=of x1] (x2) {};
    \dep[-stealth]{x1}{x2};
    \link[->, densely dotted]{x2}{x1};
    \node[label={[label distance=0.1ex]0:\phantom{\texttt{acc}}}, below=of x2] (o2) {};
\end{tikzpicture}
\caption{$\dsllinitializeshort(\zlm!x!, \zlm!pre x!)$}
\label{fig:ds_kalman_addx}
\end{subfigure}
& \begin{subfigure}[b]{.12\textwidth}
\centering
\begin{tikzpicture}[node distance=0.4cm and 0.4cm,every node/.style={font=\scriptsize, align=center}]
    \node (past) {{\normalsize\dots}};
    \node[marginalized, label={\texttt{pre x}}, right=of past] (x1) {};
    \dep[-stealth]{past}{x1};
    \node[realized, below=of x1] (o1) {};
    \dep[-stealth]{x1}{o1};
    \link[->, densely dotted]{x1}{o1};
    \node[initialized, label={\texttt{\phantom{p}x}}, right=of x1] (x2) {};
    \dep[-stealth]{x1}{x2};
    \link[->, densely dotted]{x2}{x1};
    \node[initialized, label={[label distance=0.1ex]0:\texttt{acc}}, below=of x2] (o2) {};
    \dep[-stealth]{x2}{o2};
    \link[->, densely dotted]{o2}{x2};
\end{tikzpicture}
\caption{$\dsllinitializeshort(\zlm!y!, \zlm!x!)$}
\label{fig:ds_kalman_addo}
\end{subfigure}
& \begin{subfigure}[b]{.12\textwidth}
\centering
\begin{tikzpicture}[node distance=0.4cm and 0.4cm,every node/.style={font=\scriptsize, align=center}]
    \node (past) {{\normalsize\dots}};
b
    \node[marginalized, label={\texttt{pre x}}, right=of past] (x1) {};
    \dep[-stealth]{past}{x1};
    \node[realized, below=of x1] (o1) {};
    \dep[-stealth]{x1}{o1};
    \link[->, densely dotted]{x1}{o1};
    \node[marginalized, label={\texttt{\phantom{p}x}}, right=of x1] (x2) {};
    \dep[-stealth]{x1}{x2};
    \link[->, densely dotted]{x1}{x2};
    \link[->, densely dotted]{x2}{x1}[\xmark];
    \node[initialized, label={[label distance=0.1ex]0:\texttt{acc}}, below=of x2] (o2) {};
    \dep[-stealth]{x2}{o2};
    \link[->, densely dotted]{o2}{x2};
\end{tikzpicture}
\caption{$\dsllmarginalizeshort(\zlm!x!)$}
\label{fig:ds_kalman_margx}
\end{subfigure}
&
\begin{subfigure}[b]{.12\textwidth}
\centering
\begin{tikzpicture}[node distance=0.4cm and 0.4cm,every node/.style={font=\scriptsize, align=center}]
    \node (past) {{\normalsize\dots}};
    \node[marginalized, label={\texttt{pre x}}, right=of past] (x1) {};
    \dep[-stealth]{past}{x1};
    \node[realized, below=of x1] (o1) {};
    \dep[-stealth]{x1}{o1};
    \link[->, densely dotted]{x1}{o1};
    \node[marginalized, label={\texttt{\phantom{p}x}}, right=of x1] (x2) {};
    \dep[-stealth]{x1}{x2};
    \link[->, densely dotted]{x1}{x2};
    \node[marginalized, label={[label distance=0.1ex]0:\texttt{acc}}, below=of x2] (o2) {};
    \dep[-stealth]{x2}{o2};
    \link[->, densely dotted]{x2}{o2};
    \link[->, densely dotted]{o2}{x2}[\xmark];
\end{tikzpicture}
\caption{$\dsllmarginalizeshort(\zlm!y!)$}
\label{fig:ds_kalman_margo}
\end{subfigure}
& \begin{subfigure}[b]{.12\textwidth}
\centering
\begin{tikzpicture}[node distance=0.4cm and 0.4cm, every node/.style={font=\scriptsize}, ]
    \node (past) {{\normalsize\dots}};
    \node[marginalized, label={\texttt{pre x}}, right=of past] (x1) {};
    \dep[-stealth]{past}{x1};
    \node[realized, below=of x1] (o1) {};
    \dep[-stealth]{x1}{o1};
    \link[->, densely dotted]{x1}{o1};
    \node[marginalized, label={\texttt{\phantom{p}x}}, right=of x1] (x2) {};
    \dep[-stealth]{x1}{x2};
    \link[->, densely dotted]{x1}{x2};
    \node[realized, label={[label distance=0.1ex]0:\texttt{acc}}, below=of x2] (o2) {};
    \dep[-stealth]{x2}{o2};
    \link[->, densely dotted]{x2}{o2};
\end{tikzpicture}
\caption{$\dsllrealize(\zlm!y!)$}
\label{fig:ds_kalman_realo}
\end{subfigure}
& \begin{subfigure}[b]{.12\textwidth}
\centering
\begin{tikzpicture}[node distance=0.4cm and 0.4cm,every node/.style={font=\scriptsize, align=center}]
    \node (past) {{\normalsize\dots}};
    \node[marginalized, right=of past] (x1) {};
    \path let \p1 = (x1) in node  at (\x1,\y1) {\rmark};
    \dep[-stealth]{past}{x1};
    \node[realized, below=of x1] (o1) {};
    \path let \p2 = (o1) in node  at (\x2,\y2) {\rmark};
    \dep[-stealth]{x1}{o1};
    \link[->, densely dotted]{x1}{o1};
    \node[marginalized, label={\texttt{pre x}}, right=of x1] (x2) {};
    \dep[-stealth]{x1}{x2};
    \link[->, densely dotted]{x1}{x2};
    \node[realized, label={[label distance=0.1ex]0:\phantom{\texttt{acc}}}, below=of x2] (o2) {};
    \dep[-stealth]{x2}{o2};
    \link[->, densely dotted]{x2}{o2};
\end{tikzpicture}
\caption{Update state}
\label{fig:ds_kalman_update}
\end{subfigure}
\end{tabular}
\end{footnotesize}
\caption{One step of the robot example of \Cref{fig:lqr_src} with SDS.
Plain arrows represent dependencies and dotted arrows represent pointers at runtime.
The \zlm{sample} statement adds the initialized nodes~\zlm{x}~(b).
The \zlm{observe} statement adds the initialized node~\zlm{a}~(c),
triggers the marginalizations of~\zlm{x}~(d) and~\zlm{a}~(e),
and assigns to~\zlm{a} its observed value~(f).
When the state is updated, the value \zlm{x} becomes \zlm{pre x}. 
The previous values are not referenced anymore and can be removed~(g).}
\label{fig:bds_kalman}
\end{figure*}

\paragraph{Inference.}
The inference scheme is similar to the particle filter.
At each step, the inference draws~$N$ states from~$\sigma$ to execute the transition function.
For each particle, execution starts with the graph computed at the previous step and returns a pair of symbolic terms (result, state), the particle weight, and the updated graph.
The function~$\dsdistribution(e, g)$ returns the distribution corresponding to the expression~$e$ without altering the graph.
Results are then aggregated in a mixture distribution (concrete values are lifted to Dirac distribution) where the distribution~$d_i$ operates on the value component of~$U$ and we use the pair (symbolic term, graph) computed by the transition function for the distribution of state.
This distribution is then split into the distribution of returned values and the distribution of next states.
\vspace{-0.5em}
\begin{center}
  \begin{small}
  $
  \begin{array}{l}
  \sem{\zlm!infer(fun\ $s$ ->\ $e$,\ $\sigma$)!}_\gamma = \\ \quad
    \begin{array}[t]{@{}l@{}}
    \letin{\mu =
      \begin{array}[t]{@{}l@{}}
      \fun{U} \sum\limits_{i=1}^N \;
        \begin{array}[t]{@{}l@{}}
          \letin{s_i, g_i = \rm{draw}(\sem{\sigma}_\gamma)}\\
          \letin{(e_i, s_i'), w_i, g_i' = \psem{\zlm!fun\ $s$ ->\ $e$!}_{\gamma, 1, g_i}(s_i)}\\
          \letin{d_i = \dsdistribution(e_i, g_i')}\\
          \overline{w_i} *  d_i(\pi_1(U)) * \delta_{s_i', g_i'}(\pi_2(U))
        \end{array}
      \end{array}\\}
      (\pi_{1*}(\mu), \pi_{2*}(\mu))
    \end{array}
  \end{array}
  $
  \end{small}
\end{center}

\subsection{Streaming Delayed Sampling}
\label{sec:sds}

As illustrated in \Cref{sec:example-inference}, a notable challenge with the traditional delayed sampling algorithm is that graph grows linearly in the number of samples. 
In the original formulation of delayed sampling~\cite{murray18delayed_sampling},
graph edges are only removed when a node is realized.
All nodes that have been neither sampled nor observed are thus kept in the graph even if they are no longer referenced by the program.
In a reactive programming context, such an implementation can consume unbounded memory.

\paragraph{Bounded Delayed Sampling.}
A simple mitigation is to limit the scope of symbolic computations to one time step and discard the graph at the end of each time step.
We call this inference technique \emph{bounded delayed sampling}~(BDS).

BDS performs symbolic computations during the execution of a time step, and whenever possible, delays the sampling until the end of the instant.
Like the particle filter, BDS guarantees a bounded-memory execution.
For each particle, the size of the graph is bounded by the number of variables introduced during a time step, which by construction, is bounded for any valid \ProbZelus program. 

\paragraph{Streaming Delayed Sampling.}
Compared to the original delayed sampling algorithm, BDS loses the ability to perform symbolic computations using variables defined at different time steps.
This can result in a significant loss of precision for models with inter-steps dependencies such as the robot example of \Cref{fig:lqr_src}.
To adapt delayed sampling to streaming settings while keeping its maximum accuracy, we designed a delayed sampler that is \emph{pointer-minimal} where nodes that are no longer referenced by the program can be eventually removed.
We call this inference \emph{streaming delayed sampling}~(SDS).
SDS enables partial exact inference in bounded memory for a large class of models.

In the original implementation of delayed sampling, graph nodes need to access their parents and children.
Marginalization requires access to the parent to incorporate the ancestor distribution.
Realization requires access to both the parent and the children of a node to update their respective distributions with the concrete value assigned to the node.

In the pointer minimal implementation, initialized nodes only keep a pointer to their parent to follow the ancestor chain during marginalization and marginalized nodes only keep a pointer to a marginalized child (Delayed Sampling imposes that a node always has at most one marginalized child).
Compared to the original implementation, marginal nodes only keep track of one child, and marginalization turns backward pointers to the parent node into forward pointers to the marginalized child.
Note that this implementation prevents updating the children when the parent is realized, and prevents conditioning a parent when a child is realized.
Instead, when marginalizing a node, the sampler first checks if the parent is realized to apply the update.
Symmetrically, to realize a node, the sampler first checks if the children are realized and, if necessary, conditions the distribution before assigning the concrete value.

\Cref{fig:bds_kalman} shows the evolution of the graph during one step for the robot example of \Cref{fig:lqr_src}.
At the end of the step, the value of \zl!pre x! is updated.
The previous value is not referenced anymore by the program and the node can be removed from the graph.
In the original implementation, backward pointers between marginalized nodes prevent the collection (see \Cref{fig:ds_kalman_naive}).

\paragraph{Limitations.}
With SDS, models like the robot example that only maintain bounded chains of dependencies between variables are guaranteed to be executed in bounded memory.
The class of models that can be executed in bounded memory with our pointer-minimal implementation thus already comprises state-space models like Kalman filters, and models for learning unknown constant parameters from a series of observations (e.g., computing the bias of a coin from a succession of flips) where variables introduced at each step are immediately realized.

However, unbounded chains can still be formed if the program keeps a reference to a constant variable that is never realized.
In the following example, at each step, a new variable \zl{x} is added as a child of \zl{pre x} and then marginalized for the observation.
But \zl{p1} keeps a reference to the initial variable \zl{i} which is never realized and thus forms an unbounded chain between \zl{i} and \zl{x}.

\begin{lstlisting}
let proba p1 (xo, obs) = (i, o) where
  rec init i = sample (gaussian (xo, 1.))
  and x = sample (gaussian (i -> pre x, 1.))
  and () = observe (gaussian (x, 1.), obs)
\end{lstlisting}

In addition, in \ProbZelus, at each step the inference returns a snapshot of the current distribution without forcing the realization of any node in the graph.
Compared to the original delayed sampling implementation, initialized nodes can be inspected without being realized.
It is thus possible to form unbounded chains of initialized nodes which cannot be pruned even when nodes are no longer referenced in the program due to the backward pointers to the parent in initialized nodes.
In the following example, at each step, the variable \zl{x} is added as a child of \zl{pre x}, but without observation these variables remain initialized for ever.

\begin{lstlisting}
let proba p2 (xo) = x where
  rec x = sample (gaussian (xo -> pre x, 1.))
\end{lstlisting}

To mitigate these issues, we can force the realization of trailing nodes at each step as in \emph{bounded delayed sampling} or use a sliding window.
Alternatively, the $\dsvalue$ (\zl{eval}) function is available to the programmer and can be used to implement any strategy to force the evaluation of the nodes.
For instance, the previous example can be adapted to execute in bounded~memory:
\begin{lstlisting}
let proba p2' (xo) = x where
  rec x = sample (gaussian (xo -> pre x, 1.))
  and _ = eval (xo -> pre x)
\end{lstlisting}

 \section{Evaluation}
\label{sec:eval}
\pgfplotstableread{
x	y	y-min	y-max	p
Beta-Bernoulli	40.965897	0.7262259999999969	1.0591480000000004	200
Gaussian-Gaussian	382.816989	18.293389999999988	26.311890000000005	900
Kalman-1D	0.351831	0.0018060000000000298	0.025310999999999972	1
Outlier	32.408477	0.3080449999999999	0.972565000000003	65
Robot	212.052246	13.782837999999998	76.51051199999998	7
SLAM	9900	0.0	0.0	{> 9,500\phantom{XX}}
MTT	9900	0.0	0.0	{> 750\phantom{XX}}
}{\bdstimeout}

\pgfplotstableread{
x	y	y-min	y-max	p
Beta-Bernoulli	0.478769	0.01849400000000001	0.08760500000000004	1
Gaussian-Gaussian	62.674843	2.6271850000000043	2.361094999999999	150
Kalman-1D	0.309388	0.006307000000000007	0.021992999999999985	1
Outlier	44.302562	0.7344000000000008	2.684731999999997	65
Robot	74.901168	4.256067000000002	76.76723	1
SLAM	2066.227669	376.7682419999999	377.5444080000002	700
MTT	2712.195982	334.111367	781.8919329999999	60
}{\sdstimeout}

\pgfplotstableread{
x	y	y-min	y-max	p
Beta-Bernoulli	22.350183	0.615006000000001	1.002883999999998	200
Gaussian-Gaussian	899.991385	23.69422700000007	39.55073499999992	3,500
Kalman-1D	3.111199	0.0440520000000002	1.8031040000000003	15
Outlier	198.834664	1.7712370000000135	3.006056000000001	650
Robot	604.295107	22.403173000000038	151.61449099999993	85
SLAM	9900	0.0	0.0	{> 15,000\phantom{XX}}
MTT	9900	0.0	0.0	{> 2,500\phantom{XX}}
}{\particlestimeout}

\begin{figure*}
\centering
\begin{tikzpicture}
\begin{axis} [
    ybar,
    ymode=log,
    ymax=9900,
after end axis/.code={ \draw [ultra thick, white, decoration={snake, amplitude=1pt}, decorate] (rel axis cs:0,0.95) -- (rel axis cs:1,0.95);
        },
    clip=false,
    width=0.95\textwidth,
    height=15.5em,
    log origin=infty,
    major grid style={draw=black!30},
    ymajorgrids, tick align=outside,
    axis x line*=bottom,
    y axis line style={opacity=0},
    ylabel = {\small\sffamily Total Execution Time (ms)},
    max space between ticks=20,
    x axis line style={opacity=0},
    x tick label style={major tick length=0pt},
    xticklabels={Beta-\\[-0.25em]Bernoulli, Gaussian-\\[-0.25em]Gaussian, Kalman-1D, Outlier, Robot, SLAM, MTT},
    xticklabel style={align=center},
    xtick=data,
    tick label style={font=\footnotesize\sffamily},
    visualization depends on=y \as \rawy,
    every node near coord/.append style={
        font=\footnotesize,
        rotate=90,
        anchor=east,
        xshift=-0.2em,
    },
    nodes near coords,
    point meta=explicit symbolic
]

\addplot [draw=none, 
          fill=pfcolor!50,
          error bars/.cd, 
          error bar style={pfcolor!50!black}] 
  plot [error bars/.cd, y dir=both, y explicit]
  table [x expr=\coordindex, y error plus=y-max, y error minus=y-min, meta=p]{\particlestimeout};
    
\addplot [draw=none,
          fill=bdscolor!50,
          error bars/.cd, 
          error bar style={bdscolor!50!black}]  
  plot [error bars/.cd, y dir=both, y explicit]
  table [x expr=\coordindex, y error plus=y-max, y error minus=y-min, meta=p]{\bdstimeout};
  
\addplot [draw=none,
          fill=pmdscolor!50,
          error bars/.cd, 
          error bar style={pmdscolor!50!black}]
  plot [error bars/.cd, y dir=both, y explicit]
  table [x expr=\coordindex, y error plus=y-max, y error minus=y-min, meta=p]{\sdstimeout};
  
\end{axis}  
\node at (6, 3.5) {\small\sf 
  {\color{pfcolor!50}{$\blacksquare$}}~PF $\quad$
  {\color{bdscolor!50}{$\blacksquare$}}~BDS $\quad$
  {\color{pmdscolor!50}{$\blacksquare$}}~SDS}; 
\end{tikzpicture}
\vspace{-0.4em}
\caption{Execution time comparison when $90$\% of $1000$~runs reach an accuracy similar to the baseline (median accuracy of SDS with $1000$ particles) after $500$ steps. The number of particles required to reach this accuracy is shown on top of the bars. The error bars show the $10$th and $90$th percentiles.}
\label{fig:evaluation}
\end{figure*}

\begin{table}
\caption{Benchmarks with: inference of \emph{fixed} parameters from observations, estimation of a \emph{moving} state (state-space model), and \emph{inference-in-the-loop}~(IITL).}
\begin{small}
\begin{tabular}{@{}lcccc@{}}
&  Fixed
&  Moving
&  IITL
&  Metric \\
\toprule
\textsf{Beta-Bernoulli} & \checkmark & & & MSE\\\textsf{Gaussian-Gaussian} & \checkmark &  & & MSE\\\textsf{Kalman-1D} & & \checkmark & & MSE\\\textsf{Outlier} & \checkmark & \checkmark  & & MSE\\\textsf{Robot} & & \checkmark & \checkmark  & LQR\\\textsf{SLAM} & \checkmark & \checkmark &  \checkmark & MSE\\\textsf{MTT} &  & \checkmark &  & MOTA* \\
\bottomrule 
\end{tabular}
\end{small}
\label{tab:benchmarks}
\end{table}

We next evaluate the performance of \ProbZelus on a set of benchmarks that illustrate multiple aspects of the language: inferring fixed parameters from observations, online trajectory estimation, inference-in-the-loop.
For these examples, we compare the accuracy and the latency cost of the three inference techniques: PF, BDS, and SDS.
\ifextended\Cref{app:implementation}\else Appendix~E\fi{} details our implementation as an extension of the Zelus compiler.

\paragraph{Benchmarks.}
The models used in the experiments are summarized in \Cref{tab:benchmarks} (a detailed description along with the code of the benchmarks is given in
\ifextended
\Cref{sec:benchmarks}\else
Appendix~F.1\fi).
Two models infer fixed parameters from a series of observations.
\textsf{Beta-Bernoulli} estimates the parameter of a Bernoulli distribution from a series of binary observations (e.g., the bias of a coin).
\textsf{Gaussian-Gaussian} estimates the mean and variance of a Gaussian distribution from a series of observations.
The accuracy metric is the \emph{Mean Squared Error}~(MSE) of the inferred parameters compared to their exact values.

Two models infer the state of a moving agent from noisy observations.
\textsf{Kalman-1D} is a one-dimensional Kalman filter that models an agent that estimates its trajectory from noisy observations.
\textsf{Outlier} adapted from~\cite{ep} models the same situation as \textsf{Kalman-1D}, but the sensor occasionally produces invalid readings.
This models infer both the trajectory of the agent, and the bias of the sensor.
The accuracy metric is the \emph{Mean Squared Error}~(MSE) of the inferred trajectory compared to the exact positions.

Two models use inference-in-the-loop~(IITL).
\textsf{Robot} is the robot example of \Cref{sec:example} and the accuracy metric is the LQR loss.
\textsf{SLAM} (\emph{Simultaneous Localization and Mapping}) adapted from~\cite{rbpf} models an agent that estimates its position and a map of its environment. In this simplified version, the agent moves in a one-dimensional grid where each cell is either black or white.
The robot's wheels may slip causing the robot to unknowingly stay in place (noisy motion), and the sensor is not perfect and may accidentally report the wrong color (noisy observations).
At each step the robot uses the inferred position to decide its next move.
The accuracy metric is the MSE of both the position and the map.

\textsf{MTT} (\emph{Multi-Target Tracker}) adapted from~\cite{MurrayS18} is a model where there are a variable number of targets with linear-Gaussian motion models with a state space of 2D position and velocity, producing linear-Gaussian measurements of the position at each time step.
Targets randomly appear according to a Poisson process and each disappear with fixed probability at each step.
Measurements do not identify which target they came from, and ``clutter'' measurements that come not from targets but from some underlying distribution add to observations, complicating inference of which measurements are associated to which targets.
The accuracy metric is expected $\mathsf{MOTA*} = (1/\mathsf{MOTA}) - 1$ where~$\mathsf{MOTA} \in [0,1]$ is the \emph{Multiple Object Tracking Accuracy}~\cite{MOTA08}.

\paragraph{Experimental Setup.}
All the experiments were run on a server with $32$~CPUs ($2.60$~GHz) and $128$~GB memory.
We ran all the benchmarks for $500$ steps.
In all cases, the inference runs in bounded memory~(see
\ifextended
\Cref{sec:perf_memory}\else
Appendix~F.3\fi).

For each algorithm, we evaluated how much time it required to achieve 90\% of runs close to a loss target (out of $1000$ runs total):
\vspace{-0.5em}
$$
| \log (P_{90\%}(\mathit{loss}))  - \log (\mathit{loss}_{\mathit{target}}) | < 0.5.
$$

For each benchmark, the baseline is the median loss of SDS at 1000 particles as $\mathit{loss}_{\mathit{target}}$ for that benchmark.
We measured the number of particles required to achieve this loss, and then measured the total execution time at this particle count for $500$ steps (in \ifextended\Cref{sec:perf_accuracy}\else Appendix F.2\fi{} we also evaluate loss and step latency across a fixed range of particle counts).

\paragraph{Results.}
\Cref{fig:evaluation} shows the results. The height of each bar is the median total execution time, and the error bars are 90\% and 10\% quantiles, aggregated over 1000 runs. Each bar is labeled with the minimum number of particles required to achieve the accuracy threshold, accurate to 1.5 significant digits (100, 150, 200, 250, \dots).
We observe that SDS is able to compute an exact solution for \textsf{Beta-Bernoulli}, \textsf{Kalman-1D}, and \textsf{Robot}. In all these examples $1$ particle is already enough to reach the target accuracy.
Overall, the results show that the number of particles required to reach the desired accuracy with PF implies a significant slowdown compared to SDS.
Moreover, the \textsf{SLAM} and \textsf{MTT} benchmarks show that, in some cases, PF is not an option: the target accuracy was not reached with $15,000$ and $2,500$ particles, respectively, at which point PF was already $10$~times slower than SDS and we stopped the experiments.

As expected, BDS performance numbers are between those of PF and SDS.
At worst, when there is no possible intra-step symbolic computations (e.g., \textsf{Beta-Bernoulli}), BDS behaves like a particle filter and requires as many particles as PF.
At best, BDS performs as well as SDS (e.g., \textsf{Outlier}).

Additionally, \Cref{fig:evaluation} also shows that for a given number of particles, the overhead induced by managing the delayed sampling graph is significant.
Compared to BDS and SDS, depending on the benchmark, it is possible to use~$2$ to~$4$ times as many particles for PF with the same execution time.
However, this is not enough to match the gain in accuracy.

\paragraph{Alternative Baselines.}

The results presented in \Cref{fig:evaluation} do not quantify the speedup of SDS on the SLAM and MTT benchmarks because the other inference algorithms time out.
To evaluate speedups on these two benchmarks, we used PF as an alternative baseline instead of SDS.
\Cref{fig:evaluation-notimeout} presents the execution time of PF, BDS, and SDS to reach a loss \textit{close} to the median of PF with 2000 and 4000 particles.

We observe that SDS requires a much smaller number of particles to reach similar accuracy which translates into speedups ranging from $10^1$ (MTT-2000) to $10^4$ (SLAM-4000).
BDS requires either a similar or smaller numbers of particles. 
But the overhead introduced by the graph manipulations mostly translates in slowdowns compared to PF.

\pgfplotstableread{
x	y	y-min	y-max	p
SLAM-2000	389.403996	11.46191600000003	26.752634999999998	350
SLAM-4000	18756.56644	1206.8172259999992	1445.8157640000027	8,000
MTT-2000	10643.371589	656.5615460000008	2806.2434840000005	300
MTT-4000	13976.663295	743.723868000001	1332.2320170000003	400
}{\bds}

\pgfplotstableread{
x	y	y-min	y-max	p
SLAM-2000	0.557001	0.003850999999999938	0.041920999999999986	1
SLAM-4000	0.557001	0.003850999999999938	0.041920999999999986	1
MTT-2000	767.655314	128.46169999999995	381.4307110000001	20
MTT-4000	767.655314	128.46169999999995	381.4307110000001	20
}{\sds}

\pgfplotstableread{
x	y	y-min	y-max	p
SLAM-2000	90.748196	1.0917589999999961	2.019789000000003	300
SLAM-4000	21279.041365	1093.9340620000003	1441.1740959999988	15,000
MTT-2000	4158.015648	121.87468199999967	775.040731	350
MTT-4000	4978.815385	153.8039229999995	1365.9762330000003	400
}{\particles}

 \begin{figure}
\centering
\begin{tikzpicture}
  \begin{axis} [
    ybar,
    ymode=log,
    width=0.47\textwidth,
    height=15.5em,
    log origin=infty,
    major grid style={draw=black!30},
    ymajorgrids, tick align=outside,
    axis x line*=bottom,
    y axis line style={opacity=0},
    ylabel = {\small\sffamily Total Execution Time (ms)},
x axis line style={opacity=0},
    enlarge x limits=0.20,
    x tick label style={major tick length=0pt},
    xticklabels={SLAM\\[-0.25em]2000, SLAM\\[-0.25em]4000, MTT\\[-0.25em]2000, MTT\\[-0.25em]4000},
    xticklabel style={align=center},
    xtick=data,
    tick label style={font=\footnotesize\sffamily},
    visualization depends on=y \as \rawy,
    every node near coord/.append style={
        font=\footnotesize,
        rotate=90,
        anchor=east,
        xshift=-0.2em,
    },
    nodes near coords,
    point meta=explicit symbolic
]

\addplot [draw=none, 
          fill=pfcolor!50,
          error bars/.cd, 
          error bar style={pfcolor!50!black}] 
  plot [error bars/.cd, y dir=both, y explicit]
  table [x expr=\coordindex-1, y error plus=y-max, y error minus=y-min, meta=p]{\particles};
    
\addplot [draw=none,
          fill=bdscolor!50,
          error bars/.cd, 
          error bar style={bdscolor!50!black}]  
  plot [error bars/.cd, y dir=both, y explicit]
  table [x expr=\coordindex-1, y error plus=y-max, y error minus=y-min, meta=p]{\bds};
  
\addplot [draw=none,
          fill=pmdscolor!50,
          error bars/.cd, 
          error bar style={pmdscolor!50!black}]
  plot [error bars/.cd, y dir=both, y explicit]
  table [x expr=\coordindex-1, y error plus=y-max, y error minus=y-min, meta=p]{\sds};

\end{axis}  
\node at (3.5, 4) {\small\sf 
  {\color{pfcolor!50}{$\blacksquare$}}~PF $\quad$
  {\color{bdscolor!50}{$\blacksquare$}}~BDS $\quad$
  {\color{pmdscolor!50}{$\blacksquare$}}~SDS}; 
\end{tikzpicture}
\caption{
Execution time comparison with two different baselines: median accuracy of PF with $2000$ and $4000$ particles, respectively.}
\label{fig:evaluation-notimeout}
\end{figure}

 \section{Related Work}

\paragraph{Probabilistic Programming.}
Over the last few years there has been a growing interest on probabilistic programming languages.
Some languages like BUGS~\cite{lunn2009bugs}, Stan~\cite{carpenter2017stan}, or Augur~\cite{HuangTM17} offer optimized inference technique for a constrained subset of models.
Other languages like WebPPL~\cite{goodman_stuhlmuller_2014}, Edward~\cite{TranHSB0B17}, Pyro~\cite{BinghamCJOPKSSH19}, or Birch~\cite{MurrayS18} focus on expressivity allowing the specification of arbitrary complex models.
Compared to these languages, \ProbZelus can be used to program \emph{reactive models} that typically do not terminate, and inference can be run in parallel with deterministic components that interact with an environment. 

\paragraph{Reactive Languages with Uncertainty.}
Lutin is a language for describing non-deterministic reactive
systems for testing and simulation~\cite{raymond-lutin-2008}, but while Lutin 
supports weighted sampling to describe constrained random scenarios, it does not support inference.
ProPL~\cite{Pfeffer05} is a language to describe probabilistic models for process that evolve over a period of time.
This language also extends a probabilistic language with a notion of processes that can be composed in parallel, but compared to \ProbZelus, ProPL focuses on a constrained class of models that can be interpreted as \emph{Dynamic Bayesian Networks}~(DBN), and relies on standard DBN inference techniques.
In the same vein, CTPPL~\cite{Pfeffer09} is a language to describe continuous-time processes where the amount of time taken by a sub-process can be specified by a probabilistic model.
These models cannot be expressed in \ProbZelus which relies on the synchronous model of computation.
It would be interesting to investigate how to extend \ProbZelus to continuous-time models based on \zelus' support for ordinary differential equations~(ODE)~\cite{lucy:hscc13}.

\paragraph{Inference.}
Researchers have proposed streaming inference algorithms, including
variational~\cite{BroderickBWWJ13}, or
sampling-based~\cite{particlefilter,rbpf} approaches.
Popular languages like Stan, Edward, or Pyro, offer support to stream data through the model during inference to handle large datasets.
However, compared to \ProbZelus, the model must be defined a priori and does not evolve during the inference.

The Anglican and Birch probabilistic programming
languages support delayed sampling~\cite{murray18delayed_sampling}. 
These languages do not support streaming inference or reactive programming. Again, their interfaces only support inference on a complete probabilistic model.

 \section{Conclusion}

Modeling uncertainty is a primary element of control systems for tasks that operate under the assumption of a probabilistic model of their environment (e.g., object tracking). 
While synchronous languages have developed as a prominent way to develop control applications,
to date there has been limited work in these languages on programming language support for
modeling uncertainty.

In this paper we present \ProbZelus, the first synchronous probabilistic programming language that lifts 
emerging abstractions for probabilistic programming into the reactive setting thus enabling \emph{inference-in-the-loop}. 
Moreover, our streaming delayed sampling algorithm provides efficient semi-symbolic inference while
still satisfying a key requirement of control applications in that they must execute
with bounded resources.

Our results demonstrate that \ProbZelus enables us to write, in the very same source, a deterministic model for the control software and a probabilistic model for its behavior and environment with complex interactions between the two.

\ifextended
\else
\balance
\fi

\ifextended
\clearpage
\appendix

\section{\ProbZelus}
\label{app:probzelus}

In this section, we provide the complete definitions of the \ProbZelus type system and semantics for the kernel language introduced \Cref{sec:language} extended with the probabilistic operator \zl{factor($e$)} which is equivalent to \zl{observe(exp(1), $e$)}.
Intuitively, \zl{factor} can directly update the weight of the execution path with the value of an expression~$e$.

\subsection{Typing}
\label{app:typing}

\begin{figure*}
  \begin{center}
  \begin{small}
  \(
  \begin{array}{c}
\infrule
    {\type{G}{$e$}{t}{\typeD}}
    {\type{G}{$e$}{t}{\typeP}}
    \qquad
    \infrule{\typeof{c} = t}{\type{G}{$c$}{t}{\typeD}}
    \qquad
    \infrule{G(x) = t}{\type{G}{$x$}{t}{\typeD}}
    \qquad
    \infrule
    {\type{G}{$e_1$}{t_1}{k} \quad \type{G}{$e_2$}{t_2}{k}}
    {\type{G}{($e_1$,\ $e_2$)}{\typetimes{t_1}{t_2}}{k}}
    \qquad
    \infrule
    {\typeof{\mathit{op}} = \typefun{t_1}{\typeD}{t_2} \quad \type{G}{$e$}{t_1}{k}}
    {\type{G}{$\mathit{op}$($e$)}{t_2}{k}}
    \\[1em]
    \infrule
    {G(f) = \typefun{t_1}{k}{t_2} \quad \type{G}{$e$}{t_1}{\typeD}}
    {\type{G}{$f$($e$)}{t_2}{k}}
    \qquad
    \infrule{G(x) = t}{\type{G}{last\ $x$}{t}{\typeD}}
    \qquad
    \infrule
    {\type{G}{$E$}{G'}{k} \quad \type{G + G'}{$e$}{t}{k}}
    {\type{G}{$e$ where rec\ $E$}{t}{k}}
    \\[1em]
    \infrule
    {\type{G}{$e$}{\typebool}{k} \quad \type{G}{$e_1$}{t}{k} \quad \type{G}{$e_2$}{t}{k}}
    {\type{G}{present\ $e$ ->\ $e_1$ else\ $e_2$}{t}{k}}
    \qquad
    \infrule
    {\type{G}{$e_1$}{t}{k} \quad \type{G}{$e_2$}{\typebool}{k}}
    {\type{G}{reset\ $e_1$ every\ $e_2$}{t}{k}}
    \\[1em]
    \infrule
    {\type{G}{$e$}{\typesampler{t}}{\typeD}}
    {\type{G}{sample($e$)}{t}{\typeP}}
    \qquad
    \infrule
    {\type{G}{$e_1$}{\typedist{t}}{\typeD} \quad \type{G}{$e_2$}{t}{\typeD}}
    {\type{G}{observe($e_1$,\ $e_2$)}{\typeunit}{\typeP}}
    \qquad
    \infrule
    {\type{G}{$e$}{\typefloat}{\typeD}}
    {\type{G}{factor($e$)}{\typeunit}{\typeP}}
    \\[1em]
    \infrule
    {\type{G}{$e$}{t}{\typeP}}
    {\type{G}{infer($e$)}{\typesampler{t}}{\typeD}}
    \qquad
    \infrule
    {\type{G}{$e$}{\typedist{T}}{\typeD}}
    {\type{G}{$e$}{\typesampler{T}}{\typeD}}
    \\[1em]

\infrule
    {\type{G}{$e$}{t}{k}}
    {\type{G}{$x$ =\ $e$}{[t/x]}{k}}
    \qquad
    \infrule
    {\type{G}{$e$}{t}{k}}
    {\type{G}{init\ $x$ =\ $e$}{[t/x]}{k}}
    \qquad
    \infrule
    {\type{G + G_1 + G_2}{$E_1$}{G_1}{k} \quad \type{G + G_1 + G_2}{$E_2$}{G_2}{k}}
    {\type{G}{$E_1$ and\ $E_2$}{G_1 + G_2}{k}}
    \\[1em]
  
\infrule
    {\type{G+[t_1/x]}{$e$}{t_2}{\typeD}}
    {\type{G}{let node\ $f$\ $x$ =\ $e$}{G + [\typefun{t_1}{\typeD}{t_2}/f]}{\typeD}}
    \qquad
    \infrule
    {\type{G+[t_1/x]}{$e$}{t_2}{\typeP}}
    {\type{G}{let proba\ $f$\ $x$ =\ $e$}{G + [\typefun{t_1}{\typeP}{t_2}/f]}{\typeD}}
    \qquad
    \infrule
    {\type{G}{$d_1$}{G_1}{\typeD} \quad \type{G_1}{$d_2$}{G_2}{\typeD}}
    {\type{G}{$d_1$\ $d_2$}{G_2}{\typeD}}
  \end{array}
  \)
\end{small}
\end{center}
\caption{Typing with deterministic and probabilistic kinds.}
  \label{fig:typing}
\end{figure*}

The type system that discriminates deterministic from probabilistic expressions is defined \Cref{fig:typing}. To simplify the presentation, we ignored datatypes polymorphism.

The sub-typing rule indicates that any deterministic expression can be lifted into a probabilistic one.
Expressions like constants, variables, and \zl{last} are deterministic.
The kind of classic \zelus expressions (pairs, $\textit{op}$, local definitions, \zl{present}, and \zl{reset}) is the kind of their body.
Similarly, the kind of equations is the kind of their defining expression, and parallel composition imposes the same kind for all the equations.
Note that it is always possible to compose deterministic and probabilistic computations. For rules where all sub-expressions share the same kind~$k$ we enforce the use of the sub-typing rule to lift deterministic expressions.

The expressions \zl{sample}, \zl{factor}, and \zl{observe} are probabilistic.
The transition from probabilistic to deterministic is realized via \zl{infer}: a deterministic expression whose body is always probabilistic.
Probabilistic expressions can thus only occur under an~\zl{infer}.

\paragraph{Other Static Analyses.}
The \zelus compiler statically checks initialization, and causality of the program~\cite{zelus-manual}.
These two analyses guarantee that there exists a schedule of parallel equations that makes the streams productive.
Extending these analyses to the probabilistic operators is straightforward: probabilistic operators can be treated as external operators.

\subsection{Co-iterative Semantics}
\label{app:semantics}

\begin{figure*}
  \vspace{4em}
  $
  \begin{small}
  \begin{array}{lcl}
  \sem{c}^i_\gamma &=& ()\\
  \sem{c}^s_\gamma &=& \fun{s} (c,s)
  \\[0.5em]
  \sem{x}^i_\gamma &=& ()\\
  \sem{x}^s_\gamma &=& \fun{s} (\gamma(x),s)
  \\[0.5em]
  \sem{\zlm!last\ $x$!}^i_\gamma &=& ()\\
  \sem{\zlm!last\ $x$!}^s_\gamma &=& \fun{s} (\gamma(\zlm!$x$_last!),s)
  \\[0.5em]
  \sem{\zlm!($e_1$, $e_2$)!}^i_\gamma &=&
    (\sem{e_1}^i_\gamma, \sem{e_2}^i_\gamma)\\
  \sem{\zlm!($e_1$, $e_2$)!}^s_\gamma &=&
    \fun{(s_1, s_2)}
      \begin{array}[t]{@{}l@{}}
        \letin{v_1,s_1' = \sem{e_1}^s_\gamma(s_1)}\\
        \letin{v_2,s_2' = \sem{e_2}^s_\gamma(s_2)}
        (\zlm!($v_1$,$v_2$)!,(s_1',s_2'))\\
      \end{array}
  \\[0.5em]
  \sem{\zlm!$\mathit{op}$($e$)!}^i_\gamma &=&
    \sem{e}^i_\gamma\\
  \sem{\zlm!$\mathit{op}$($e$)!}^s_\gamma &=&
    \fun{s}
      \letin{v,s' = \sem{e}^s_\gamma(s)}
        (\zlm!$\mathit{op}$($v$)!, s')
  \\[0.5em]
  \sem{\zlm!$f$($e$)!}^i_\gamma & = &
    (\sem{e}^i_\gamma, \gamma(f\zlm!_init!))
  \\
  \sem{\zlm!$f$($e$)!}^s_\gamma & = &
    \fun{(s_1, s_2)}
    \begin{array}[t]{@{}l@{}}
      \letin{v_1,s_1' = \sem{e}^s_\gamma(s_1)}\\
      \letin{v_2,s_2' = \gamma(f\zlm!_step!)(v_1)(s_2)}
      (v_2,(s_1',s_2'))
    \end{array}
  \\[2.25em]
  \sem{\zlm!present\ $e$ ->\ $e_1$ else\ $e_2$!}^i_\gamma &=& 
    (\sem{e}^i_\gamma, \sem{e_1}^i_\gamma, \sem{e_2}^i_\gamma)\\
  \sem{\zlm!present\ $e$ ->\ $e_1$ else\ $e_2$!}^s_\gamma &=& 
    \begin{array}[t]{@{}l@{}}
    \fun{(s, s_1, s_2)} \letin{v,s' = \sem{e}^s_\gamma(s)}\\
      \quad \mathit{if} \; v \; \begin{array}[t]{@{}l@{}} 
      \mathit{then} \; 
        \letin{v_1,s_1' = \sem{e_1}^s_\gamma(s_1)}
        (v_1, (s', s_1', s_2))\\
      \mathit{else} \;\;
        \letin{v_2,s_2' = \sem{e_2}^s_\gamma(s_2)}
        (v_2, (s', s_1, s_2'))
      \end{array}
    \end{array}
  \\[4em]
  \sem{\zlm!reset\ $e_1$ every\ $e_2$!}^i_\gamma & = &
    (\sem{e_1}^i_\gamma, \sem{e_1}^i_\gamma, \sem{e_2}^i_\gamma)\\
  \sem{\zlm!reset\ $e_1$ every\ $e_2$!}^s_\gamma & = &
    \fun{(s_0, s_1, s_2)}
    \begin{array}[t]{@{}l@{}}
      \letin{v_2,s_2' = \sem{e_2}^s_\gamma(s_2)}\\
      \letin{v_1,s_1' = \sem{e_1}^s_\gamma(\mathit{if}\; v_2\; \mathit{then}\; s_0\; \mathit{else}\; s_1)}\\
      (\zlm!$v_1$!,(s_0,s_1',s_2'))\\
    \end{array}
  \\[0.5em]
  \left \llbracket
  \begin{minipage}[t][3.1em]{14em}
  \begin{lstlisting}[aboveskip=-3.5em, xleftmargin=0pt, xrightmargin=0pt]
  $e$ where 
    rec init $x_1$ = $c_1$ and ... 
    and init $x_k$ = $c_k$
    and $y_1$ = $e_1$ and ... 
    and $y_n$ = $e_n$        
  \end{lstlisting}
  \end{minipage}
  \right \rrbracket^i_\gamma
  &=&
  \left(
  \begin{array}{@{}c@{}}
  (c_1, \dots, c_k),\\ 
  (\sem{e_1}^i_\gamma, \dots, \sem{e_n}^i_\gamma),\\
  \sem{e}^i_\gamma
  \end{array}
  \right)\\
  \left \llbracket
  \begin{minipage}[t][3.1em]{14em}
  \begin{lstlisting}[aboveskip=-3.5em, xleftmargin=0pt, xrightmargin=0pt]
  $e$ where 
    rec init $x_1$ = $c_1$ and ... 
    and init $x_k$ = $c_k$
    and $y_1$ = $e_1$ and ... 
    and $y_n$ = $e_n$        
  \end{lstlisting}
  \end{minipage}
  \right \rrbracket^s_\gamma
  &=&
  \begin{array}{l}
  \fun{((m_1, \dots, m_k), (s_1, \dots, s_n), s)}\\ 
  \qquad \begin{array}[t]{@{}l@{}}
  \letin{\gamma_1 = \gamma[m_1/\zlm!$x_1$_last!]}\\
  \dots \\
  \letin{\gamma_k = \gamma_{k-1}[m_k/\zlm!$x_k$_last!]}\\
  \letin{v_1, s_1' = \sem{e_1}^s_{\gamma_k}(s_1)}
  \letin{\gamma'_1 = \gamma_k[v_1/y_1]}\\
  \dots\\
  \letin{v_n, s_n' = \sem{e_n}^s_{\gamma'_{n-1}}(s_n)}
  \letin{\gamma'_n = \gamma'_{n-1}[v_n/y_n]}\\
  \letin{v, s' = \sem{e}^s_{\gamma'_n}(s)}\\
  v, ((\gamma'_n[x_1], \dots, \gamma'_n[x_k]), (s_1', \dots, s_n'), s')
  \end{array}
  \end{array}
  \\[7.5em]

  \sem{\zlm!let\ node\ $f$\ x =\ $e$!}_\gamma &=& 
  \gamma[\sem{e}^i_{\gamma}/f\zlm!_init!, \quad \fun{v}\fun{s}\sem{e}^s_{\gamma[v/\zlmm!x!]}/f\zlm!_step!]
\\[0.5em]
\sem{\zlm!let\ proba\ $f$\ x =\ $e$!}_\gamma &=&
  \gamma[\psem{e}^i_{\gamma}/f\zlm!_init!, \quad \fun{v}\fun{s}\psem{e}^s_{\gamma[v/\zlmm!x!]}/f\zlm!_step!]
\\[0.5em]
  
  \sem{d_1 \; d_2}_\gamma &=&
    \letin{\gamma_1 = \sem{d_1}_\gamma} \sem{d_2}_{\gamma_1}
  \end{array}
  \end{small}
  $
  \caption{Co-iterative semantics of deterministic \ProbZelus programs. For local definitions each initialized variable is defined in a subsequent equation, i.e., $\{x_i\}_{1..k} \cap \{y_j\}_{1..n} = \{x_i\}_{1..k}$.}
  \label{fig:sem_deterministic_full}
  \vspace{2em}
\end{figure*}

\begin{figure*}
  $
  \begin{small}
  \begin{array}{lcl}
  \psem{e}^i_\gamma &=& \sem{e}^i_\gamma
    \qquad \mathit{if}\ \kindof{e} = \typeD \\
  \psem{e}^s_\gamma &=& 
    \fun{s} \fun{U} \delta_{\sem{e}^s_\gamma(s)}(U)
    \qquad \mathit{if}\ \kindof{e} = {\typeD}
  \\ &=& 
    \fun{s} \fun{U}
      \left\{\begin{array}{@{}l@{}}
        1 \; \mathit{if} \; \sem{e}^s_\gamma(s) \in U\\
        0 \; \mathit{otherwise}
      \end{array}
      \right.
  \\[1.5em]
  \psem{\zlm!($e_1$, $e_2$)!}^i_\gamma &=&
    (\psem{e_1}^i_\gamma, \psem{e_2}^i_\gamma)\\
  \psem{\zlm!($e_1$, $e_2$)!}^s_\gamma &=&
    \fun{(s_1, s_2)} \fun{U}
      \begin{array}[t]{@{}l@{}}
        \letin{\mu_1 = \psem{e_1}^s_\gamma(s_1)}\\
        \int\mu_1(dv_1,ds_1')\;
        \begin{array}[t]{@{}l@{}}
          \letin{\mu_2 = \psem{e_2}^s_\gamma(s_2)}\\
          \int\mu_2(dv_2,ds_2')\;
            \delta_{\zlmm!($v_1$,$v_2$)!, (s_1', s_2')}(U)
        \end{array}
      \end{array}
  \\[0.5em]
  \psem{\zlm!$\mathit{op}$($e$)!}^i_\gamma &=&
    \psem{e}^i_\gamma\\
  \psem{\zlm!$\mathit{op}$($e$)!}^s_\gamma &=&
    \fun{s} \fun{U}
      \letin{\mu = \psem{e}^s_\gamma(s)}
      \int\mu(dv,ds')\;
        \delta_{\zlmm!$\mathit{op}$($v$)!, s'}(U)
  \\[0.5em]
  \psem{\zlm!$f$($e$)!}^i_\gamma & = &
    (\sem{e}^i_\gamma, \gamma(f\zlm!_init!))
  \\
  \psem{\zlm!$f$($e$)!}^s_\gamma & = &
    \fun{(s_1, s_2)} \fun{U}
    \begin{array}[t]{@{}l@{}}
      \letin{v_1,s_1' = \sem{e}^s_\gamma(s_1)}\\
      \letin{\mu_2 = \gamma(f\zlm!_step!)(v_1)(s_1)}
      \int\mu_2(dv_2,ds_2')\;
        \delta_{v_2,(s_1',s_2')}(U)
    \end{array}
  \\[2.25em]
  \psem{\zlm!present\ $e$ ->\ $e_1$ else\ $e_2$!}^i_\gamma &=& 
    (\psem{e}^i_\gamma, \psem{e_1}^i_\gamma, \psem{e_2}^i_\gamma)\\
  \psem{\zlm!present\ $e$ ->\ $e_1$ else\ $e_2$!}^s_\gamma &=& 
    \begin{array}[t]{@{}l@{}}
    \fun{(s, s_1, s_2)} \fun{U} \\ \qquad
      \begin{array}[t]{@{}l@{}}
        \letin{\mu = \psem{e}^s_\gamma(s)}\\
        \int\mu(dv,ds')\;
        \begin{array}[t]{@{}l@{}}
          \mathit{if} \; v \\
          \mathit{then} \; 
            \letin{\mu_1 = \psem{e_1}^s_\gamma(s_1)}
            \int\mu_1(dv_1, ds_1')\;
            \delta_{v_1, (s', s_1', s_2)}(U)\\
          \mathit{else} \;\;
            \letin{\mu_2 = \sem{e_2}^s_\gamma(s_2)}
            \int\mu_2(dv_2, ds_2')\;
            \delta_{v_2, (s', s_1, s_2')}(U)
        \end{array}
      \end{array}
    \end{array}
  \\[6em]
  \psem{\zlm!reset\ $e_1$ every\ $e_2$!}^i_\gamma & = &
    (\psem{e_1}^i_\gamma, \psem{e_1}^i_\gamma, \psem{e_2}^i_\gamma)\\
  \psem{\zlm!reset\ $e_1$ every\ $e_2$!}^s_\gamma & = &
    \fun{(s_0, s_1, s_2)} \fun{U}
    \begin{array}[t]{@{}l@{}}
      \letin{\mu_2 = \sem{e_2}^s_\gamma(s_2)}\\
      \int\mu(dv_2,ds_2')\;
      \begin{array}[t]{@{}l@{}}
        \letin{\mu_1 = \sem{e_1}^s_\gamma(\mathit{if}\; v_2\; \mathit{then}\; s_0\; \mathit{else}\; s_1)}\\
        \int\mu(dv_1,ds_1')\;
        \begin{array}[t]{@{}l@{}}
          \delta_{\zlmm!$v_1$!,(s_0,s_1',s_2')}(U)
        \end{array}
      \end{array}
    \end{array}
  \\[0.5em]
  \left \{ \mkern-6.5mu \left [
  \begin{minipage}[t][3.1em]{14em}
  \begin{lstlisting}[aboveskip=-3.5em, xleftmargin=0pt, xrightmargin=0pt]
  $e$ where 
    rec init $x_1$ = $c_1$ and ... 
    and init $x_k$ = $c_k$
    and $y_1$ = $e_1$ and ... 
    and $y_n$ = $e_n$        
  \end{lstlisting}
  \end{minipage}
  \right ] \mkern-6.5mu \right \}^i_\gamma
  &=&
  \left(
  \begin{array}{@{}c@{}}
  (c_1, \dots, c_k),\\ 
  (\psem{e_1}^i_\gamma, \dots, \psem{e_n}^i_\gamma),\\
  \psem{e}^i_\gamma
  \end{array}
  \right)\\
  \left \{ \mkern-6.5mu \left [
  \begin{minipage}[t][3.1em]{14em}
  \begin{lstlisting}[aboveskip=-3.5em, xleftmargin=0pt,   xrightmargin=0pt]
  $e$ where 
    rec init $x_1$ = $c_1$ and ... 
    and init $x_k$ = $c_k$
    and $y_1$ = $e_1$ and ... 
    and $y_n$ = $e_n$        
  \end{lstlisting}
  \end{minipage}
  \right ] \mkern-6.5mu \right \}^s_\gamma
  &=&
  \begin{array}{@{}l}
  \fun{((m_1, \dots, m_k), (s_1, \dots, s_n), s)} \fun{U}\\ 
  \quad \begin{array}[t]{@{}l@{}}
  \letin{\gamma_1 = \gamma[m_1/\zlm!$x_1$_last!]}
  \dots
  \letin{\gamma_k = \gamma_{k-1}[m_k/\zlm!$x_k$_last!]}\\
  \letin{\mu_1 = \psem{e_1}^s_{\gamma_k}(s_1)}\\
  \int \mu_1(dv_1, ds_1') \; \begin{array}[t]{@{}l@{}}\letin{\gamma'_1   = \gamma_k[v_1/y_1]}\\
\int \dots\\
  \qquad\begin{array}[t]{@{}l@{}}
  \letin{\mu_n = \sem{e_n}^s_{\gamma'_{n-1}}(s_n)}\\
  \int \mu_n(dv_n, ds'_n) \; \begin{array}[t]{@{}l@{}} \letin{\gamma'_n   = \gamma'_{n-1}[v_n/y_n]}\\
\letin{\mu = \psem{e}^s_{\gamma'_n}(s)}\\
  \int \mu(dv, ds') \; \delta_{v, ((\gamma'_n[x_1], \dots, \gamma'_n  [x_k]), (s_1', \dots, s_n'), s')}(U)
  \end{array}
  \end{array}
  \end{array}
  \end{array}
  \end{array}\\[7.5em]
  \psem{\zlm!sample($e$)!}^i_\gamma &=& \sem{e}^i_\gamma\\
  \psem{\zlm!sample($e$)!}^s_\gamma &=& \fun{s}\fun{U}
    \begin{array}[t]{@{}l@{}}
    \letin{\mu, s' = \sem{e}^s_\gamma(s)}
    \int_T \mu(dv) \; \delta_{v, s'}(U)
    \end{array}
  \\[0.5em]
  \psem{\zlm!factor($e$)!}^i_\gamma &=& \sem{e}^i_\gamma\\
  \psem{\zlm!factor($e$)!}^s_\gamma &=& \fun{s}\fun{U}
    \begin{array}[t]{@{}l@{}}
      \letin{v, s' = \sem{e}^s_\gamma(s)}
      \exp(v) \;  \delta_{\mathtt{()}, s'}(U)
    \end{array}
  \\[0.5em]
  \psem{\zlm!observe($e_1$,\ $e_2$)!}^i_\gamma &=&
    \zlm!($\sem{e_1}^i_\gamma$,$\sem{e_2}^i_\gamma$)!
  \\
  \psem{\zlm!observe($e_1$,\ $e_2$)!}^s_\gamma &=&
    \fun{(s_1, s_2)}\fun{U} 
    \begin{array}[t]{@{}l@{}} 
      \letin{\mu,s_1' = \sem{e_1}^s_\gamma(s_1)}\\
      \letin{v,s_2' = \sem{e_2}^s_\gamma(s_2)}\\
      \mu_{\rm{pdf}}(v) * \delta_{\mathtt{()}, (s_1', s_2)}(U)
    \end{array}
\end{array}
\end{small}
  $
  \caption{Co-iterative semantics of probabilistic \ProbZelus expressions~(\textit{i.e.}, $\kindof{e} = {\typeP}$). For local definitions each initialized variable is defined in a subsequent equation, i.e., $\{x_i\}_{1..k} \cap \{y_j\}_{1..n} = \{x_i\}_{1..k}$.}
  \label{fig:sem-probabilistic-full}
\end{figure*}

The co-iterative semantics of \ProbZelus's  deterministic processes is inspired by~\cite{pouzet-cmcs98} and defined \Cref{fig:sem_deterministic_full}.

A \emph{node} is a stream function of type $\typefun{T}{\typeD}{T'}$.
In addition to the state, the transition function thus takes an additional input of type~$T$ and returns a pair (result, next state)
\begin{center}
$
\mathit{CoNode}(T, T', S) = S \times (S \rightarrow T \rightarrow T' \times S).
$
\end{center}

The transition function of a variable always returns the corresponding value stored in the environment~$\gamma$.
The semantics of \zl{last $x$} is a simple access to a special variable \zl{$x$_last}.
\zl{present $e$ -> $e_1$ else $e_2$} introduced in \Cref{sec:example} returns the value of~$e_1$ when~$e$ is true and the value of~$e_2$ otherwise.
The state $(s, s_1, s_2)$ stores the state of the three sub-expressions.
The transition function lazily executes $e_1$ or $e_2$ depending on the value of~$e$ and returns the updated state.

The state of a set of scheduled locally recursive definitions \zl{$e$ where rec $E$} comprises three parts: the value of the local variables at the previous step which can be accessed via the \zl{last} operator, the state of the defining expressions, and the state of expression~$e$.
The initialization stores the initial values introduced by \zl{init} and the initial states of all sub-expressions.
The transition function incrementally builds the local environment defined by~$E$. 
First the environment is populated with a set of fresh variables \zl{$x_i$_last} initialized with the values stored in the state that can then be accessed via the \zl{last} operator.  
Then the environment is extended with the definition of all the variables~$y_i$ by executing all the defining expressions (where $\{x_i\}_{1..k} \cap \{y_j\}_{1..n} = \{x_i\}_{1..k}$).
Finally, the expression~$e$ is executed in the final environment.
The updated state contains the value of the initialized variables defined in~$E$ that will the be used to start the next step, and the updated state of the sub-expressions.

\paragraph{Probabilistic Extensions.} 
The semantics of the probabilistic part of \ProbZelus, defined \Cref{fig:sem-probabilistic-full}, follows the same structure as the deterministic semantics but defines measures over all possible executions as in~\cite{staton17}.
In particular a succession of computation is interpreted as sequentially integrating over the results of the preceding computations.

As for deterministic nodes, the transition function of a probabilistic node of type ${\typefun{T}{\typeP}{T'}}$ takes an additional argument and returns a measure over pairs (result, next state).
\begin{center}
\begin{small}
$
\mathit{CoPNode}(T, T', S) =  S \times (S \rightarrow T \rightarrow (\Sigma_{T' \times S} \rightarrow [0, \infty]))
$
\end{small}
\end{center}

\subsection{Alternative semantics}
\label{app:sem-alt}

We could give different semantics to \ProbZelus. For example, consider the following probabilistic node.
\begin{lstlisting}
let proba kahn_vs_scott () = p where
  rec init p = sample(beta(1, 1))
  and () = observe(bernoulli(p), true)
\end{lstlisting}

With the semantics defined~\Cref{sec:language}, this program produces the stream of distribution: 
$\mathit{Beta}(2,1), \mathit{Beta}(3,1), \dots$
Note that, even though, \zl{p} is defined as a constant, its distribution evolves at each steps.

Since the \zl{observe} statement uses the constant \zl{true}, we know that \zl{p} is necessarily $1$. An alternative semantics could thus returns the constant stream of distributions: $\delta_{\zlmm{1}}$.

\section{The \muF language}

Similarly to \ProbZelus, we extend \muF with the probabilistic operator \zl{factor}.
We now present the complete type system and semantics for \muF.

\subsection{Typing}
\label{app:muF-typing}
\begin{figure*}
\begin{center}
\begin{small}
\(
\begin{array}{c}
\infrule
  {\type{G}{$e$}{t}{\typeD}}
  {\type{G}{$e$}{t}{\typeP}}
  \qquad
  \infrule{\typeof{c} = t}{\type{G}{$c$}{t}{\typeD}}
  \qquad
  \infrule{G(x) = t}{\type{G}{$x$}{t}{\typeD}}
  \qquad
  \infrule
  {\type{G}{$e_1$}{t_1}{\typeD} \quad \type{G}{$e_2$}{t_2}{\typeD}}
  {\type{G}{($e_1$,\ $e_2$)}{\typetimes{t_1}{t_2}}{\typeD}}
  \qquad
  \infrule
  {\typeof{\mathit{op}} = \typefun{t_1}{\typeD}{t_2} \quad \type{G}{$e$}{t_1}{\typeD}}
  {\type{G}{$\mathit{op}$($e$)}{t_2}{\typeD}}
  \\[1em]
  \infrule
  {G(f) = \typefun{t_1}{k}{t_2} \quad \type{G}{$e$}{t_1}{\typeD}}
  {\type{G}{$f$($e$)}{t_2}{k}}
  \qquad
  \infrule
  {\type{G+[t_1/x]}{$e_1$}{t_2}{k} \quad \type{G}{$e_2$}{t_1}{\typeD}}
  {\type{G}{(fun\ $x$ ->\ $e_1$)($e_2$)}{t_2}{k}}
  \qquad
  \infrule
  {\type{G}{$e$}{\typebool}{\typeD} \quad \type{G}{$e_1$}{t}{k} \quad \type{G}{$e_2$}{t}{k}}
  {\type{G}{if\ $e$ then\ $e_1$ else\ $e_2$}{t}{k}}
  \\[1em]
  \infrule
  {\type{G}{$e_1$}{t_1}{k} \quad \type{G+[t_1/x]}{$e_2$}{t_2}{k}}
  {\type{G}{let\ $x$ =\ $e_1$ in\ $e_1$}{t_2}{k}}
  \qquad
  \infrule
  {\type{G+[t_1/x]}{$e$}{t_2}{k}}
  {\type{G}{fun\ $x$ ->\ $e$}{\typefun{t_1}{k}{t_2}}{\typeD}}
  \qquad
  \infrule
  {\type{G}{$e$}{\typesampler{t}}{\typeD}}
  {\type{G}{sample($e$)}{t}{\typeP}}
  \\[1em]
  \infrule
  {\type{G}{$e_1$}{\typedist{t}}{\typeD} \quad \type{G}{$e_2$}{t}{\typeD}}
  {\type{G}{observe($e_1$,\ $e_2$)}{\typeunit}{\typeP}}
  \qquad
  \infrule
  {\type{G}{$e$}{\typefloat}{\typeD}}
  {\type{G}{factor($e$)}{\typeunit}{\typeP}}
  \qquad
  \infrule
  {\type{G}{$e_1$}{\typetimes{t}{t_{\mathit{state}}}}{\typeP} \quad \type{G}{$e_2$}{\typesampler{t_{\mathit{state}}}}{\typeD}}
  {\type{G}{infer((fun\ $x$ ->\ $e_1$), $e_2$)}{\typesampler{t}}{\typeD}}
  \qquad
  \infrule
  {\type{G}{$e$}{\typedist{T}}{\typeD}}
  {\type{G}{$e$}{\typesampler{T}}{\typeD}}
  \\[1em]

\infrule
  {\type{G}{$e$}{t}{\typeD}}
  {\type{G}{let\ $f$\ =\ $e$}{G + [t/f]}{\typeD}}
  \qquad
  \infrule
  {\type{G}{$d_1$}{G_1}{\typeD} \quad \type{G_1}{$d_2$}{G_2}{\typeD}}
  {\type{G}{$d_1$\ $d_2$}{G_2}{\typeD}}
\end{array}
\)
\end{small}
\end{center}
\caption{Typing of \muF with deterministic and probabilistic kinds.}
\label{fig:typing-muF}
\end{figure*}

The type system defined \Cref{fig:typing-muF} is similar to the one \Cref{fig:typing} to distinguish deterministic from probabilistic expressions, but with additional restrictions since the compiled code is in a more constrained form.
Whenever possible we require sub-expressions to be deterministic, that is, in pairs, operator applications (including \zl{sample}, \zl{factor}, and \zl{observe}), function calls, and the condition of a \zl{if/then/else}.
These restrictions simplify the presentation of the semantics but do not reduce the expressiveness of the language since it is always possible to introduce additional local definitions to name intermediate probabilistic expressions.
For example \zl!if sample(bernoulli(0.5)) then ...!
can be rewritten \zl!let b = sample(bernoulli(0.5)) in if b then ...!

\subsection{Semantics of \muF}
\label{app:sem-muF}

The semantics of \muF follows~\cite{staton17}.
In a deterministic context $\kindof{e} = {\typeD}$, the semantics~$\sem{e}_\gamma$ of an expression is the classic interpretation of a strict functional language.
In a probabilistic context~($\kindof{e} = {\typeP}$), we define a the measure-based semantics~$\psem{e}_\gamma$.

The probabilistic semantics of \muF is presented in \Cref{fig:muF-sem-full}.
A deterministic expression is lifted to a probabilistic expression using the the Dirac delta measure applied to the value of the expression computed by the deterministic semantics.
As in \Cref{sec:co-iteration}, a local definition \zl{let $x$ = $e_1$ in $e_2$} is interpreted as integrating~$e_2$ over the measure defined by~$e_1$.
The semantics of the probabilistic operators is the following:
\zl{sample(e)} returns the distribution~$\sem{e}_\gamma$.
\zl{factor(e)} returns a measure defined on the singleton space \zl{()} whose value is $\exp(\sem{e}_\gamma)$.
\zl{observe($e_1$, $e_2$)} is similar but the score is the density function of the distribution~$\sem{e_1}_\gamma$ applied to~${\sem{e_2}_\gamma}$.

\paragraph{Inference.}
\zl{infer} handles the transition function generated by the compilation of \Cref{sec:compilation}.
The first argument of \zl{infer} is a transition function, and the second argument a distribution over state~$\sigma$.
The inference first integrates over the distribution~$\sigma$ and then normalizes the result~$\mu$ to produce a distribution~$\nu$ of pairs (result, next state). 
The special value~$\top$ denotes the entire space (value, state). 
This distribution is then decomposed into a pair of distributions using the pushforward of~$\mu$.

\begin{figure}
  $$
  \begin{array}{@{}lcl}
  \psem{\zl!let\ $f$ =\ $e$!}_\gamma &=& \gamma[\psem{e}_\gamma/f]\\[0.5em]
  
  \psem{d_1 \; d_2}_\gamma &=&
    \letin{\gamma_1 = \psem{d_1}_\gamma} \psem{d_2}_{\gamma_1}\\[1em]
  
  \psem{e}_\gamma &=&
    \fun{U} \delta_{\sem{e}_\gamma}(U)\; \mathit{if} \; \kindof{e} = {\typeD}\\[0.5em]
  
  \psem{e_1(e_2)}_\gamma &=& \fun{U} (\sem{e_1}_\gamma(\sem{e_2}_\gamma))(U)
  \\[0.5em]
  
  \psem{\zl!let\ $p$ =\ $e_1$ in\ $e_2$!}_\gamma &=&
\fun{U} \int_T \psem{e_1}_{\gamma}(du) \psem{e_2}_{\gamma+[u/p]}(U)\\[0.5em]
  
  \multicolumn{3}{@{}l}{
  \psem{\zl!if\ $e$ then\ $e_1$ else\ $e_2$!}_\gamma ~=~
  } \\
  \multicolumn{3}{l}{
   \quad
    \fun{U}
    \mathit{if} \; \sem{e}_\gamma
    \; \mathit{then} \; \psem{e_1}_\gamma(U)
    \; \mathit{else} \;  \psem{e_2}_\gamma(U)
  }
  \\[0.5em]
  
  \psem{\zl!fun\ $p$ ->\ $e$!}_\gamma &=& \fun{v} \psem{e}_{[v/p]}
  \\[0.5em]
  
  \psem{\zl!sample($e$)!}_\gamma &=& \fun{U} \sem{e}_\gamma(U)
  \\[0.5em]
  
  \multicolumn{3}{@{}l}{
  \psem{\zl!observe($e_1$,\ $e_2$)!}_\gamma ~=~
   } \\
   \multicolumn{3}{l}{
    \quad
    \fun{U}
    \letin{\mu = \sem{e_1}_\gamma} \mu_{\textrm{pdf}}(\sem{e_2}_\gamma) * \delta_{\mathtt{()}}(U)
   }
  \\[0.5em]
  
  \psem{\zl!factor($e$)!}_\gamma &=&
   \fun{U} \exp(\sem{e}_\gamma) * \delta_{\mathtt{()}}(U)
  \\[0.5em]
  
  \multicolumn{3}{@{}l}{
  \;\sem{\zl!infer(fun\ $x$ ->\ $e_1$,\ $e_2$)!}_\gamma ~=~
   } \\[0.25em]
   \multicolumn{3}{l}{
  \quad
  \begin{array}[t]{@{}l@{}}
    \letin{\sigma = \sem{e_2}_\gamma}\\
    \letin{\mu = \fun{U} \int_{S} \sigma(ds) \psem{e}^s_{\gamma + [s/x]}(U)}\\
    \letin{\nu = \fun{U} \mu(U) / \mu(\top)}\\
    (\pi_{1*}(\nu), \pi_{2*}(\nu))
\end{array}
   }
  \end{array}
  $$
  \caption{Probabilistic semantics of \muF. The semantics is defined only for probabilistic expressions ($\kindof{e} = {\typeP}$).}
  \label{fig:muF-sem-full}
  \end{figure}

\section{Compilation}
\label{app:compilation}

\begin{figure}
\begin{small}
\hspace*{-3.9em}
\begin{minipage}[t]{0.25\textwidth}
\begin{lstlisting}
$\alloc$$($$c$$)$ = ()

$\alloc$$($$x$$)$ = ()

$\alloc$$($last $x$$)$ = ()

$\alloc$$($($e_1$,$e_2$)$)$ = ($\alloc$$($$e_1$$)$,$\alloc$$($$e_2$$)$)

$\alloc$$($$e$ where
  rec init $x_1$ = $c_1$ ...
  and init $x_k$ = $c_k$
  and $y_1$ = $e_1$ ...
  and $y_n$ = $e_n$$)$ $=$
    (($c_1$,..., $c_k$),
     ($\alloc$$($$e_1$$)$,..., $\alloc$$($$e_n$$)$),
     $\alloc$$($$e$$)$)

$\alloc$$($present $e$ -> $e_1$ else $e_2$$)$ = ($\alloc$$($$e$$)$,$\alloc$$($$e_1$$)$,$\alloc$$($$e_2$$)$)
\end{lstlisting}
\end{minipage}
\begin{minipage}[t]{0.12\textwidth}
\begin{lstlisting}
$\alloc$$($reset $e_1$ every $e2$$)$ =
  ($\alloc$$($$e_1$$)$,$\alloc$$($$e_1$$)$,$\alloc$$($$e_2$$)$)
  
$\alloc$$($$\mathit{op}$($e$)$)$ = $\alloc$$($$e$$)$

$\alloc$$($$f$($e$)$)$ = ($f$_init, $\alloc$$($$e$$)$)  

$\alloc$$($sample($e$)$)$ = $\alloc$$($$e$$)$

$\alloc$$($factor($e$)$)$ = $\alloc$$($$e$$)$

$\alloc$$($observe($e_1$, $e_2$)$)$ = 
  ($\alloc$$($$e_1$$)$, $\alloc$$($$e_2$$)$)

$\alloc$$($infer($e$)$)$ = ($\alloc$$($$e$$)$)
\end{lstlisting}
\end{minipage}
\end{small}
\caption{Memory allocation, i.e., initialization for the \muF step functions.}
\label{fig:allocation-full}
\end{figure}

\begin{figure*}
\hspace*{-4em}
\begin{small}
\begin{minipage}[t]{0.3\textwidth}
\begin{lstlisting}[xleftmargin=0pt]
$\compile$$($let node $f$ $x$ = $e$$)$ =
  let $f$_init = $\alloc$$($$e$$)$
  let $f$_step =
    fun (s,$x$) -> $\compile$$($$e$$)$(s)

$\compile$$($$d1$ $d2$$)$ $=$ $\compile$$($$d_1$$)$ $\compile$$($$d_2$$)$
$\compile$$($$c$$)$ $=$ fun s -> ($c$, s)
$\compile$$($$x$$)$ $=$ fun s -> ($x$, s)
$\compile$$($last $x$$)$ $=$ fun s -> ($x$_last, s)

$\compile$$($($e_1$, $e_2$)$)$ $=$ fun (s1,s2) ->
  let v1,s1' = $\compile$$($$e_1$$)$(s1) in
  let v2,s2' = $\compile$$($$e_2$$)$(s2) in
  ((v1,v2), (s1',s2'))

$\compile$$($$\mathit{op}$($e$)$)$ $=$ fun s ->
  let v,s' = $\compile$$($$e$$)$(s) in
  ($\mathit{op}$(v), s') 
  
$\compile$$($$f$($e$)$)$ $=$ fun (s1,s2) ->
  let v1,s1' = $\compile$$($$e$$)$(s1) in
  let v2,s2' = $f$_step(s2,v) in
  (v2, (s1',s2')) 
\end{lstlisting}
\end{minipage}
\begin{minipage}[t]{0.35\textwidth}
\begin{lstlisting}
$\compile$$($$e$ where 
  rec init $x_1$ = $c_1$ ... and init $x_k$ = $c_k$
  and $y_1$ = $e_1$ ... and $y_n$ = $e_n$$)$ $=$
fun ((m1,...,mk),(s1, ...,sn),s) ->
  let $x_1$_last = m1 in ... 
  let $x_k$_last = mk in
  let $y_1$, s1' = $\compile$$($$e_1$$)$(s1) in
  let $y_n$, sn' = $\compile$$($$e_n$$)$(sn) in
  let v,s' = $\compile$$($$e$$)$(s) in 
  (v, (s1', ..., sn'), s') 
  
$\compile$$($present $e$ -> $e_1$ else $e_2$$)$ $=$ 
fun (s,s1,s2) ->
  let v, s' = $\compile$$($$e$$)$(s) in
  if v then let v1,s1' = $\compile$$($$e_1$$)$(s1) in
    (v1, (s',s1',s2))
  else let v2,s2' = $\compile$$($$e_2$$)$(s2) in
    (v2, (s',s1,s2'))

$\compile$$($reset $e_1$ every $e_2$$)$ $=$
fun (s0,s1,s2) ->
  let v2,s2' = $\compile$$($$e_2$$)$(s2) in
  let s = if v2 then s0 else s1 in
  let v1,s1' = $\compile$$($$e_1$$)$(s) in
  (v1, (s0,s1',s2'))
\end{lstlisting}
\end{minipage}
\begin{minipage}[t]{0.25\textwidth}
\begin{lstlisting}
$\compile$$($sample($e$)$)$ $=$ fun s ->
  let mu,s' = $\compile$$($$e$$)$(s) in
  let v = sample(mu) in (v, s')

$\compile$$($observe($e_1$, $e_2$)$)$ $=$ fun (s1,s2) ->
  let v1,s1' = $\compile$$($$e_1$$)$(s1) in
  let v2,s2' = $\compile$$($$e_2$$)$(s2) in
  let _ = observe(v1,v2) in
  ((), (s1',s2'))

$\compile$$($factor($e$)$)$ $=$ fun s ->
  let v,s' = $\compile$$($$e$$)$(s) in
  let _ = factor(v) in ((), s')

$\compile$$($infer($e$)$)$ $=$ fun sigma ->
  let mu,sigma' = infer($\compile$$($$e$$)$, sigma) in
  (mu, sigma')
  
$\compile$$($let proba $f$ $x$ = $e$$)$ =
  let $f$_init = $\alloc$$($$e$$)$
  let $f$_step = fun (s,$x$) -> $\compile$$($$e$$)$(s)
\end{lstlisting}
\end{minipage}
\end{small}
\vspace{-1em}
\caption{Compilation of \ProbZelus to \muF.}
\label{fig:compilation-full}
\end{figure*}

\Cref{fig:compilation-full} presents the entire compilation function from \muZ to \muF introduced \Cref{sec:compilation}.  \Cref{fig:allocation-full} presents the allocation function.

\begin{lemma}
  \label{lem:typing}
  The compilation preserves the kind (deterministic~$\typeD$, or probabilistic~$\typeP$) of the expressions.
  For any expression~$e$, if $\type{G}{e}{t}{k}$, there exists~$G'$ and~$t'$ such that \(\type{G'}{$\compile$$($e$)$}{t'}{k}\).
\end{lemma}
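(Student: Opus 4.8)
The plan is to proceed by structural induction on the ProbZelus expression~$e$, tracking how the kind~$k$ is threaded through compilation. The first thing I would pin down is what kind preservation means here, since every clause of~$\compile$ (\Cref{fig:compilation-full}) produces a \emph{function} \texttt{fun s -> ...}, which in \muF always has kind~\typeD. The content to prove is therefore that this function's latent (arrow) kind matches~$k$: concretely, that $\compile(e)$ admits a type $t' = \typefun{S}{k}{T \times S}$ whose arrow carries the kind~$k$ of~$e$, equivalently that the body $\compile(e)(s)$ has kind~$k$ for a state variable~$s$. Because $G'$ and~$t'$ are existentially quantified, I only need to track this single kind datum, never the precise state type~$S$, which removes all bookkeeping about the shape of the state.

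The leaves settle immediately. The deterministic atoms ($c$, $x$, and \texttt{last}~$x$) compile to bodies built only from variables and projections, so their bodies are deterministic, matching $k = \typeD$. Dually, \texttt{sample}, \texttt{observe}, and \texttt{factor} compile to bodies that apply the homonymous \muF probabilistic operator and so have probabilistic bodies, matching $k = \typeP$ (\Cref{fig:typing-muF}); and \texttt{infer} compiles to a body applying the \muF \texttt{infer}, which its typing rule makes deterministic, matching $k = \typeD$.

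For the compound forms (pairs, $\mathit{op}$, application, \texttt{present}, \texttt{reset}, and \texttt{where rec}) the ProbZelus rules force the participating sub-expressions to share a common kind~$k$, with the subtyping rule lifting~\typeD to~\typeP where needed. Each compiled body threads the sub-computations through a chain of \texttt{let}-bindings, so I would invoke the induction hypothesis on every $\compile(e_i)$ to obtain a sub-body of the matching arrow kind and then read off the kind of the whole body from the \muF \texttt{let} rule. The hard part will be that this \texttt{let} rule requires both premises to have the \emph{same} kind, whereas a compiled probabilistic body typically interleaves genuinely deterministic state-plumbing (binding \texttt{x\_last}, reconstructing pairs, projecting out updated states) with the probabilistic sub-computations. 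I therefore expect the main effort to be the systematic use of the subtyping rule to lift each deterministic fragment to~\typeP whenever the enclosing body is probabilistic, so that every \texttt{let} node type-checks at the uniform kind~$k$; when $k = \typeD$ no lifting is needed since all fragments are already deterministic. Carrying the invariant ``the body inhabits kind~$k$ under a suitably extended environment'' across each binding closes the inductive step, and the witnesses $G'$ and~$t'$ emerge by collecting the fresh names (\texttt{x\_last}, the intermediate $v_i$ and $s_i'$) introduced along the way.
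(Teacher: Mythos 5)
Your proposal is correct and takes exactly the paper's approach: the paper's entire proof of this lemma is the single sentence ``By induction on the structure of $e$,'' and your sketch is a faithful elaboration of that structural induction, including the right reading of the statement (the latent arrow kind of the compiled \texttt{fun}, since every compiled term is itself of kind \typeD in \muF) and the key mechanism (lifting the deterministic state-plumbing via the subtyping rule so each \texttt{let} chain typechecks at the uniform kind~$k$). The only nitpick is that the let-bound intermediates $v_i$, $s_i'$ need not enter $G'$ (they are bound, not free); what $G'$ must supply are the types of \texttt{$x$\_last}, \texttt{$f$\_step}, and \texttt{$f$\_init}, but this is immaterial given the existential quantification.
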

\begin{proof}
  By induction on the structure of $e$.
\end{proof}

\paragraph{Remark.}
The compilation presented in \Cref{fig:compilation-full} generates a function for each sub-expression.
However, in most cases it is possible to simplify the code using static reduction.
For instance, a constant can directly be compiled into a constant.
  
\section{Inference}
\label{app:inference}

\subsection{Importance Sampling}

\begin{figure}
  \begin{center}
$
  \begin{array}{@{}l@{\ \ \ \ \ \ \ \ }c@{\ \ }l}
  \sem{\zlm!let\ $f$ =\ $e$!}_\gamma &=&
    \gamma[\psem{e}_{\gamma,1}/f] \quad \mathit{if} \ \kindof{e} = {\typeP}
  \\[1.5em]
    
  \psem{e}_{\gamma, w} &=&
    (\sem{e}_\gamma, w) \quad \mathit{if} \ \kindof{e} = {\typeD}\\[0.5em]
   
  \psem{e_1(e_2)}_{\gamma, w} &=&
      \letin{v_2 = \sem{e_2}_\gamma} \sem{e_1}_{\gamma}(v_2, w)\\[0.5em]
  
  \multicolumn{3}{@{}l}{
  \psem{\zlm!if$\;e$ then$\;e_1$ else$\;e_2$!}_{\gamma, w}  ~=~}
  \\
  \multicolumn{3}{l}{
    \qquad
      \mathit{if} \; \sem{e}_\gamma
    \; \mathit{then} \; \psem{e_1}_{\gamma, w}
    \; \mathit{else} \;  \psem{e_2}_{\gamma, w}}
  \\[0.5em]
  
  \multicolumn{3}{@{}l}{
    \psem{\zlm!let$\;p\;$=$\;e_1\;$in$\;e_2$!}_{\gamma, w} ~=~}
  \\
  \multicolumn{3}{l}{
    \qquad
    \letin{v_1, w_1 = \psem{e_1}_{\gamma, w}} \psem{e_2}_{\gamma[v_1 / p], w_1}
  }\\[0.5em]
  
  \psem{\zlm!fun$\;p$ ->$\;e$!}_{\gamma, w} & = &
    \letin{f = \fun{(v, w')} \psem{e}_{[v/p], w'}} (f, w)
  \\[0.5em]
  
  \psem{\zlm!sample($e$)!}_{\gamma, w} &=& ({\rm draw}(\sem{e}_\gamma), w) \\[0.5em]
  
  \psem{\zlm!factor($e$)!}_{\gamma, w} &=& (\zlm!()!, w * \exp(\sem{e}_\gamma))\\[0.5em]
  
  \multicolumn{3}{@{}l}{
    \psem{\zlm!observe($e_1$,$e_2$)!}_{\gamma, w} ~=~}
  \\
  \multicolumn{3}{l}{
    \qquad
    \letin{\mu = \sem{e_1}_\gamma} (\zlm!()!, w * \mu_{\textrm{pdf}}(\sem{e_2}_\gamma))
  }\\[0.5em]
  \end{array}
  $
\end{center}
  \caption{ Importance sampler. Probabilistic expressions return a pair (value, weight). \zlm{sample} draws a sample from a distribution, \zlm{factor} and \zlm{observe} update the weight.}
  \label{fig:importance-full}
  \end{figure}

\paragraph{Importance sampling.}

The most simple inference independently launches~$N$ \emph{particles}.
Each particle executes the importance sampler to compute a pair (result, weight).
Results are then normalized in a \emph{categorical distribution}, i.e., a discrete distribution over the results.

The \zl{infer} operator takes a transition function $\zl!fun $s$ -> $e$!$  and an array of pairs (state, weight)~$S$ of size~$N$ which represents the distribution of possible states across the particles.
\begin{center}
$
\begin{small}
\begin{array}{l}
\sem{\zlm!infer(fun\ $s$ ->\ $e$,\ $S$)!}_\gamma = \\ \quad
  \begin{array}[t]{@{}l@{}}
  \letin{\mu =
    \begin{array}[t]{@{}l@{}}
    \fun{U}\sum\limits_{i = 1}^N \;
      \begin{array}[t]{@{}l@{}}
        \letin{s_i, w_i = \sem{S}_\gamma[i]}\\
        \letin{(v_i, s_i'), w_i' = \psem{\zlm!fun\ $s$ ->\ $e$!}_{\gamma, w_i}(s_i)}\\
        \overline{w_i'} *  \delta_{v_i}(U)
      \end{array}
    \end{array}\\}
    (\mu, [(s_i', w_i')]_{1 \leq i \leq N})
  \end{array}
\end{array}
\end{small}
$
\end{center}
\noindent
At each step, the inference executes one step of all the particles and normalizes the scores to return the distribution~$\mu$ of possible results and an updated array of pairs (state, weight) for the next step.

The weights of the particles are multiplied at each step and never reset.
In other words, the inference reports at each step how likely is the execution path since the beginning of the program for each particle  w.r.t. the model.
Obviously the probability of each individual path quickly collapses to~$0$ after a few steps which makes this inference technique not practical in a reactive context where the inference process never terminates.
The particle filter mitigates this issue by periodically \emph{re-sampling} the set of particles.

\section{Implementation}
\label{app:implementation}

\ProbZelus is open source~(\url{https://github.com/IBM/probzelus}).
It is implemented on top of \zelus~(\url{http://zelus.di.ens.fr/}).
The new constructs \zl{sample}, \zl{observe}, and \zl{factor} are \zelus nodes implemented directly in OCaml.
The \zl{infer} construct is a node that take as argument the \zelus node that represents the probabilistic model.
The \zl{infer} node thus takes as argument the allocation and step functions of the model as argument which corresponds to the compilation described in \Cref{sec:compilation}.

\paragraph{Relationship with the paper}

The code corresponding to the paper is available as a release \url{https://github.com/IBM/probzelus/tree/pldi20}.
The example of \Cref{fig:lqr_src} is in \lstinline[breaklines=true,columns=fullflexible]`examples/tracker/tracker_ds.zls`.

The compiler implements the compilation scheme presented in \Cref{sec:compilation} with a few optimizations: (1)~intermediate step functions are statically reduced (2)~useless state is removed when possible, and (3)~state is updated imperatively.
Moreover, the compilation of \zl{proba} nodes introduces an extra argument to the step functions in order to pass the extra information~$w$ or $(w, g)$ needed by the inference algorithms.

The code of the inferences algorithms is in the \lstinline[breaklines=true,columns=fullflexible]`inference` directory.
The particle filter presented in \Cref{sec:pf} is in \lstinline[breaklines=true,columns=fullflexible]`infer_pf.ml`.
The entry point of the Delayed Sampling algorithm presented in \Cref{sec:ds} is \lstinline[breaklines=true,columns=fullflexible]`infer_ds_naive.ml` and the core of the algorithm is \lstinline[breaklines=true,columns=fullflexible]`ds_naive_graph.ml`.
The entry point and the core of the algorithm for the Streaming Delayed Sampling algorithm presented in \Cref{sec:sds} are respectively in \lstinline[breaklines=true,columns=fullflexible]`infer_ds_streaming.ml` and \lstinline[breaklines=true,columns=fullflexible]`ds_streaming_graph.ml`.

The Bounded Delayed Sampling algorithm presented in \Cref{sec:ds} can be implemented on top of both classical and streaming delayed sampling.
The code is in the functor defined in \lstinline[breaklines=true,columns=fullflexible]`ds_high_level.ml`.

Finally, the code for the benchmarks presented \Cref{sec:eval} and \Cref{sec:evaluation} is available in \lstinline[breaklines=true,columns=fullflexible]`examples/benchmarks`.

\paragraph{Artifact}

There is an artifact associated to the paper which is available with~\cite{rppl-short}. It is distributed as a Linux image in the Open Virtualization Format that can be launch using a virtualization player like VirtualBox~(\url{https://www.virtualbox.org}). The credential to log into the virtual machine are:
\begin{lstlisting}
  user: probzelus
  passord: probzelus
\end{lstlisting}

\section{Performance Evaluation}
\label{sec:evaluation}

This section presents the experimental results.
We ran each inference algorithm on a series of benchmarks and
measured properties of the execution: accuracy, execution time, memory consumption.
All the experiments were run on a server with 32 CPUs (2.60 GHz) and 128 GB memory.

\subsection{Benchmarks}
\label{sec:benchmarks}

\paragraph{Beta-Bernoulli.} The Beta-Bernoulli benchmark models an agent that estimates the bias of a coin.
\begin{lstlisting}
let proba coin (yobs) = xt where
  rec init xt = sample (beta (1., 1.))
  and () = observe (bernoulli xt, yobs)
\end{lstlisting}
The model samples \zl{zt} from a
$\mathit{Beta}(1,1)$ distribution, and thereafter evaluates the observations with a $\mathit{Bernoulli}$ distribution of parameter~\zl{xt}.
Running SDS on this model is
equivalent to exact inference in a Beta-Bernoulli conjugate
model~\cite{conjprior} where each particle returns the exact solution. The benchmark's error metric is the mean squared error over time
between the true coin probability and the expected probability conditioned on
the stream of observations.

\paragraph{Gaussian-Gaussian.} The Gaussian-Gaussian benchmark models an agent that estimates the mean and the standard deviation of a Gaussian.
\pagebreak
\begin{lstlisting}
let proba gaussian_model (o) = (mu, sigma) where
  rec init mu = sample (gaussian (0., 10.))
  and init sqrt_sigma = sample (gaussian (0., 1.))
  and sigma = sqrt_sigma *. sqrt_sigma
  and () = observe (gaussian (mu, sigma), o)
\end{lstlisting}
The initial values for the distribution of the mean \zl{mu} follows a distribution $\mathcal{N}(0, 10)$ and the distribution of $\sqrt{\zl!sigma!}$ is $\mathcal{N}(0, 1)$.
The distributions of \zl{mu} and \zl{sigma} are conditioned by the observations that follow a distribution $\mathcal{N}(\zl{mu}, \zl{sigma})$.
In the current implementations of delayed sampling we are doing exact inference only on the mean and not on the standard deviation (even if it would be possible).
The benchmark's error metric is the mean squared error over time
between the true mean and standard deviation and the expected probability conditioned on the stream of observations.

\paragraph{Kalman.} The Kalman benchmark models an agent that estimates its position based on noisy observations.
\begin{lstlisting}
let proba delay_kalman (yobs) = xt where
  rec xt = sample (gaussian ((0., 2500.) ->
                             (pre xt, 1.)))
  and () = observe (gaussian (xt, 1.), yobs)
\end{lstlisting}
The model chooses an initial position from $\mathcal{N}(0, 2500)$, and
chooses subsequent positions from $\mathcal{N}(\zl!pre x!, 1)$ where \zl!pre x! denote the previous position. The model draws the observation at each
time step from $\mathcal{N}(\zl!x!, 1)$ where \zl{x} is the true position.
Running SDS on this model is equivalent to a Kalman
filter~\cite{kalman} where each particle returns the exact solution.  The benchmark's error metric is the mean squared error over time
between the true position and the expected position conditioned on all previous
observations.

\paragraph{Outlier.} The Outlier benchmark, adapted from Section 2 of~\cite{ep},
models the same situation as the Kalman benchmark, but with a sensor that 
occasionally produces invalid readings.
\begin{lstlisting}
let proba outlier (yobs) = (is_outlier, xt) where
  rec xt = sample (gaussian ((0., 2500.) ->
                             (pre xt, 1.)))
  and init outlier_prob = sample (beta (100., 1000.))
  and is_outlier = sample (bernoulli outlier_prob)
  and () = present is_outlier ->
             observe (gaussian (0., 10000.), yobs)
           else observe (gaussian (xt, 1.), yobs)
\end{lstlisting}
The model chooses the probability of an
invalid reading from a $\mathrm{Beta}(100,1000)$ distribution, so that invalid
readings occur approximately 10\% of the time. At each time step, with the
previously chosen probability, the model either chooses the observation from
the invalid distribution $\mathcal{N}(0, 10000)$, or it
chooses the observation from the Kalman model. Running SDS on this
model is equivalent to a Rao-Blackwellized particle filter~\cite{rbpf}
that combines exact inference with approximate particle filtering.  The benchmark's error metric is the mean
squared error over time between the true position and the expected position
conditioned on all previous observations.

\paragraph{Robot.} The Robot benchmark is detailed \Cref{sec:example}.

\paragraph{SLAM.} Simultaneous Location And Mapping (SLAM)~\cite{Montemerlo02-fastslam}.
Consider the simple case where a robot evolves in a discrete one-dimensional world and each position corresponds to a black or white cell.
The robot can move from left to right and can observe the color of the cell on which it stands with a sensor.
There are two sources of uncertainty: (1)~the robot's wheels are slippery, so the robot can sometimes stay on the spot thinking about moving,
(2)~the sensor is making read errors, and can reverse the colors.
The controller tries to infer the map~(color of the cells) and the current position of the robot~(\Cref{fig:slam-screenshots}).

\begin{figure}[t]
  \begin{center}
    \includegraphics[scale=0.35]{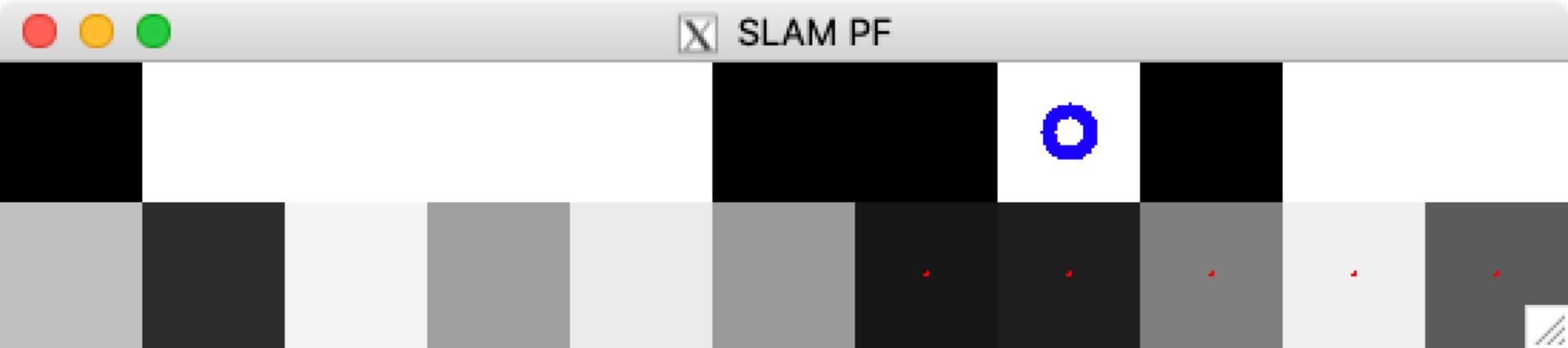}\\[0.5em]
    \includegraphics[scale=0.35]{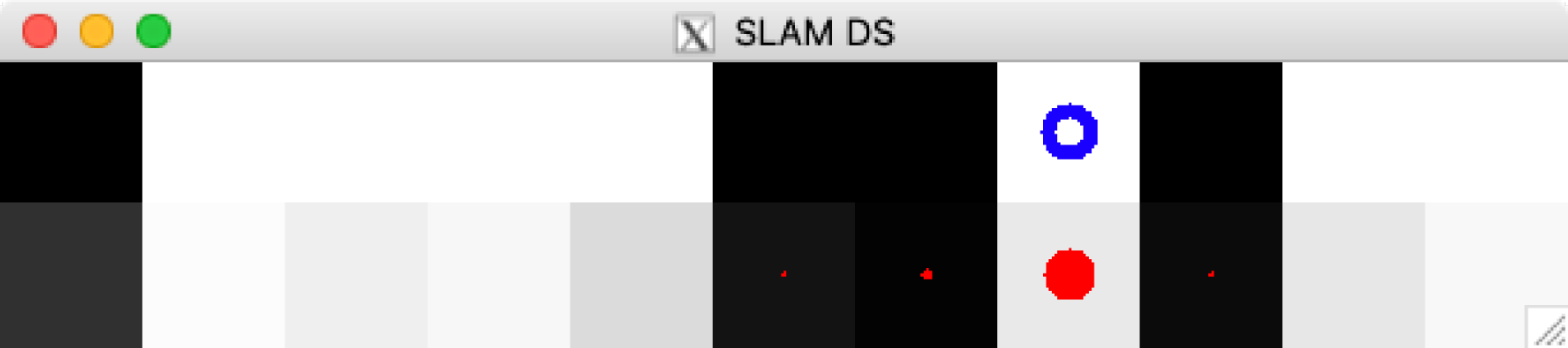}
  \end{center}
  \caption{Screenshots of the execution of the SLAM with the PF and SDS inferences.
    For each screenshot, the top line shows the map, and the blue circle the exact position of the robot.
    The lower line represents the inferred map where the gray level indicates the probability for the cell to be black and the red dots the probability of presence of the robot on the cell.}
   \label{fig:slam-screenshots}
\end{figure}

The robot maintains a map where each box is a random variable that represents the probability of being black or white (gray level in the \Cref{fig:slam-screenshots}).
The \textit{a priori} distribution of these random variables is a $\mathit{Bernoulli}(0.5)$ distribution:
\begin{lstlisting}
let proba bernoulli_priors i = sample (bernoulli 0.5)
\end{lstlisting}

The robot starts from the position~\zl{x0} and receives at each step a command \zl{Right} or \zl{Left}.
It then moves to the left or right following the command with a probability of~$10\%$ of remaining in place~(modeled by a Bernoulli distribution of parameter~$0.1$).

\begin{lstlisting}
let proba move (x0, cmd) = x where
  rec slip = sample (bernoulli 0.1)
  and xp = x0 -> pre x
  and x = match cmd with
      | Right ->
          min max_pos (if slip then xp else xp + 1)
      | Left  ->
          max min_pos (if slip then xp else xp - 1)
      end
\end{lstlisting}

The sensor has a constant probability of reading error of \zl{sensor_noise}.
At each instant, the robot computes its current position~\zl{x}.
The observation of the sensor follows a Bernoulli distribution parameterized by \zl{1 - sensor_noise} if the position is white and \zl{sensor_noise} if the position is black.
\begin{lstlisting}
let proba slam (obs, cmd) = (map, x) where
  rec init map = Array_misc.ini (max_pos + 1)
                   bernoulli_priors ()
  and x = move (0, cmd)
  and o = Array_misc.get map x
  and p = if o then (1. -. sensor_noise)
          else sensor_noise
  and () = observe (bernoulli p, obs)
\end{lstlisting}

The benchmark's error metrics is the mean squared error over time between the exact map and position and the expected map and position.

\paragraph{Multi-Target Tracker.}
\textsf{MTT} (\emph{Multi-Target Tracker}) adapted from~\cite{MurrayS18} is a
model where there are a variable number of targets with linear-Gaussian motion
models producing linear-Gaussian measurements of the position at each time
step.  Targets randomly appear according to a Poisson process and each
disappear with some fixed probability at each time step.  Measurements do not
identify which target they came from, and ``clutter'' measurements that come
not from targets but from some underlying distribution add to observations,
complicating inference of which measurements are associated to which targets.

We model this with a ProbZelus program that has a state consisting of a list of
position-velocity pairs that encode the {\em track} of each target. In this
example we consider two-dimensional targets, giving us a 4-dimensional vector
representing position and velocity together.

The first step is to define helper
functions that will be mapped over the list of tracks. 
The first function tells us how frequently tracks die. 
They do so with probability \zl{p_dead} which we set to $e^{-.02}$.

\begin{lstlisting}
let proba death_fn _ = sample (bernoulli (p_dead))
\end{lstlisting}

We now define how tracks are initialized when they are first created. They are
sampled from a multivariate Gaussian distribution with mean \zl{mu_new} set to zero and covariance \zl{sigma_new} set to a diagonal with
variance $1$ on the positions and variance $0.001$ on the velocities.

\begin{lstlisting}
let proba new_track_init_fn _ =
  (new_track_num (),
   sample (mv_gaussian (mu_new, sigma_new)))
\end{lstlisting}

Next, we define the motion model and update model. Each track \zl{tr} is
multiplied with the motion matrix \zl{a_u} which encodes discrete time
integration of position and velocity with time constant $1$. We then sample a
Gaussian distribution around the new position and velocity with covariance
\zl{sigma_update} which is a diagonal matrix with $0.01$ variance for the
position and $0.1$ for the variance of the velocity. For the observation model,
we project out the position with the projection matrix \zl{proj_pos} and
observe it with covariance matrix \zl{sigma_obs} which we set to a diagonal of~$0.1$.

\begin{lstlisting}
let proba state_update_fn (tr_num, tr) =
  (tr_num,
   sample (mv_gaussian(a_u *@ tr, sigma_update)))

let observe_fn (_, tr) =
  (mv_gaussian (proj_pos *@ tr, sigma_obs))
\end{lstlisting}

We next define the model for clutter data. We assume that each clutter point is
drawn from a multivariate Gaussian with mean \zl{mu_clutter} which is zero an
covariance \zl{sigma_clutter} which we set to $10$.

\begin{lstlisting}
let proba clutter_init_fn _ =
  (mv_gaussian (mu_clutter, sigma_clutter))
\end{lstlisting}

The model proceeds as follows. For every track, we use the \zl{filter} list
operator to remove all the tracks that died in this time step. We then sample
the number of new tracks\zl{n_new} from a Poisson distribution with parameter
\zl{lambda_new} which we set to $0.1$. After forcing a sample of this value, we
use the list constructer \zl{ini} to build a list of new tracks. We append the
survived and new tracks together, and then use the \zl{map} list operator to
apply the motion and observation models to each track. Next, we determine the
amount of clutter by subtracting the number of observations from the number of
surviving tracks. We then observe that this comes from a Poisson distribution
with parameter \zl{lambda_clutter} which we set to $1$. Note that this sets the
particle weight to~$-\infty$ if the particle yields a negative amount of clutter.
Next, we shuffle the track observations and the clutter together by forcing a
sample of the \zl{shuffle} random primitive, and finally observe that the
resulting list yields the observed values.

\begin{lstlisting}
let proba obsfn (var, value) = observe (var, value)

let proba model inp = t where
  rec init t = []
  and t_survived = filter death_fn (last t)
  and n_new = sample (poisson lambda_new)
  and t_new = ini new_track_init_fn n_new
  and t_tot = append t_survived t_new
  and t = map state_update_fn t_tot
  and obs = map observe_fn t
  and n_clutter = (length inp) - (length obs)
  and () = observe(poisson lambda_clutter, n_clutter)
  and clutter = ini clutter_init_fn n_clutter
  and obs_shuffled =
    sample (shuffle (append obs clutter))
  and present (not (n_clutter < 0)) ->
    do () = (iter2 obsfn (obs_shuffled, inp)) done
\end{lstlisting}

The accuracy metric is based on the {\em Multiple Object Tracking
Accuracy}~\cite{MOTA08}. This evaluates whether a track estimate contains the
right targets across all time steps within a sufficient tolerance (we set the
tolerance to $5$ in our example). Conventional MOTA is in $[0,1]$ with $1$
being the best; we have modified it to be in $[0,\infty]$ with $0$ being the
best by transforming it to $\mathsf{MOTA*} = 1/\mathsf{MOTA} - 1$.

Because we estimate a distribution of track estimates, we draw a sample from
the track distribution to estimate the expected $\mathsf{MOTA}*$.

\paragraph{Data.} For each benchmark except Robot and SLAM, we obtained
observation data by sampling from the benchmark's model. In these benchmarks,
every run of each benchmark across all experiments uses the same data as input.
For SLAM, we pre-sampled the map from the model, but sampled position data on
the fly as this data depends on the controller. For the Robot benchmark, we
sampled all observations on the fly because they all depend on the command from
the controller. This means that for SLAM and Robot, each run uses different
position observations.

\subsection{DS vs. PF}
\label{sec:perf_accuracy}

\newcommand{\accuracyTrials}[0]{1000\xspace}

We compare both the accuracy and runtime performance of BDS, SDS, and PF to investigate whether the delayed
samplers can achieve better accuracy than the particle filter with the same
amount of computational resources.

\paragraph{Accuracy Methodology.} For a range of selected particle counts, we
execute each benchmark multiple times and record the resulting
accuracy. To measure accuracy we use the end-to-end error metrics for each
benchmark as described in Section~\ref{sec:benchmarks}. We record the median
and the 90\% and 10\% quantiles after 1000 runs.

\paragraph{Accuracy Results.} \Cref{fig:accuracy1,fig:accuracy2} show the results of the accuracy experiment for the different benchmarks.
The error bars show 90\% and 10\% quantiles, and the center is the median.
The vertical lines corresponds to the number reported in \Cref{fig:evaluation} where there is enough particles to achieve similar accuracy to delayed sampling with 1000 particles.
In all cases, SDS is able to achieve equal or better
accuracy than BDS which is itself equal or better than PF, but the results vary widely by benchmark.
Note that SDS returns the exact posterior distribution for the Coin and Kalman
benchmarks therefore its accuracy is independent of the
number of particles.
On the other-hand, BDS is not exact since the symbolic distributions are sampled at the end of each the step.

\newcommand{\gcds}[0]{pointer minimal delayed sampling}
\newcommand{\performanceTrials}[0]{\accuracyTrials}
\newcommand{\performanceWarmup}{1}
\newcommand{\numParticles}{100\xspace}
\newcommand{\performanceSteps}[0]{YYY}

\paragraph{Performance Methodology.} 
For a range of selected particle counts, we execute each benchmark multiple
times (the same number as for the accuracy experiments described above)  after a
warm-up of {\performanceWarmup} run and record the resulting performance: the
latency of one step of computation.  In the following graphs we report the
median latency as well as the 90\% and 10\% quantiles of the collected data.

\paragraph{Performance Results.}
\Cref{fig:perf-particles1,fig:perf-particles2} shows how the latency for a single step varies with the number of particles for each benchmark.
The error bars show 90\% and 10\% quantiles, and the center is the median.
With the three algorithms, the execution time increases linearly with the number of particles. In all cases, PF has lower latency than BDS which has lower latency than SDS.

\paragraph{Conclusions.} These experiments show that the delayed samplers
achieve better accuracy than the particle filter with the same computational
resources. 
For some models SDS is able to compute the exact solution with only one particle (Kalman, Coin).
BDS achieves better accuracy when relationships between variables defined in the same step can be exploited (Kalman).
At worst the delayed samplers performs as a well as the particle filter (BDS on the Coin, SDS and BDS on the Outlier).

\subsection{SDS vs. DS}
\label{sec:perf_memory}

We next evaluate the performance of SDS and BDS relative to our own OCaml implementation of the original delayed sampler~(DS). We compare both the performance and memory consumption of the three algorithms at each time step to investigate whether, as the size of the
input stream grows large, they can retain  constant performance.

\newcommand{\performancePerStepWarmup}{1}
\newcommand{\performancePerStepTrials}{1000}

\paragraph{Performance Methodology.}
We execute each benchmark
{\performancePerStepTrials} times after a warm-up of {\performancePerStepWarmup}
run and record the latency. We execute each benchmark with
\numParticles particles~(even if only one particle is necessary for
DS and SDS on the Coin and Kalman benchmarks to compute the exact distribution) and plot latency as a function of the time
step.  We report the median latency as well as the 90\% and 10\%
quantiles of the collected data.

\paragraph{Performance Results.}
\Cref{fig:perf-step1,fig:perf-step2} shows the latency at each step of a run, aggregated over {\performancePerStepTrials} runs.
PF, BDS, and SDS show nearly constant performance in time but DS gets linearly worse performance for the Kalman and Outlier benchmarks.
For the Coin benchmark, the graph of DS remains of constant size because there is only one \zl+sample+ at the first step and then only \zl{observe} statements.

\paragraph{Memory Methodology.} We next evaluate the memory consumption of the
algorithms. For all benchmarks except the multi-target tracker, memory
consumption is deterministic even in the presence of random choices. Therefore,
we measure the {\em ideal} memory consumption of the execution of each
benchmark after each step. The ideal memory consumption is the total amount of
live words in the program's heap. In our implementation, we measure these
numbers by forcing a garbage collection after each step. We use OCaml's
standard facilities for forcing garbage collection as well as for measuring the
amount of live words. We ran each algorithms 10 times with {\numParticles} particles.

For the multi-target tracker, the memory is not deterministic because it is
determined by the number of hypothetical tracks, which is random. We report
median and 10\% and 90\% values for memory consumption for this benchmark.

\paragraph{Memory Results.} \Cref{fig:mem1,fig:mem2} shows the results of the
memory consumption experiment. For all benchmarks, PF, BDS, SDS use constant
memory over time, including for the multi-target tracker where their memory
consumption is random at each time step. However, DS has increasing memory
consumption over time for the Kalman, Outlier, and Robot benchmarks. The
memory consumption of DS is constant for the Coin benchmark because the graph
remains of constant size.

For the mutli-target tracker, the memory consumption of DS is based both on the
number of hypothesized tracks and the length of the hypothesized tracks. We can
see that the memory consumption of DS increases as the first generation of
tracks becomes longer, but eventually curtails its memory consuption when these
tracks die. MTT's memory consumption thereafter increases again as the second
generation of tracks starts to increase in length.

\paragraph{Conclusions.}
The original DS implementation consumes an increasing amount of memory over
time for models that introduce new variables at each step (Kalman, Outlier, and
Robot) in contrast to BDS and SDS whose memory consumptions are constant over
time. For the multi-target tracker, the DS memory consumption is based on the
length of the track which is in principle probabilistically bounded. However,
DS still consumes much more memory than PF, BDS, and SDS because the tracks are
long-lived.

Furthermore,
DS step latency increases without bound as the number of steps becomes large on
benchmarks where the memory increases.  These observations confirm that the
original DS implementation is not practical in a reactive settings.

\begin{figure*}[p]
  \begin{center}
    \small\sf
    \pfbullet~PF $\quad$ \bdsbullet~BDS $\quad$ \pmdsbullet~SDS
  \end{center}
    \begin{minipage}[t]{0.48\textwidth}
      \small\sf
        \hspace*{-1.8em}
        \begingroup
  \makeatletter
  \providecommand\color[2][]{\GenericError{(gnuplot) \space\space\space\@spaces}{Package color not loaded in conjunction with
      terminal option `colourtext'}{See the gnuplot documentation for explanation.}{Either use 'blacktext' in gnuplot or load the package
      color.sty in LaTeX.}\renewcommand\color[2][]{}}\providecommand\includegraphics[2][]{\GenericError{(gnuplot) \space\space\space\@spaces}{Package graphicx or graphics not loaded}{See the gnuplot documentation for explanation.}{The gnuplot epslatex terminal needs graphicx.sty or graphics.sty.}\renewcommand\includegraphics[2][]{}}\providecommand\rotatebox[2]{#2}\@ifundefined{ifGPcolor}{\newif\ifGPcolor
    \GPcolorfalse
  }{}\@ifundefined{ifGPblacktext}{\newif\ifGPblacktext
    \GPblacktexttrue
  }{}\let\gplgaddtomacro\g@addto@macro
\gdef\gplbacktext{}\gdef\gplfronttext{}\makeatother
  \ifGPblacktext
\def\colorrgb#1{}\def\colorgray#1{}\else
\ifGPcolor
      \def\colorrgb#1{\color[rgb]{#1}}\def\colorgray#1{\color[gray]{#1}}\expandafter\def\csname LTw\endcsname{\color{white}}\expandafter\def\csname LTb\endcsname{\color{black}}\expandafter\def\csname LTa\endcsname{\color{black}}\expandafter\def\csname LT0\endcsname{\color[rgb]{1,0,0}}\expandafter\def\csname LT1\endcsname{\color[rgb]{0,1,0}}\expandafter\def\csname LT2\endcsname{\color[rgb]{0,0,1}}\expandafter\def\csname LT3\endcsname{\color[rgb]{1,0,1}}\expandafter\def\csname LT4\endcsname{\color[rgb]{0,1,1}}\expandafter\def\csname LT5\endcsname{\color[rgb]{1,1,0}}\expandafter\def\csname LT6\endcsname{\color[rgb]{0,0,0}}\expandafter\def\csname LT7\endcsname{\color[rgb]{1,0.3,0}}\expandafter\def\csname LT8\endcsname{\color[rgb]{0.5,0.5,0.5}}\else
\def\colorrgb#1{\color{black}}\def\colorgray#1{\color[gray]{#1}}\expandafter\def\csname LTw\endcsname{\color{white}}\expandafter\def\csname LTb\endcsname{\color{black}}\expandafter\def\csname LTa\endcsname{\color{black}}\expandafter\def\csname LT0\endcsname{\color{black}}\expandafter\def\csname LT1\endcsname{\color{black}}\expandafter\def\csname LT2\endcsname{\color{black}}\expandafter\def\csname LT3\endcsname{\color{black}}\expandafter\def\csname LT4\endcsname{\color{black}}\expandafter\def\csname LT5\endcsname{\color{black}}\expandafter\def\csname LT6\endcsname{\color{black}}\expandafter\def\csname LT7\endcsname{\color{black}}\expandafter\def\csname LT8\endcsname{\color{black}}\fi
  \fi
    \setlength{\unitlength}{0.0500bp}\ifx\gptboxheight\undefined \newlength{\gptboxheight}\newlength{\gptboxwidth}\newsavebox{\gptboxtext}\fi \setlength{\fboxrule}{0.5pt}\setlength{\fboxsep}{1pt}\begin{picture}(4818.00,3174.00)\gplgaddtomacro\gplbacktext{\csname LTb\endcsname \put(1188,550){\makebox(0,0)[r]{\strut{}$10^{-4}$}}\put(1188,1068){\makebox(0,0)[r]{\strut{}$10^{-3}$}}\put(1188,1587){\makebox(0,0)[r]{\strut{}$10^{-2}$}}\put(1188,2105){\makebox(0,0)[r]{\strut{}$10^{-1}$}}\put(1188,2623){\makebox(0,0)[r]{\strut{}$10^{0}$}}\put(1377,330){\makebox(0,0){\strut{}$1$}}\put(2185,330){\makebox(0,0){\strut{}$10$}}\put(2993,330){\makebox(0,0){\strut{}$100$}}\put(3801,330){\makebox(0,0){\strut{}$1000$}}\put(4609,330){\makebox(0,0){\strut{}$10000$}}}\gplgaddtomacro\gplfronttext{\csname LTb\endcsname \put(3035,2733){\makebox(0,0){\strut{}Beta-Bernoulli Accuracy}}}\gplbacktext
    \put(0,0){\includegraphics{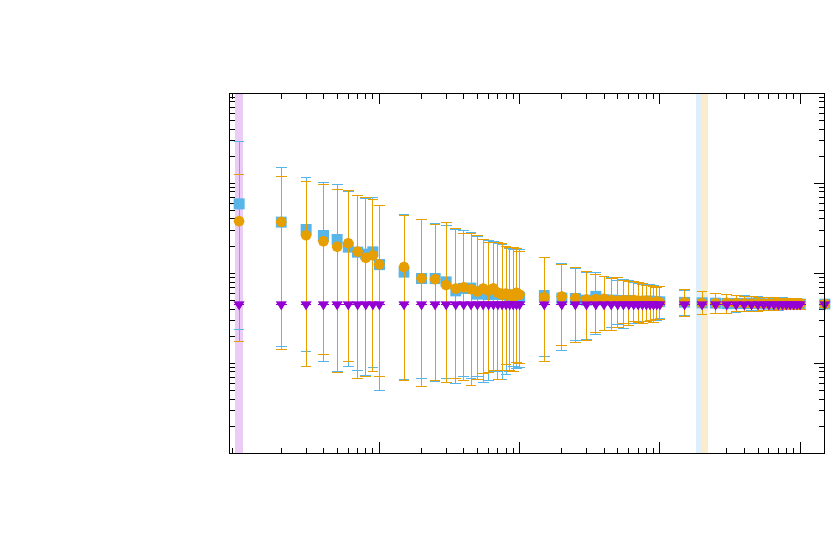}}\gplfronttext
  \end{picture}\endgroup
 \\[-0.3cm]
        \hspace*{-1.8em}
        \begingroup
  \makeatletter
  \providecommand\color[2][]{\GenericError{(gnuplot) \space\space\space\@spaces}{Package color not loaded in conjunction with
      terminal option `colourtext'}{See the gnuplot documentation for explanation.}{Either use 'blacktext' in gnuplot or load the package
      color.sty in LaTeX.}\renewcommand\color[2][]{}}\providecommand\includegraphics[2][]{\GenericError{(gnuplot) \space\space\space\@spaces}{Package graphicx or graphics not loaded}{See the gnuplot documentation for explanation.}{The gnuplot epslatex terminal needs graphicx.sty or graphics.sty.}\renewcommand\includegraphics[2][]{}}\providecommand\rotatebox[2]{#2}\@ifundefined{ifGPcolor}{\newif\ifGPcolor
    \GPcolorfalse
  }{}\@ifundefined{ifGPblacktext}{\newif\ifGPblacktext
    \GPblacktexttrue
  }{}\let\gplgaddtomacro\g@addto@macro
\gdef\gplbacktext{}\gdef\gplfronttext{}\makeatother
  \ifGPblacktext
\def\colorrgb#1{}\def\colorgray#1{}\else
\ifGPcolor
      \def\colorrgb#1{\color[rgb]{#1}}\def\colorgray#1{\color[gray]{#1}}\expandafter\def\csname LTw\endcsname{\color{white}}\expandafter\def\csname LTb\endcsname{\color{black}}\expandafter\def\csname LTa\endcsname{\color{black}}\expandafter\def\csname LT0\endcsname{\color[rgb]{1,0,0}}\expandafter\def\csname LT1\endcsname{\color[rgb]{0,1,0}}\expandafter\def\csname LT2\endcsname{\color[rgb]{0,0,1}}\expandafter\def\csname LT3\endcsname{\color[rgb]{1,0,1}}\expandafter\def\csname LT4\endcsname{\color[rgb]{0,1,1}}\expandafter\def\csname LT5\endcsname{\color[rgb]{1,1,0}}\expandafter\def\csname LT6\endcsname{\color[rgb]{0,0,0}}\expandafter\def\csname LT7\endcsname{\color[rgb]{1,0.3,0}}\expandafter\def\csname LT8\endcsname{\color[rgb]{0.5,0.5,0.5}}\else
\def\colorrgb#1{\color{black}}\def\colorgray#1{\color[gray]{#1}}\expandafter\def\csname LTw\endcsname{\color{white}}\expandafter\def\csname LTb\endcsname{\color{black}}\expandafter\def\csname LTa\endcsname{\color{black}}\expandafter\def\csname LT0\endcsname{\color{black}}\expandafter\def\csname LT1\endcsname{\color{black}}\expandafter\def\csname LT2\endcsname{\color{black}}\expandafter\def\csname LT3\endcsname{\color{black}}\expandafter\def\csname LT4\endcsname{\color{black}}\expandafter\def\csname LT5\endcsname{\color{black}}\expandafter\def\csname LT6\endcsname{\color{black}}\expandafter\def\csname LT7\endcsname{\color{black}}\expandafter\def\csname LT8\endcsname{\color{black}}\fi
  \fi
    \setlength{\unitlength}{0.0500bp}\ifx\gptboxheight\undefined \newlength{\gptboxheight}\newlength{\gptboxwidth}\newsavebox{\gptboxtext}\fi \setlength{\fboxrule}{0.5pt}\setlength{\fboxsep}{1pt}\begin{picture}(4818.00,3174.00)\gplgaddtomacro\gplbacktext{\csname LTb\endcsname \put(1188,550){\makebox(0,0)[r]{\strut{}$10^{-2}$}}\put(1188,965){\makebox(0,0)[r]{\strut{}$10^{-1}$}}\put(1188,1379){\makebox(0,0)[r]{\strut{}$10^{0}$}}\put(1188,1794){\makebox(0,0)[r]{\strut{}$10^{1}$}}\put(1188,2208){\makebox(0,0)[r]{\strut{}$10^{2}$}}\put(1188,2623){\makebox(0,0)[r]{\strut{}$10^{3}$}}\put(1377,330){\makebox(0,0){\strut{}$1$}}\put(2185,330){\makebox(0,0){\strut{}$10$}}\put(2993,330){\makebox(0,0){\strut{}$100$}}\put(3801,330){\makebox(0,0){\strut{}$1000$}}\put(4609,330){\makebox(0,0){\strut{}$10000$}}}\gplgaddtomacro\gplfronttext{\csname LTb\endcsname \put(583,1586){\rotatebox{-270}{\makebox(0,0){\strut{}Loss (log scale)}}}\put(3035,2733){\makebox(0,0){\strut{}Gaussian-Gaussian Accuracy}}}\gplbacktext
    \put(0,0){\includegraphics{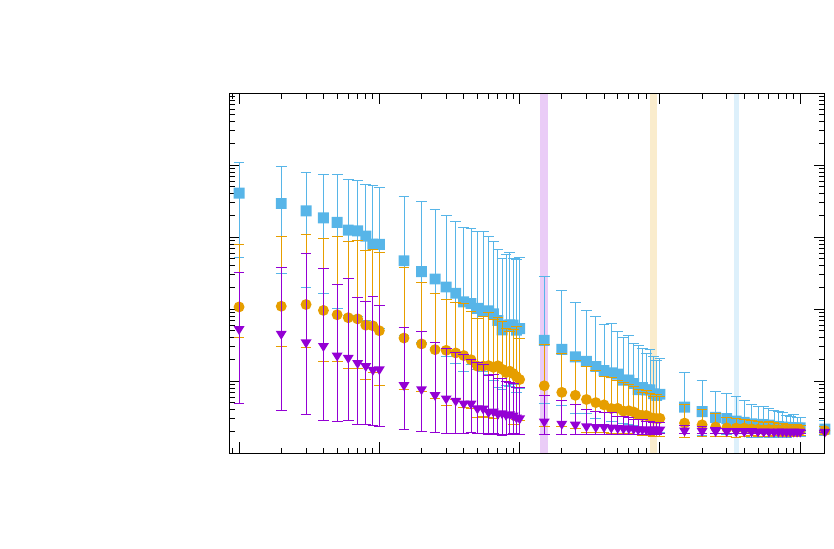}}\gplfronttext
  \end{picture}\endgroup
 \\[-0.3cm]
        \hspace*{-1.8em}
        \begingroup
  \makeatletter
  \providecommand\color[2][]{\GenericError{(gnuplot) \space\space\space\@spaces}{Package color not loaded in conjunction with
      terminal option `colourtext'}{See the gnuplot documentation for explanation.}{Either use 'blacktext' in gnuplot or load the package
      color.sty in LaTeX.}\renewcommand\color[2][]{}}\providecommand\includegraphics[2][]{\GenericError{(gnuplot) \space\space\space\@spaces}{Package graphicx or graphics not loaded}{See the gnuplot documentation for explanation.}{The gnuplot epslatex terminal needs graphicx.sty or graphics.sty.}\renewcommand\includegraphics[2][]{}}\providecommand\rotatebox[2]{#2}\@ifundefined{ifGPcolor}{\newif\ifGPcolor
    \GPcolorfalse
  }{}\@ifundefined{ifGPblacktext}{\newif\ifGPblacktext
    \GPblacktexttrue
  }{}\let\gplgaddtomacro\g@addto@macro
\gdef\gplbacktext{}\gdef\gplfronttext{}\makeatother
  \ifGPblacktext
\def\colorrgb#1{}\def\colorgray#1{}\else
\ifGPcolor
      \def\colorrgb#1{\color[rgb]{#1}}\def\colorgray#1{\color[gray]{#1}}\expandafter\def\csname LTw\endcsname{\color{white}}\expandafter\def\csname LTb\endcsname{\color{black}}\expandafter\def\csname LTa\endcsname{\color{black}}\expandafter\def\csname LT0\endcsname{\color[rgb]{1,0,0}}\expandafter\def\csname LT1\endcsname{\color[rgb]{0,1,0}}\expandafter\def\csname LT2\endcsname{\color[rgb]{0,0,1}}\expandafter\def\csname LT3\endcsname{\color[rgb]{1,0,1}}\expandafter\def\csname LT4\endcsname{\color[rgb]{0,1,1}}\expandafter\def\csname LT5\endcsname{\color[rgb]{1,1,0}}\expandafter\def\csname LT6\endcsname{\color[rgb]{0,0,0}}\expandafter\def\csname LT7\endcsname{\color[rgb]{1,0.3,0}}\expandafter\def\csname LT8\endcsname{\color[rgb]{0.5,0.5,0.5}}\else
\def\colorrgb#1{\color{black}}\def\colorgray#1{\color[gray]{#1}}\expandafter\def\csname LTw\endcsname{\color{white}}\expandafter\def\csname LTb\endcsname{\color{black}}\expandafter\def\csname LTa\endcsname{\color{black}}\expandafter\def\csname LT0\endcsname{\color{black}}\expandafter\def\csname LT1\endcsname{\color{black}}\expandafter\def\csname LT2\endcsname{\color{black}}\expandafter\def\csname LT3\endcsname{\color{black}}\expandafter\def\csname LT4\endcsname{\color{black}}\expandafter\def\csname LT5\endcsname{\color{black}}\expandafter\def\csname LT6\endcsname{\color{black}}\expandafter\def\csname LT7\endcsname{\color{black}}\expandafter\def\csname LT8\endcsname{\color{black}}\fi
  \fi
    \setlength{\unitlength}{0.0500bp}\ifx\gptboxheight\undefined \newlength{\gptboxheight}\newlength{\gptboxwidth}\newsavebox{\gptboxtext}\fi \setlength{\fboxrule}{0.5pt}\setlength{\fboxsep}{1pt}\begin{picture}(4818.00,3174.00)\gplgaddtomacro\gplbacktext{\csname LTb\endcsname \put(1188,550){\makebox(0,0)[r]{\strut{}$10^{-1}$}}\put(1188,1241){\makebox(0,0)[r]{\strut{}$10^{0}$}}\put(1188,1932){\makebox(0,0)[r]{\strut{}$10^{1}$}}\put(1188,2623){\makebox(0,0)[r]{\strut{}$10^{2}$}}\put(1377,330){\makebox(0,0){\strut{}$1$}}\put(2185,330){\makebox(0,0){\strut{}$10$}}\put(2993,330){\makebox(0,0){\strut{}$100$}}\put(3801,330){\makebox(0,0){\strut{}$1000$}}\put(4609,330){\makebox(0,0){\strut{}$10000$}}}\gplgaddtomacro\gplfronttext{\csname LTb\endcsname \put(3035,2733){\makebox(0,0){\strut{}Kalman-1D Accuracy}}}\gplbacktext
    \put(0,0){\includegraphics{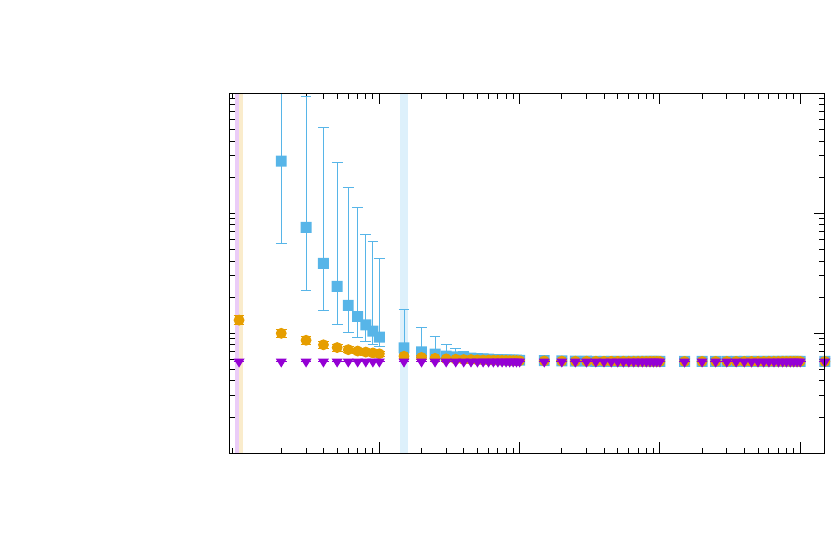}}\gplfronttext
  \end{picture}\endgroup
 \\[-0.3cm]
        \hspace*{-1.8em}
        \begingroup
  \makeatletter
  \providecommand\color[2][]{\GenericError{(gnuplot) \space\space\space\@spaces}{Package color not loaded in conjunction with
      terminal option `colourtext'}{See the gnuplot documentation for explanation.}{Either use 'blacktext' in gnuplot or load the package
      color.sty in LaTeX.}\renewcommand\color[2][]{}}\providecommand\includegraphics[2][]{\GenericError{(gnuplot) \space\space\space\@spaces}{Package graphicx or graphics not loaded}{See the gnuplot documentation for explanation.}{The gnuplot epslatex terminal needs graphicx.sty or graphics.sty.}\renewcommand\includegraphics[2][]{}}\providecommand\rotatebox[2]{#2}\@ifundefined{ifGPcolor}{\newif\ifGPcolor
    \GPcolorfalse
  }{}\@ifundefined{ifGPblacktext}{\newif\ifGPblacktext
    \GPblacktexttrue
  }{}\let\gplgaddtomacro\g@addto@macro
\gdef\gplbacktext{}\gdef\gplfronttext{}\makeatother
  \ifGPblacktext
\def\colorrgb#1{}\def\colorgray#1{}\else
\ifGPcolor
      \def\colorrgb#1{\color[rgb]{#1}}\def\colorgray#1{\color[gray]{#1}}\expandafter\def\csname LTw\endcsname{\color{white}}\expandafter\def\csname LTb\endcsname{\color{black}}\expandafter\def\csname LTa\endcsname{\color{black}}\expandafter\def\csname LT0\endcsname{\color[rgb]{1,0,0}}\expandafter\def\csname LT1\endcsname{\color[rgb]{0,1,0}}\expandafter\def\csname LT2\endcsname{\color[rgb]{0,0,1}}\expandafter\def\csname LT3\endcsname{\color[rgb]{1,0,1}}\expandafter\def\csname LT4\endcsname{\color[rgb]{0,1,1}}\expandafter\def\csname LT5\endcsname{\color[rgb]{1,1,0}}\expandafter\def\csname LT6\endcsname{\color[rgb]{0,0,0}}\expandafter\def\csname LT7\endcsname{\color[rgb]{1,0.3,0}}\expandafter\def\csname LT8\endcsname{\color[rgb]{0.5,0.5,0.5}}\else
\def\colorrgb#1{\color{black}}\def\colorgray#1{\color[gray]{#1}}\expandafter\def\csname LTw\endcsname{\color{white}}\expandafter\def\csname LTb\endcsname{\color{black}}\expandafter\def\csname LTa\endcsname{\color{black}}\expandafter\def\csname LT0\endcsname{\color{black}}\expandafter\def\csname LT1\endcsname{\color{black}}\expandafter\def\csname LT2\endcsname{\color{black}}\expandafter\def\csname LT3\endcsname{\color{black}}\expandafter\def\csname LT4\endcsname{\color{black}}\expandafter\def\csname LT5\endcsname{\color{black}}\expandafter\def\csname LT6\endcsname{\color{black}}\expandafter\def\csname LT7\endcsname{\color{black}}\expandafter\def\csname LT8\endcsname{\color{black}}\fi
  \fi
    \setlength{\unitlength}{0.0500bp}\ifx\gptboxheight\undefined \newlength{\gptboxheight}\newlength{\gptboxwidth}\newsavebox{\gptboxtext}\fi \setlength{\fboxrule}{0.5pt}\setlength{\fboxsep}{1pt}\begin{picture}(4818.00,3174.00)\gplgaddtomacro\gplbacktext{\csname LTb\endcsname \put(1188,550){\makebox(0,0)[r]{\strut{}$10^{-1}$}}\put(1188,896){\makebox(0,0)[r]{\strut{}$10^{0}$}}\put(1188,1241){\makebox(0,0)[r]{\strut{}$10^{1}$}}\put(1188,1587){\makebox(0,0)[r]{\strut{}$10^{2}$}}\put(1188,1932){\makebox(0,0)[r]{\strut{}$10^{3}$}}\put(1188,2278){\makebox(0,0)[r]{\strut{}$10^{4}$}}\put(1188,2623){\makebox(0,0)[r]{\strut{}$10^{5}$}}\put(1377,330){\makebox(0,0){\strut{}$1$}}\put(2185,330){\makebox(0,0){\strut{}$10$}}\put(2993,330){\makebox(0,0){\strut{}$100$}}\put(3801,330){\makebox(0,0){\strut{}$1000$}}\put(4609,330){\makebox(0,0){\strut{}$10000$}}}\gplgaddtomacro\gplfronttext{\csname LTb\endcsname \put(3035,0){\makebox(0,0){\strut{}Number of Particles (log scale)}}\put(3035,2733){\makebox(0,0){\strut{}Outlier Accuracy}}}\gplbacktext
    \put(0,0){\includegraphics{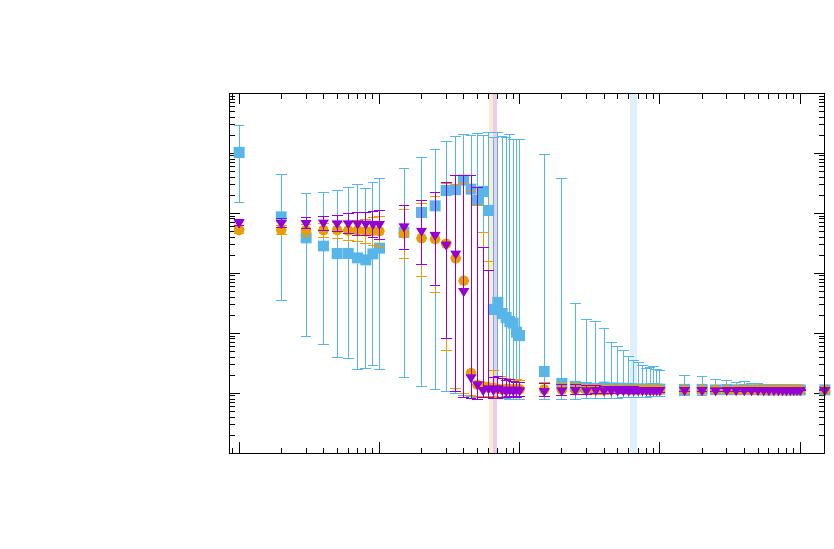}}\gplfronttext
  \end{picture}\endgroup
         \caption{Accuracy as a function of the number of particles.}
        \label{fig:accuracy1}
    \end{minipage}
    \hfill
    \begin{minipage}[t]{0.48\textwidth}
      \small\sf
        \hspace*{-1.8em}
        \begingroup
  \makeatletter
  \providecommand\color[2][]{\GenericError{(gnuplot) \space\space\space\@spaces}{Package color not loaded in conjunction with
      terminal option `colourtext'}{See the gnuplot documentation for explanation.}{Either use 'blacktext' in gnuplot or load the package
      color.sty in LaTeX.}\renewcommand\color[2][]{}}\providecommand\includegraphics[2][]{\GenericError{(gnuplot) \space\space\space\@spaces}{Package graphicx or graphics not loaded}{See the gnuplot documentation for explanation.}{The gnuplot epslatex terminal needs graphicx.sty or graphics.sty.}\renewcommand\includegraphics[2][]{}}\providecommand\rotatebox[2]{#2}\@ifundefined{ifGPcolor}{\newif\ifGPcolor
    \GPcolorfalse
  }{}\@ifundefined{ifGPblacktext}{\newif\ifGPblacktext
    \GPblacktexttrue
  }{}\let\gplgaddtomacro\g@addto@macro
\gdef\gplbacktext{}\gdef\gplfronttext{}\makeatother
  \ifGPblacktext
\def\colorrgb#1{}\def\colorgray#1{}\else
\ifGPcolor
      \def\colorrgb#1{\color[rgb]{#1}}\def\colorgray#1{\color[gray]{#1}}\expandafter\def\csname LTw\endcsname{\color{white}}\expandafter\def\csname LTb\endcsname{\color{black}}\expandafter\def\csname LTa\endcsname{\color{black}}\expandafter\def\csname LT0\endcsname{\color[rgb]{1,0,0}}\expandafter\def\csname LT1\endcsname{\color[rgb]{0,1,0}}\expandafter\def\csname LT2\endcsname{\color[rgb]{0,0,1}}\expandafter\def\csname LT3\endcsname{\color[rgb]{1,0,1}}\expandafter\def\csname LT4\endcsname{\color[rgb]{0,1,1}}\expandafter\def\csname LT5\endcsname{\color[rgb]{1,1,0}}\expandafter\def\csname LT6\endcsname{\color[rgb]{0,0,0}}\expandafter\def\csname LT7\endcsname{\color[rgb]{1,0.3,0}}\expandafter\def\csname LT8\endcsname{\color[rgb]{0.5,0.5,0.5}}\else
\def\colorrgb#1{\color{black}}\def\colorgray#1{\color[gray]{#1}}\expandafter\def\csname LTw\endcsname{\color{white}}\expandafter\def\csname LTb\endcsname{\color{black}}\expandafter\def\csname LTa\endcsname{\color{black}}\expandafter\def\csname LT0\endcsname{\color{black}}\expandafter\def\csname LT1\endcsname{\color{black}}\expandafter\def\csname LT2\endcsname{\color{black}}\expandafter\def\csname LT3\endcsname{\color{black}}\expandafter\def\csname LT4\endcsname{\color{black}}\expandafter\def\csname LT5\endcsname{\color{black}}\expandafter\def\csname LT6\endcsname{\color{black}}\expandafter\def\csname LT7\endcsname{\color{black}}\expandafter\def\csname LT8\endcsname{\color{black}}\fi
  \fi
    \setlength{\unitlength}{0.0500bp}\ifx\gptboxheight\undefined \newlength{\gptboxheight}\newlength{\gptboxwidth}\newsavebox{\gptboxtext}\fi \setlength{\fboxrule}{0.5pt}\setlength{\fboxsep}{1pt}\begin{picture}(4818.00,3174.00)\gplgaddtomacro\gplbacktext{\csname LTb\endcsname \put(1188,550){\makebox(0,0)[r]{\strut{}$10^{-1}$}}\put(1188,896){\makebox(0,0)[r]{\strut{}$10^{0}$}}\put(1188,1241){\makebox(0,0)[r]{\strut{}$10^{1}$}}\put(1188,1587){\makebox(0,0)[r]{\strut{}$10^{2}$}}\put(1188,1932){\makebox(0,0)[r]{\strut{}$10^{3}$}}\put(1188,2278){\makebox(0,0)[r]{\strut{}$10^{4}$}}\put(1188,2623){\makebox(0,0)[r]{\strut{}$10^{5}$}}\put(1377,330){\makebox(0,0){\strut{}$1$}}\put(2185,330){\makebox(0,0){\strut{}$10$}}\put(2993,330){\makebox(0,0){\strut{}$100$}}\put(3801,330){\makebox(0,0){\strut{}$1000$}}\put(4609,330){\makebox(0,0){\strut{}$10000$}}}\gplgaddtomacro\gplfronttext{\csname LTb\endcsname \put(3035,2733){\makebox(0,0){\strut{}Beta-Bernoulli Latency}}}\gplbacktext
    \put(0,0){\includegraphics{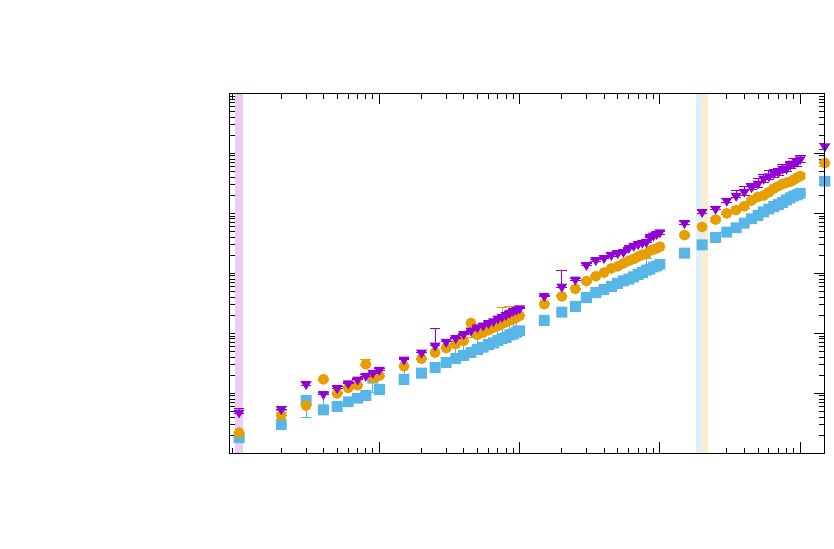}}\gplfronttext
  \end{picture}\endgroup
 \\[-0.3cm]
        \hspace*{-1.8em}
        \begingroup
  \makeatletter
  \providecommand\color[2][]{\GenericError{(gnuplot) \space\space\space\@spaces}{Package color not loaded in conjunction with
      terminal option `colourtext'}{See the gnuplot documentation for explanation.}{Either use 'blacktext' in gnuplot or load the package
      color.sty in LaTeX.}\renewcommand\color[2][]{}}\providecommand\includegraphics[2][]{\GenericError{(gnuplot) \space\space\space\@spaces}{Package graphicx or graphics not loaded}{See the gnuplot documentation for explanation.}{The gnuplot epslatex terminal needs graphicx.sty or graphics.sty.}\renewcommand\includegraphics[2][]{}}\providecommand\rotatebox[2]{#2}\@ifundefined{ifGPcolor}{\newif\ifGPcolor
    \GPcolorfalse
  }{}\@ifundefined{ifGPblacktext}{\newif\ifGPblacktext
    \GPblacktexttrue
  }{}\let\gplgaddtomacro\g@addto@macro
\gdef\gplbacktext{}\gdef\gplfronttext{}\makeatother
  \ifGPblacktext
\def\colorrgb#1{}\def\colorgray#1{}\else
\ifGPcolor
      \def\colorrgb#1{\color[rgb]{#1}}\def\colorgray#1{\color[gray]{#1}}\expandafter\def\csname LTw\endcsname{\color{white}}\expandafter\def\csname LTb\endcsname{\color{black}}\expandafter\def\csname LTa\endcsname{\color{black}}\expandafter\def\csname LT0\endcsname{\color[rgb]{1,0,0}}\expandafter\def\csname LT1\endcsname{\color[rgb]{0,1,0}}\expandafter\def\csname LT2\endcsname{\color[rgb]{0,0,1}}\expandafter\def\csname LT3\endcsname{\color[rgb]{1,0,1}}\expandafter\def\csname LT4\endcsname{\color[rgb]{0,1,1}}\expandafter\def\csname LT5\endcsname{\color[rgb]{1,1,0}}\expandafter\def\csname LT6\endcsname{\color[rgb]{0,0,0}}\expandafter\def\csname LT7\endcsname{\color[rgb]{1,0.3,0}}\expandafter\def\csname LT8\endcsname{\color[rgb]{0.5,0.5,0.5}}\else
\def\colorrgb#1{\color{black}}\def\colorgray#1{\color[gray]{#1}}\expandafter\def\csname LTw\endcsname{\color{white}}\expandafter\def\csname LTb\endcsname{\color{black}}\expandafter\def\csname LTa\endcsname{\color{black}}\expandafter\def\csname LT0\endcsname{\color{black}}\expandafter\def\csname LT1\endcsname{\color{black}}\expandafter\def\csname LT2\endcsname{\color{black}}\expandafter\def\csname LT3\endcsname{\color{black}}\expandafter\def\csname LT4\endcsname{\color{black}}\expandafter\def\csname LT5\endcsname{\color{black}}\expandafter\def\csname LT6\endcsname{\color{black}}\expandafter\def\csname LT7\endcsname{\color{black}}\expandafter\def\csname LT8\endcsname{\color{black}}\fi
  \fi
    \setlength{\unitlength}{0.0500bp}\ifx\gptboxheight\undefined \newlength{\gptboxheight}\newlength{\gptboxwidth}\newsavebox{\gptboxtext}\fi \setlength{\fboxrule}{0.5pt}\setlength{\fboxsep}{1pt}\begin{picture}(4818.00,3174.00)\gplgaddtomacro\gplbacktext{\csname LTb\endcsname \put(1188,550){\makebox(0,0)[r]{\strut{}$10^{-1}$}}\put(1188,896){\makebox(0,0)[r]{\strut{}$10^{0}$}}\put(1188,1241){\makebox(0,0)[r]{\strut{}$10^{1}$}}\put(1188,1587){\makebox(0,0)[r]{\strut{}$10^{2}$}}\put(1188,1932){\makebox(0,0)[r]{\strut{}$10^{3}$}}\put(1188,2278){\makebox(0,0)[r]{\strut{}$10^{4}$}}\put(1188,2623){\makebox(0,0)[r]{\strut{}$10^{5}$}}\put(1377,330){\makebox(0,0){\strut{}$1$}}\put(2185,330){\makebox(0,0){\strut{}$10$}}\put(2993,330){\makebox(0,0){\strut{}$100$}}\put(3801,330){\makebox(0,0){\strut{}$1000$}}\put(4609,330){\makebox(0,0){\strut{}$10000$}}}\gplgaddtomacro\gplfronttext{\csname LTb\endcsname \put(583,1586){\rotatebox{-270}{\makebox(0,0){\strut{}Execution time of 500 steps in ms (log scale)}}}\put(3035,2733){\makebox(0,0){\strut{}Gaussian-Gaussian Latency}}}\gplbacktext
    \put(0,0){\includegraphics{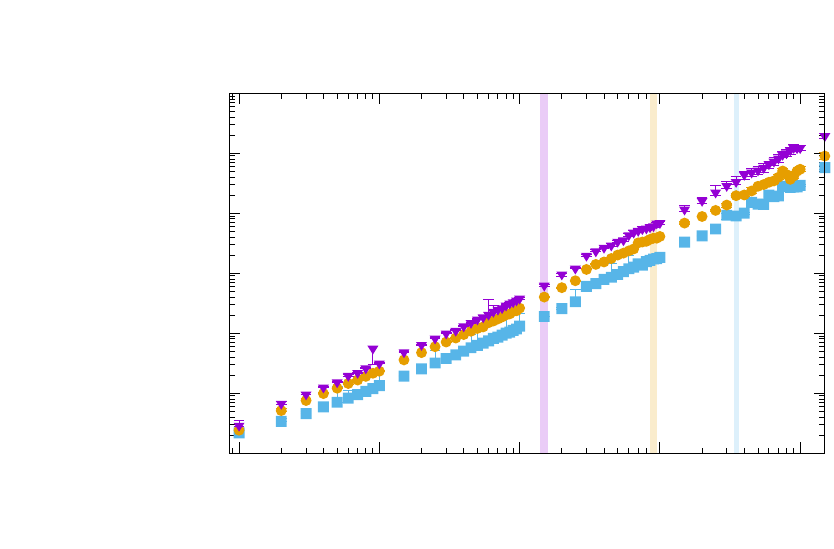}}\gplfronttext
  \end{picture}\endgroup
 \\[-0.3cm]
        \hspace*{-1.8em}
        \begingroup
  \makeatletter
  \providecommand\color[2][]{\GenericError{(gnuplot) \space\space\space\@spaces}{Package color not loaded in conjunction with
      terminal option `colourtext'}{See the gnuplot documentation for explanation.}{Either use 'blacktext' in gnuplot or load the package
      color.sty in LaTeX.}\renewcommand\color[2][]{}}\providecommand\includegraphics[2][]{\GenericError{(gnuplot) \space\space\space\@spaces}{Package graphicx or graphics not loaded}{See the gnuplot documentation for explanation.}{The gnuplot epslatex terminal needs graphicx.sty or graphics.sty.}\renewcommand\includegraphics[2][]{}}\providecommand\rotatebox[2]{#2}\@ifundefined{ifGPcolor}{\newif\ifGPcolor
    \GPcolorfalse
  }{}\@ifundefined{ifGPblacktext}{\newif\ifGPblacktext
    \GPblacktexttrue
  }{}\let\gplgaddtomacro\g@addto@macro
\gdef\gplbacktext{}\gdef\gplfronttext{}\makeatother
  \ifGPblacktext
\def\colorrgb#1{}\def\colorgray#1{}\else
\ifGPcolor
      \def\colorrgb#1{\color[rgb]{#1}}\def\colorgray#1{\color[gray]{#1}}\expandafter\def\csname LTw\endcsname{\color{white}}\expandafter\def\csname LTb\endcsname{\color{black}}\expandafter\def\csname LTa\endcsname{\color{black}}\expandafter\def\csname LT0\endcsname{\color[rgb]{1,0,0}}\expandafter\def\csname LT1\endcsname{\color[rgb]{0,1,0}}\expandafter\def\csname LT2\endcsname{\color[rgb]{0,0,1}}\expandafter\def\csname LT3\endcsname{\color[rgb]{1,0,1}}\expandafter\def\csname LT4\endcsname{\color[rgb]{0,1,1}}\expandafter\def\csname LT5\endcsname{\color[rgb]{1,1,0}}\expandafter\def\csname LT6\endcsname{\color[rgb]{0,0,0}}\expandafter\def\csname LT7\endcsname{\color[rgb]{1,0.3,0}}\expandafter\def\csname LT8\endcsname{\color[rgb]{0.5,0.5,0.5}}\else
\def\colorrgb#1{\color{black}}\def\colorgray#1{\color[gray]{#1}}\expandafter\def\csname LTw\endcsname{\color{white}}\expandafter\def\csname LTb\endcsname{\color{black}}\expandafter\def\csname LTa\endcsname{\color{black}}\expandafter\def\csname LT0\endcsname{\color{black}}\expandafter\def\csname LT1\endcsname{\color{black}}\expandafter\def\csname LT2\endcsname{\color{black}}\expandafter\def\csname LT3\endcsname{\color{black}}\expandafter\def\csname LT4\endcsname{\color{black}}\expandafter\def\csname LT5\endcsname{\color{black}}\expandafter\def\csname LT6\endcsname{\color{black}}\expandafter\def\csname LT7\endcsname{\color{black}}\expandafter\def\csname LT8\endcsname{\color{black}}\fi
  \fi
    \setlength{\unitlength}{0.0500bp}\ifx\gptboxheight\undefined \newlength{\gptboxheight}\newlength{\gptboxwidth}\newsavebox{\gptboxtext}\fi \setlength{\fboxrule}{0.5pt}\setlength{\fboxsep}{1pt}\begin{picture}(4818.00,3174.00)\gplgaddtomacro\gplbacktext{\csname LTb\endcsname \put(1188,550){\makebox(0,0)[r]{\strut{}$10^{-1}$}}\put(1188,896){\makebox(0,0)[r]{\strut{}$10^{0}$}}\put(1188,1241){\makebox(0,0)[r]{\strut{}$10^{1}$}}\put(1188,1587){\makebox(0,0)[r]{\strut{}$10^{2}$}}\put(1188,1932){\makebox(0,0)[r]{\strut{}$10^{3}$}}\put(1188,2278){\makebox(0,0)[r]{\strut{}$10^{4}$}}\put(1188,2623){\makebox(0,0)[r]{\strut{}$10^{5}$}}\put(1377,330){\makebox(0,0){\strut{}$1$}}\put(2185,330){\makebox(0,0){\strut{}$10$}}\put(2993,330){\makebox(0,0){\strut{}$100$}}\put(3801,330){\makebox(0,0){\strut{}$1000$}}\put(4609,330){\makebox(0,0){\strut{}$10000$}}}\gplgaddtomacro\gplfronttext{\csname LTb\endcsname \put(3035,2733){\makebox(0,0){\strut{}Kalman-1D Latency}}}\gplbacktext
    \put(0,0){\includegraphics{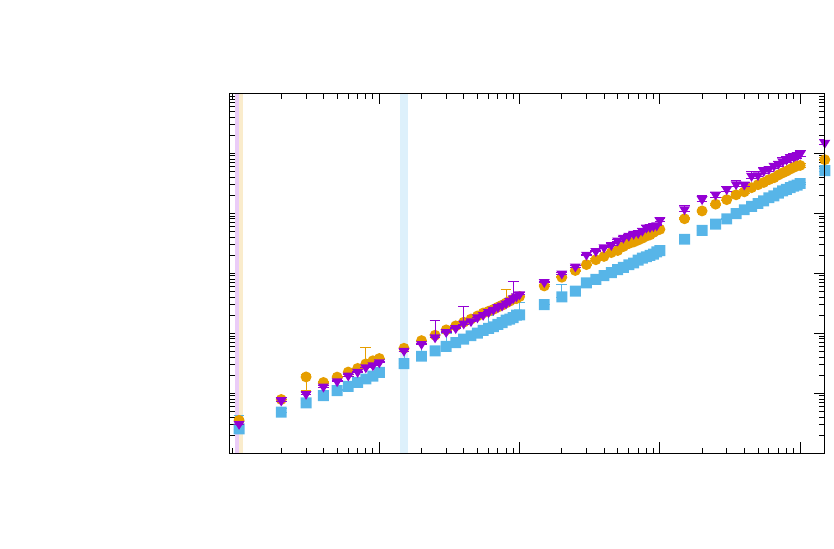}}\gplfronttext
  \end{picture}\endgroup
 \\[-0.3cm]
        \hspace*{-1.8em}
        \begingroup
  \makeatletter
  \providecommand\color[2][]{\GenericError{(gnuplot) \space\space\space\@spaces}{Package color not loaded in conjunction with
      terminal option `colourtext'}{See the gnuplot documentation for explanation.}{Either use 'blacktext' in gnuplot or load the package
      color.sty in LaTeX.}\renewcommand\color[2][]{}}\providecommand\includegraphics[2][]{\GenericError{(gnuplot) \space\space\space\@spaces}{Package graphicx or graphics not loaded}{See the gnuplot documentation for explanation.}{The gnuplot epslatex terminal needs graphicx.sty or graphics.sty.}\renewcommand\includegraphics[2][]{}}\providecommand\rotatebox[2]{#2}\@ifundefined{ifGPcolor}{\newif\ifGPcolor
    \GPcolorfalse
  }{}\@ifundefined{ifGPblacktext}{\newif\ifGPblacktext
    \GPblacktexttrue
  }{}\let\gplgaddtomacro\g@addto@macro
\gdef\gplbacktext{}\gdef\gplfronttext{}\makeatother
  \ifGPblacktext
\def\colorrgb#1{}\def\colorgray#1{}\else
\ifGPcolor
      \def\colorrgb#1{\color[rgb]{#1}}\def\colorgray#1{\color[gray]{#1}}\expandafter\def\csname LTw\endcsname{\color{white}}\expandafter\def\csname LTb\endcsname{\color{black}}\expandafter\def\csname LTa\endcsname{\color{black}}\expandafter\def\csname LT0\endcsname{\color[rgb]{1,0,0}}\expandafter\def\csname LT1\endcsname{\color[rgb]{0,1,0}}\expandafter\def\csname LT2\endcsname{\color[rgb]{0,0,1}}\expandafter\def\csname LT3\endcsname{\color[rgb]{1,0,1}}\expandafter\def\csname LT4\endcsname{\color[rgb]{0,1,1}}\expandafter\def\csname LT5\endcsname{\color[rgb]{1,1,0}}\expandafter\def\csname LT6\endcsname{\color[rgb]{0,0,0}}\expandafter\def\csname LT7\endcsname{\color[rgb]{1,0.3,0}}\expandafter\def\csname LT8\endcsname{\color[rgb]{0.5,0.5,0.5}}\else
\def\colorrgb#1{\color{black}}\def\colorgray#1{\color[gray]{#1}}\expandafter\def\csname LTw\endcsname{\color{white}}\expandafter\def\csname LTb\endcsname{\color{black}}\expandafter\def\csname LTa\endcsname{\color{black}}\expandafter\def\csname LT0\endcsname{\color{black}}\expandafter\def\csname LT1\endcsname{\color{black}}\expandafter\def\csname LT2\endcsname{\color{black}}\expandafter\def\csname LT3\endcsname{\color{black}}\expandafter\def\csname LT4\endcsname{\color{black}}\expandafter\def\csname LT5\endcsname{\color{black}}\expandafter\def\csname LT6\endcsname{\color{black}}\expandafter\def\csname LT7\endcsname{\color{black}}\expandafter\def\csname LT8\endcsname{\color{black}}\fi
  \fi
    \setlength{\unitlength}{0.0500bp}\ifx\gptboxheight\undefined \newlength{\gptboxheight}\newlength{\gptboxwidth}\newsavebox{\gptboxtext}\fi \setlength{\fboxrule}{0.5pt}\setlength{\fboxsep}{1pt}\begin{picture}(4818.00,3174.00)\gplgaddtomacro\gplbacktext{\csname LTb\endcsname \put(1188,550){\makebox(0,0)[r]{\strut{}$10^{-1}$}}\put(1188,896){\makebox(0,0)[r]{\strut{}$10^{0}$}}\put(1188,1241){\makebox(0,0)[r]{\strut{}$10^{1}$}}\put(1188,1587){\makebox(0,0)[r]{\strut{}$10^{2}$}}\put(1188,1932){\makebox(0,0)[r]{\strut{}$10^{3}$}}\put(1188,2278){\makebox(0,0)[r]{\strut{}$10^{4}$}}\put(1188,2623){\makebox(0,0)[r]{\strut{}$10^{5}$}}\put(1377,330){\makebox(0,0){\strut{}$1$}}\put(2185,330){\makebox(0,0){\strut{}$10$}}\put(2993,330){\makebox(0,0){\strut{}$100$}}\put(3801,330){\makebox(0,0){\strut{}$1000$}}\put(4609,330){\makebox(0,0){\strut{}$10000$}}}\gplgaddtomacro\gplfronttext{\csname LTb\endcsname \put(3035,0){\makebox(0,0){\strut{}Number of Particles (log scale)}}\put(3035,2733){\makebox(0,0){\strut{}Outlier Latency}}}\gplbacktext
    \put(0,0){\includegraphics{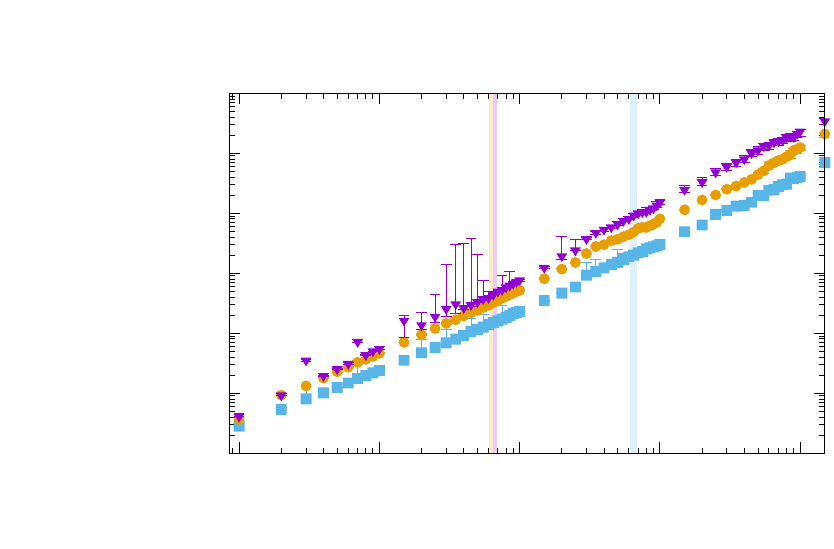}}\gplfronttext
  \end{picture}\endgroup
         \caption{Runtime performance as a function of particles.}
        \label{fig:perf-particles1}
    \end{minipage}
\end{figure*}

\begin{figure*}[p]
  \begin{center}
    \small\sf
    \pfbullet~PF $\quad$ \bdsbullet~BDS $\quad$ \pmdsbullet~SDS
  \end{center}
    \begin{minipage}[t]{0.48\textwidth}
      \small\sf
        \hspace*{-1.8em}
        \begingroup
  \makeatletter
  \providecommand\color[2][]{\GenericError{(gnuplot) \space\space\space\@spaces}{Package color not loaded in conjunction with
      terminal option `colourtext'}{See the gnuplot documentation for explanation.}{Either use 'blacktext' in gnuplot or load the package
      color.sty in LaTeX.}\renewcommand\color[2][]{}}\providecommand\includegraphics[2][]{\GenericError{(gnuplot) \space\space\space\@spaces}{Package graphicx or graphics not loaded}{See the gnuplot documentation for explanation.}{The gnuplot epslatex terminal needs graphicx.sty or graphics.sty.}\renewcommand\includegraphics[2][]{}}\providecommand\rotatebox[2]{#2}\@ifundefined{ifGPcolor}{\newif\ifGPcolor
    \GPcolorfalse
  }{}\@ifundefined{ifGPblacktext}{\newif\ifGPblacktext
    \GPblacktexttrue
  }{}\let\gplgaddtomacro\g@addto@macro
\gdef\gplbacktext{}\gdef\gplfronttext{}\makeatother
  \ifGPblacktext
\def\colorrgb#1{}\def\colorgray#1{}\else
\ifGPcolor
      \def\colorrgb#1{\color[rgb]{#1}}\def\colorgray#1{\color[gray]{#1}}\expandafter\def\csname LTw\endcsname{\color{white}}\expandafter\def\csname LTb\endcsname{\color{black}}\expandafter\def\csname LTa\endcsname{\color{black}}\expandafter\def\csname LT0\endcsname{\color[rgb]{1,0,0}}\expandafter\def\csname LT1\endcsname{\color[rgb]{0,1,0}}\expandafter\def\csname LT2\endcsname{\color[rgb]{0,0,1}}\expandafter\def\csname LT3\endcsname{\color[rgb]{1,0,1}}\expandafter\def\csname LT4\endcsname{\color[rgb]{0,1,1}}\expandafter\def\csname LT5\endcsname{\color[rgb]{1,1,0}}\expandafter\def\csname LT6\endcsname{\color[rgb]{0,0,0}}\expandafter\def\csname LT7\endcsname{\color[rgb]{1,0.3,0}}\expandafter\def\csname LT8\endcsname{\color[rgb]{0.5,0.5,0.5}}\else
\def\colorrgb#1{\color{black}}\def\colorgray#1{\color[gray]{#1}}\expandafter\def\csname LTw\endcsname{\color{white}}\expandafter\def\csname LTb\endcsname{\color{black}}\expandafter\def\csname LTa\endcsname{\color{black}}\expandafter\def\csname LT0\endcsname{\color{black}}\expandafter\def\csname LT1\endcsname{\color{black}}\expandafter\def\csname LT2\endcsname{\color{black}}\expandafter\def\csname LT3\endcsname{\color{black}}\expandafter\def\csname LT4\endcsname{\color{black}}\expandafter\def\csname LT5\endcsname{\color{black}}\expandafter\def\csname LT6\endcsname{\color{black}}\expandafter\def\csname LT7\endcsname{\color{black}}\expandafter\def\csname LT8\endcsname{\color{black}}\fi
  \fi
    \setlength{\unitlength}{0.0500bp}\ifx\gptboxheight\undefined \newlength{\gptboxheight}\newlength{\gptboxwidth}\newsavebox{\gptboxtext}\fi \setlength{\fboxrule}{0.5pt}\setlength{\fboxsep}{1pt}\begin{picture}(4818.00,3174.00)\gplgaddtomacro\gplbacktext{\csname LTb\endcsname \put(1188,550){\makebox(0,0)[r]{\strut{}$10^{1}$}}\put(1188,1241){\makebox(0,0)[r]{\strut{}$10^{2}$}}\put(1188,1932){\makebox(0,0)[r]{\strut{}$10^{3}$}}\put(1188,2623){\makebox(0,0)[r]{\strut{}$10^{4}$}}\put(1377,330){\makebox(0,0){\strut{}$1$}}\put(2185,330){\makebox(0,0){\strut{}$10$}}\put(2993,330){\makebox(0,0){\strut{}$100$}}\put(3801,330){\makebox(0,0){\strut{}$1000$}}\put(4609,330){\makebox(0,0){\strut{}$10000$}}}\gplgaddtomacro\gplfronttext{\csname LTb\endcsname \put(3035,2733){\makebox(0,0){\strut{}Robot Accuracy}}}\gplbacktext
    \put(0,0){\includegraphics{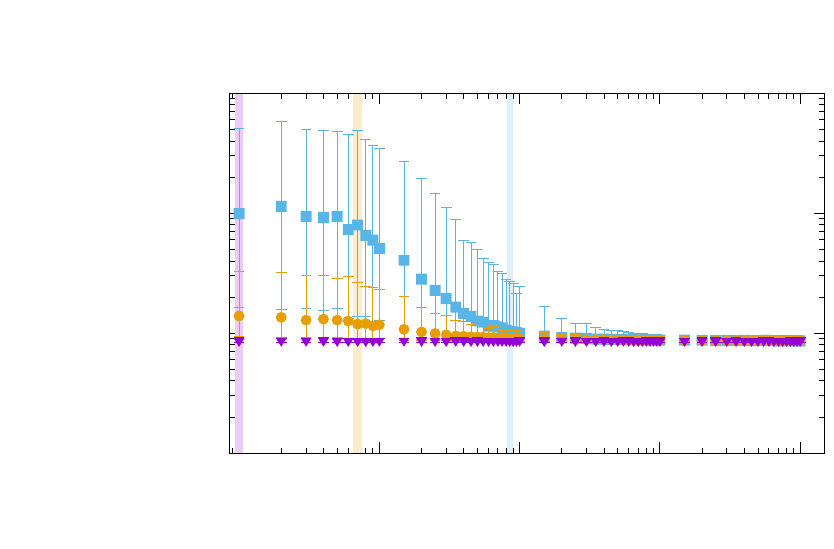}}\gplfronttext
  \end{picture}\endgroup
 \\[-0.3cm]
        \hspace*{-1.8em}
        \begingroup
  \makeatletter
  \providecommand\color[2][]{\GenericError{(gnuplot) \space\space\space\@spaces}{Package color not loaded in conjunction with
      terminal option `colourtext'}{See the gnuplot documentation for explanation.}{Either use 'blacktext' in gnuplot or load the package
      color.sty in LaTeX.}\renewcommand\color[2][]{}}\providecommand\includegraphics[2][]{\GenericError{(gnuplot) \space\space\space\@spaces}{Package graphicx or graphics not loaded}{See the gnuplot documentation for explanation.}{The gnuplot epslatex terminal needs graphicx.sty or graphics.sty.}\renewcommand\includegraphics[2][]{}}\providecommand\rotatebox[2]{#2}\@ifundefined{ifGPcolor}{\newif\ifGPcolor
    \GPcolorfalse
  }{}\@ifundefined{ifGPblacktext}{\newif\ifGPblacktext
    \GPblacktexttrue
  }{}\let\gplgaddtomacro\g@addto@macro
\gdef\gplbacktext{}\gdef\gplfronttext{}\makeatother
  \ifGPblacktext
\def\colorrgb#1{}\def\colorgray#1{}\else
\ifGPcolor
      \def\colorrgb#1{\color[rgb]{#1}}\def\colorgray#1{\color[gray]{#1}}\expandafter\def\csname LTw\endcsname{\color{white}}\expandafter\def\csname LTb\endcsname{\color{black}}\expandafter\def\csname LTa\endcsname{\color{black}}\expandafter\def\csname LT0\endcsname{\color[rgb]{1,0,0}}\expandafter\def\csname LT1\endcsname{\color[rgb]{0,1,0}}\expandafter\def\csname LT2\endcsname{\color[rgb]{0,0,1}}\expandafter\def\csname LT3\endcsname{\color[rgb]{1,0,1}}\expandafter\def\csname LT4\endcsname{\color[rgb]{0,1,1}}\expandafter\def\csname LT5\endcsname{\color[rgb]{1,1,0}}\expandafter\def\csname LT6\endcsname{\color[rgb]{0,0,0}}\expandafter\def\csname LT7\endcsname{\color[rgb]{1,0.3,0}}\expandafter\def\csname LT8\endcsname{\color[rgb]{0.5,0.5,0.5}}\else
\def\colorrgb#1{\color{black}}\def\colorgray#1{\color[gray]{#1}}\expandafter\def\csname LTw\endcsname{\color{white}}\expandafter\def\csname LTb\endcsname{\color{black}}\expandafter\def\csname LTa\endcsname{\color{black}}\expandafter\def\csname LT0\endcsname{\color{black}}\expandafter\def\csname LT1\endcsname{\color{black}}\expandafter\def\csname LT2\endcsname{\color{black}}\expandafter\def\csname LT3\endcsname{\color{black}}\expandafter\def\csname LT4\endcsname{\color{black}}\expandafter\def\csname LT5\endcsname{\color{black}}\expandafter\def\csname LT6\endcsname{\color{black}}\expandafter\def\csname LT7\endcsname{\color{black}}\expandafter\def\csname LT8\endcsname{\color{black}}\fi
  \fi
    \setlength{\unitlength}{0.0500bp}\ifx\gptboxheight\undefined \newlength{\gptboxheight}\newlength{\gptboxwidth}\newsavebox{\gptboxtext}\fi \setlength{\fboxrule}{0.5pt}\setlength{\fboxsep}{1pt}\begin{picture}(4818.00,3174.00)\gplgaddtomacro\gplbacktext{\csname LTb\endcsname \put(1188,550){\makebox(0,0)[r]{\strut{}$10^{-1}$}}\put(1188,1587){\makebox(0,0)[r]{\strut{}$10^{0}$}}\put(1188,2623){\makebox(0,0)[r]{\strut{}$10^{1}$}}\put(1377,330){\makebox(0,0){\strut{}$1$}}\put(2185,330){\makebox(0,0){\strut{}$10$}}\put(2993,330){\makebox(0,0){\strut{}$100$}}\put(3801,330){\makebox(0,0){\strut{}$1000$}}\put(4609,330){\makebox(0,0){\strut{}$10000$}}}\gplgaddtomacro\gplfronttext{\csname LTb\endcsname \put(583,1586){\rotatebox{-270}{\makebox(0,0){\strut{}Loss (log scale)}}}\put(3035,2733){\makebox(0,0){\strut{}SLAM Accuracy}}}\gplbacktext
    \put(0,0){\includegraphics{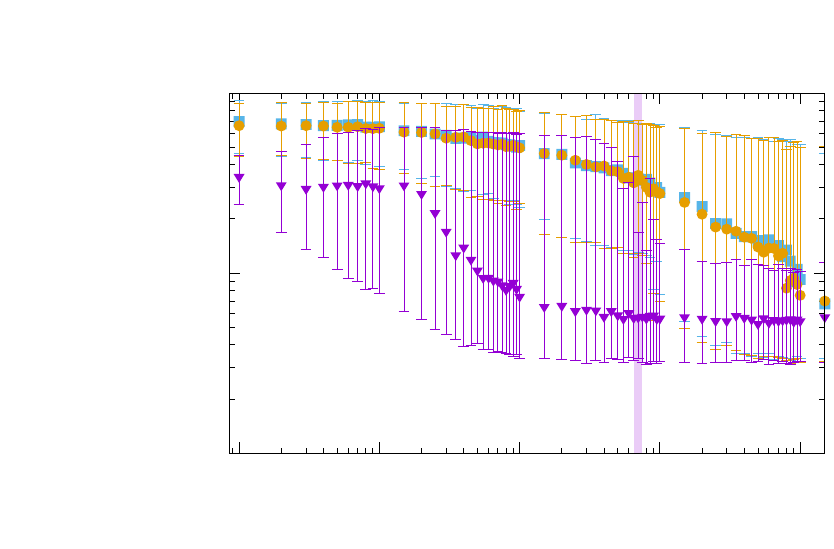}}\gplfronttext
  \end{picture}\endgroup
 \\[-0.3cm]
        \hspace*{-1.8em}
        \begingroup
  \makeatletter
  \providecommand\color[2][]{\GenericError{(gnuplot) \space\space\space\@spaces}{Package color not loaded in conjunction with
      terminal option `colourtext'}{See the gnuplot documentation for explanation.}{Either use 'blacktext' in gnuplot or load the package
      color.sty in LaTeX.}\renewcommand\color[2][]{}}\providecommand\includegraphics[2][]{\GenericError{(gnuplot) \space\space\space\@spaces}{Package graphicx or graphics not loaded}{See the gnuplot documentation for explanation.}{The gnuplot epslatex terminal needs graphicx.sty or graphics.sty.}\renewcommand\includegraphics[2][]{}}\providecommand\rotatebox[2]{#2}\@ifundefined{ifGPcolor}{\newif\ifGPcolor
    \GPcolorfalse
  }{}\@ifundefined{ifGPblacktext}{\newif\ifGPblacktext
    \GPblacktexttrue
  }{}\let\gplgaddtomacro\g@addto@macro
\gdef\gplbacktext{}\gdef\gplfronttext{}\makeatother
  \ifGPblacktext
\def\colorrgb#1{}\def\colorgray#1{}\else
\ifGPcolor
      \def\colorrgb#1{\color[rgb]{#1}}\def\colorgray#1{\color[gray]{#1}}\expandafter\def\csname LTw\endcsname{\color{white}}\expandafter\def\csname LTb\endcsname{\color{black}}\expandafter\def\csname LTa\endcsname{\color{black}}\expandafter\def\csname LT0\endcsname{\color[rgb]{1,0,0}}\expandafter\def\csname LT1\endcsname{\color[rgb]{0,1,0}}\expandafter\def\csname LT2\endcsname{\color[rgb]{0,0,1}}\expandafter\def\csname LT3\endcsname{\color[rgb]{1,0,1}}\expandafter\def\csname LT4\endcsname{\color[rgb]{0,1,1}}\expandafter\def\csname LT5\endcsname{\color[rgb]{1,1,0}}\expandafter\def\csname LT6\endcsname{\color[rgb]{0,0,0}}\expandafter\def\csname LT7\endcsname{\color[rgb]{1,0.3,0}}\expandafter\def\csname LT8\endcsname{\color[rgb]{0.5,0.5,0.5}}\else
\def\colorrgb#1{\color{black}}\def\colorgray#1{\color[gray]{#1}}\expandafter\def\csname LTw\endcsname{\color{white}}\expandafter\def\csname LTb\endcsname{\color{black}}\expandafter\def\csname LTa\endcsname{\color{black}}\expandafter\def\csname LT0\endcsname{\color{black}}\expandafter\def\csname LT1\endcsname{\color{black}}\expandafter\def\csname LT2\endcsname{\color{black}}\expandafter\def\csname LT3\endcsname{\color{black}}\expandafter\def\csname LT4\endcsname{\color{black}}\expandafter\def\csname LT5\endcsname{\color{black}}\expandafter\def\csname LT6\endcsname{\color{black}}\expandafter\def\csname LT7\endcsname{\color{black}}\expandafter\def\csname LT8\endcsname{\color{black}}\fi
  \fi
    \setlength{\unitlength}{0.0500bp}\ifx\gptboxheight\undefined \newlength{\gptboxheight}\newlength{\gptboxwidth}\newsavebox{\gptboxtext}\fi \setlength{\fboxrule}{0.5pt}\setlength{\fboxsep}{1pt}\begin{picture}(4818.00,3174.00)\gplgaddtomacro\gplbacktext{\csname LTb\endcsname \put(1188,550){\makebox(0,0)[r]{\strut{}$10^{0}$}}\put(1188,1241){\makebox(0,0)[r]{\strut{}$10^{1}$}}\put(1188,1932){\makebox(0,0)[r]{\strut{}$10^{2}$}}\put(1188,2623){\makebox(0,0)[r]{\strut{}$10^{3}$}}\put(1320,330){\makebox(0,0){\strut{}$1$}}\put(2142,330){\makebox(0,0){\strut{}$10$}}\put(2963,330){\makebox(0,0){\strut{}$100$}}\put(3785,330){\makebox(0,0){\strut{}$1000$}}\put(4606,330){\makebox(0,0){\strut{}$10000$}}}\gplgaddtomacro\gplfronttext{\csname LTb\endcsname \put(3035,0){\makebox(0,0){\strut{}Number of Particles (log scale)}}\put(3035,2733){\makebox(0,0){\strut{}MTT Accuracy}}}\gplbacktext
    \put(0,0){\includegraphics{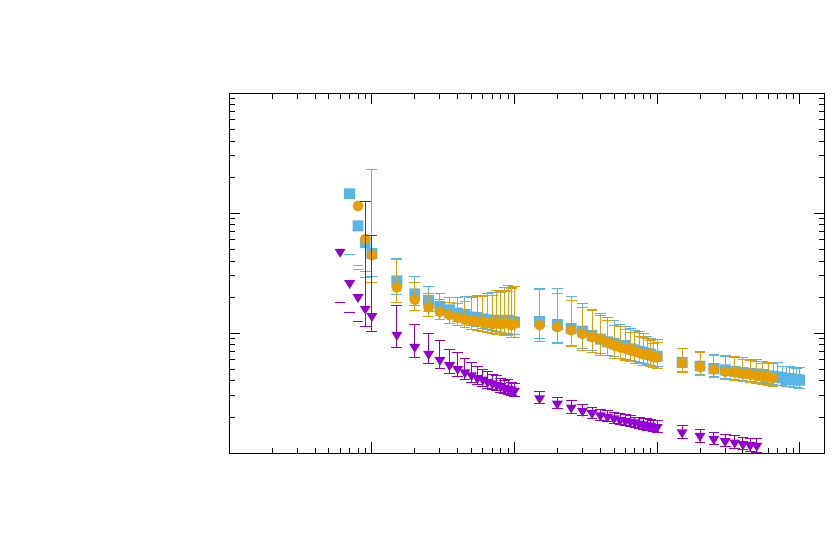}}\gplfronttext
  \end{picture}\endgroup
         \caption{Accuracy as a function of the number of particles.}
        \label{fig:accuracy2}
    \end{minipage}
    \hfill
    \begin{minipage}[t]{0.48\textwidth}
      \small\sf
        \begingroup
  \makeatletter
  \providecommand\color[2][]{\GenericError{(gnuplot) \space\space\space\@spaces}{Package color not loaded in conjunction with
      terminal option `colourtext'}{See the gnuplot documentation for explanation.}{Either use 'blacktext' in gnuplot or load the package
      color.sty in LaTeX.}\renewcommand\color[2][]{}}\providecommand\includegraphics[2][]{\GenericError{(gnuplot) \space\space\space\@spaces}{Package graphicx or graphics not loaded}{See the gnuplot documentation for explanation.}{The gnuplot epslatex terminal needs graphicx.sty or graphics.sty.}\renewcommand\includegraphics[2][]{}}\providecommand\rotatebox[2]{#2}\@ifundefined{ifGPcolor}{\newif\ifGPcolor
    \GPcolorfalse
  }{}\@ifundefined{ifGPblacktext}{\newif\ifGPblacktext
    \GPblacktexttrue
  }{}\let\gplgaddtomacro\g@addto@macro
\gdef\gplbacktext{}\gdef\gplfronttext{}\makeatother
  \ifGPblacktext
\def\colorrgb#1{}\def\colorgray#1{}\else
\ifGPcolor
      \def\colorrgb#1{\color[rgb]{#1}}\def\colorgray#1{\color[gray]{#1}}\expandafter\def\csname LTw\endcsname{\color{white}}\expandafter\def\csname LTb\endcsname{\color{black}}\expandafter\def\csname LTa\endcsname{\color{black}}\expandafter\def\csname LT0\endcsname{\color[rgb]{1,0,0}}\expandafter\def\csname LT1\endcsname{\color[rgb]{0,1,0}}\expandafter\def\csname LT2\endcsname{\color[rgb]{0,0,1}}\expandafter\def\csname LT3\endcsname{\color[rgb]{1,0,1}}\expandafter\def\csname LT4\endcsname{\color[rgb]{0,1,1}}\expandafter\def\csname LT5\endcsname{\color[rgb]{1,1,0}}\expandafter\def\csname LT6\endcsname{\color[rgb]{0,0,0}}\expandafter\def\csname LT7\endcsname{\color[rgb]{1,0.3,0}}\expandafter\def\csname LT8\endcsname{\color[rgb]{0.5,0.5,0.5}}\else
\def\colorrgb#1{\color{black}}\def\colorgray#1{\color[gray]{#1}}\expandafter\def\csname LTw\endcsname{\color{white}}\expandafter\def\csname LTb\endcsname{\color{black}}\expandafter\def\csname LTa\endcsname{\color{black}}\expandafter\def\csname LT0\endcsname{\color{black}}\expandafter\def\csname LT1\endcsname{\color{black}}\expandafter\def\csname LT2\endcsname{\color{black}}\expandafter\def\csname LT3\endcsname{\color{black}}\expandafter\def\csname LT4\endcsname{\color{black}}\expandafter\def\csname LT5\endcsname{\color{black}}\expandafter\def\csname LT6\endcsname{\color{black}}\expandafter\def\csname LT7\endcsname{\color{black}}\expandafter\def\csname LT8\endcsname{\color{black}}\fi
  \fi
    \setlength{\unitlength}{0.0500bp}\ifx\gptboxheight\undefined \newlength{\gptboxheight}\newlength{\gptboxwidth}\newsavebox{\gptboxtext}\fi \setlength{\fboxrule}{0.5pt}\setlength{\fboxsep}{1pt}\begin{picture}(4818.00,3174.00)\gplgaddtomacro\gplbacktext{\csname LTb\endcsname \put(1188,550){\makebox(0,0)[r]{\strut{}$10^{1}$}}\put(1188,965){\makebox(0,0)[r]{\strut{}$10^{2}$}}\put(1188,1379){\makebox(0,0)[r]{\strut{}$10^{3}$}}\put(1188,1794){\makebox(0,0)[r]{\strut{}$10^{4}$}}\put(1188,2208){\makebox(0,0)[r]{\strut{}$10^{5}$}}\put(1188,2623){\makebox(0,0)[r]{\strut{}$10^{6}$}}\put(1377,330){\makebox(0,0){\strut{}$1$}}\put(2185,330){\makebox(0,0){\strut{}$10$}}\put(2993,330){\makebox(0,0){\strut{}$100$}}\put(3801,330){\makebox(0,0){\strut{}$1000$}}\put(4609,330){\makebox(0,0){\strut{}$10000$}}}\gplgaddtomacro\gplfronttext{\csname LTb\endcsname \put(3035,2733){\makebox(0,0){\strut{}Robot Latency}}}\gplbacktext
    \put(0,0){\includegraphics{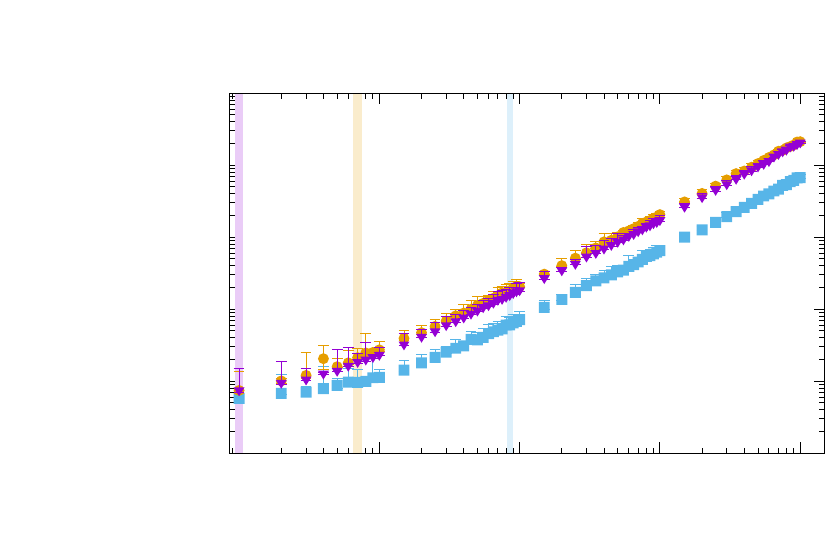}}\gplfronttext
  \end{picture}\endgroup
 \\[-0.3cm]
        \begingroup
  \makeatletter
  \providecommand\color[2][]{\GenericError{(gnuplot) \space\space\space\@spaces}{Package color not loaded in conjunction with
      terminal option `colourtext'}{See the gnuplot documentation for explanation.}{Either use 'blacktext' in gnuplot or load the package
      color.sty in LaTeX.}\renewcommand\color[2][]{}}\providecommand\includegraphics[2][]{\GenericError{(gnuplot) \space\space\space\@spaces}{Package graphicx or graphics not loaded}{See the gnuplot documentation for explanation.}{The gnuplot epslatex terminal needs graphicx.sty or graphics.sty.}\renewcommand\includegraphics[2][]{}}\providecommand\rotatebox[2]{#2}\@ifundefined{ifGPcolor}{\newif\ifGPcolor
    \GPcolorfalse
  }{}\@ifundefined{ifGPblacktext}{\newif\ifGPblacktext
    \GPblacktexttrue
  }{}\let\gplgaddtomacro\g@addto@macro
\gdef\gplbacktext{}\gdef\gplfronttext{}\makeatother
  \ifGPblacktext
\def\colorrgb#1{}\def\colorgray#1{}\else
\ifGPcolor
      \def\colorrgb#1{\color[rgb]{#1}}\def\colorgray#1{\color[gray]{#1}}\expandafter\def\csname LTw\endcsname{\color{white}}\expandafter\def\csname LTb\endcsname{\color{black}}\expandafter\def\csname LTa\endcsname{\color{black}}\expandafter\def\csname LT0\endcsname{\color[rgb]{1,0,0}}\expandafter\def\csname LT1\endcsname{\color[rgb]{0,1,0}}\expandafter\def\csname LT2\endcsname{\color[rgb]{0,0,1}}\expandafter\def\csname LT3\endcsname{\color[rgb]{1,0,1}}\expandafter\def\csname LT4\endcsname{\color[rgb]{0,1,1}}\expandafter\def\csname LT5\endcsname{\color[rgb]{1,1,0}}\expandafter\def\csname LT6\endcsname{\color[rgb]{0,0,0}}\expandafter\def\csname LT7\endcsname{\color[rgb]{1,0.3,0}}\expandafter\def\csname LT8\endcsname{\color[rgb]{0.5,0.5,0.5}}\else
\def\colorrgb#1{\color{black}}\def\colorgray#1{\color[gray]{#1}}\expandafter\def\csname LTw\endcsname{\color{white}}\expandafter\def\csname LTb\endcsname{\color{black}}\expandafter\def\csname LTa\endcsname{\color{black}}\expandafter\def\csname LT0\endcsname{\color{black}}\expandafter\def\csname LT1\endcsname{\color{black}}\expandafter\def\csname LT2\endcsname{\color{black}}\expandafter\def\csname LT3\endcsname{\color{black}}\expandafter\def\csname LT4\endcsname{\color{black}}\expandafter\def\csname LT5\endcsname{\color{black}}\expandafter\def\csname LT6\endcsname{\color{black}}\expandafter\def\csname LT7\endcsname{\color{black}}\expandafter\def\csname LT8\endcsname{\color{black}}\fi
  \fi
    \setlength{\unitlength}{0.0500bp}\ifx\gptboxheight\undefined \newlength{\gptboxheight}\newlength{\gptboxwidth}\newsavebox{\gptboxtext}\fi \setlength{\fboxrule}{0.5pt}\setlength{\fboxsep}{1pt}\begin{picture}(4818.00,3174.00)\gplgaddtomacro\gplbacktext{\csname LTb\endcsname \put(1188,550){\makebox(0,0)[r]{\strut{}$10^{-1}$}}\put(1188,846){\makebox(0,0)[r]{\strut{}$10^{0}$}}\put(1188,1142){\makebox(0,0)[r]{\strut{}$10^{1}$}}\put(1188,1438){\makebox(0,0)[r]{\strut{}$10^{2}$}}\put(1188,1735){\makebox(0,0)[r]{\strut{}$10^{3}$}}\put(1188,2031){\makebox(0,0)[r]{\strut{}$10^{4}$}}\put(1188,2327){\makebox(0,0)[r]{\strut{}$10^{5}$}}\put(1188,2623){\makebox(0,0)[r]{\strut{}$10^{6}$}}\put(1377,330){\makebox(0,0){\strut{}$1$}}\put(2185,330){\makebox(0,0){\strut{}$10$}}\put(2993,330){\makebox(0,0){\strut{}$100$}}\put(3801,330){\makebox(0,0){\strut{}$1000$}}\put(4609,330){\makebox(0,0){\strut{}$10000$}}}\gplgaddtomacro\gplfronttext{\csname LTb\endcsname \put(583,1586){\rotatebox{-270}{\makebox(0,0){\strut{}Execution time of 500 steps in ms (log scale)}}}\put(3035,2733){\makebox(0,0){\strut{}SLAM Latency}}}\gplbacktext
    \put(0,0){\includegraphics{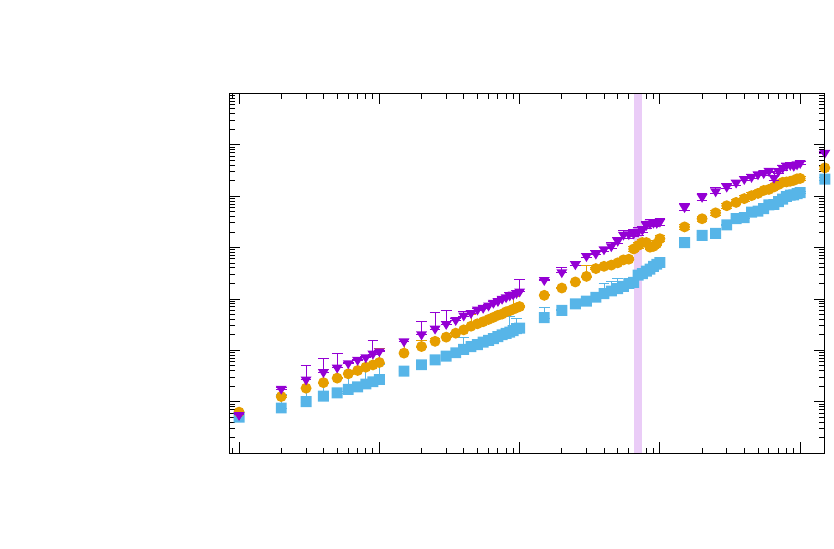}}\gplfronttext
  \end{picture}\endgroup
 \\[-0.3cm]
        \begingroup
  \makeatletter
  \providecommand\color[2][]{\GenericError{(gnuplot) \space\space\space\@spaces}{Package color not loaded in conjunction with
      terminal option `colourtext'}{See the gnuplot documentation for explanation.}{Either use 'blacktext' in gnuplot or load the package
      color.sty in LaTeX.}\renewcommand\color[2][]{}}\providecommand\includegraphics[2][]{\GenericError{(gnuplot) \space\space\space\@spaces}{Package graphicx or graphics not loaded}{See the gnuplot documentation for explanation.}{The gnuplot epslatex terminal needs graphicx.sty or graphics.sty.}\renewcommand\includegraphics[2][]{}}\providecommand\rotatebox[2]{#2}\@ifundefined{ifGPcolor}{\newif\ifGPcolor
    \GPcolorfalse
  }{}\@ifundefined{ifGPblacktext}{\newif\ifGPblacktext
    \GPblacktexttrue
  }{}\let\gplgaddtomacro\g@addto@macro
\gdef\gplbacktext{}\gdef\gplfronttext{}\makeatother
  \ifGPblacktext
\def\colorrgb#1{}\def\colorgray#1{}\else
\ifGPcolor
      \def\colorrgb#1{\color[rgb]{#1}}\def\colorgray#1{\color[gray]{#1}}\expandafter\def\csname LTw\endcsname{\color{white}}\expandafter\def\csname LTb\endcsname{\color{black}}\expandafter\def\csname LTa\endcsname{\color{black}}\expandafter\def\csname LT0\endcsname{\color[rgb]{1,0,0}}\expandafter\def\csname LT1\endcsname{\color[rgb]{0,1,0}}\expandafter\def\csname LT2\endcsname{\color[rgb]{0,0,1}}\expandafter\def\csname LT3\endcsname{\color[rgb]{1,0,1}}\expandafter\def\csname LT4\endcsname{\color[rgb]{0,1,1}}\expandafter\def\csname LT5\endcsname{\color[rgb]{1,1,0}}\expandafter\def\csname LT6\endcsname{\color[rgb]{0,0,0}}\expandafter\def\csname LT7\endcsname{\color[rgb]{1,0.3,0}}\expandafter\def\csname LT8\endcsname{\color[rgb]{0.5,0.5,0.5}}\else
\def\colorrgb#1{\color{black}}\def\colorgray#1{\color[gray]{#1}}\expandafter\def\csname LTw\endcsname{\color{white}}\expandafter\def\csname LTb\endcsname{\color{black}}\expandafter\def\csname LTa\endcsname{\color{black}}\expandafter\def\csname LT0\endcsname{\color{black}}\expandafter\def\csname LT1\endcsname{\color{black}}\expandafter\def\csname LT2\endcsname{\color{black}}\expandafter\def\csname LT3\endcsname{\color{black}}\expandafter\def\csname LT4\endcsname{\color{black}}\expandafter\def\csname LT5\endcsname{\color{black}}\expandafter\def\csname LT6\endcsname{\color{black}}\expandafter\def\csname LT7\endcsname{\color{black}}\expandafter\def\csname LT8\endcsname{\color{black}}\fi
  \fi
    \setlength{\unitlength}{0.0500bp}\ifx\gptboxheight\undefined \newlength{\gptboxheight}\newlength{\gptboxwidth}\newsavebox{\gptboxtext}\fi \setlength{\fboxrule}{0.5pt}\setlength{\fboxsep}{1pt}\begin{picture}(4818.00,3174.00)\gplgaddtomacro\gplbacktext{\csname LTb\endcsname \put(1188,550){\makebox(0,0)[r]{\strut{}$10^{1}$}}\put(1188,965){\makebox(0,0)[r]{\strut{}$10^{2}$}}\put(1188,1379){\makebox(0,0)[r]{\strut{}$10^{3}$}}\put(1188,1794){\makebox(0,0)[r]{\strut{}$10^{4}$}}\put(1188,2208){\makebox(0,0)[r]{\strut{}$10^{5}$}}\put(1188,2623){\makebox(0,0)[r]{\strut{}$10^{6}$}}\put(1320,330){\makebox(0,0){\strut{}$1$}}\put(2142,330){\makebox(0,0){\strut{}$10$}}\put(2963,330){\makebox(0,0){\strut{}$100$}}\put(3785,330){\makebox(0,0){\strut{}$1000$}}\put(4606,330){\makebox(0,0){\strut{}$10000$}}}\gplgaddtomacro\gplfronttext{\csname LTb\endcsname \put(3035,0){\makebox(0,0){\strut{}Number of Particles (log scale)}}\put(3035,2733){\makebox(0,0){\strut{}MTT Latency}}}\gplbacktext
    \put(0,0){\includegraphics{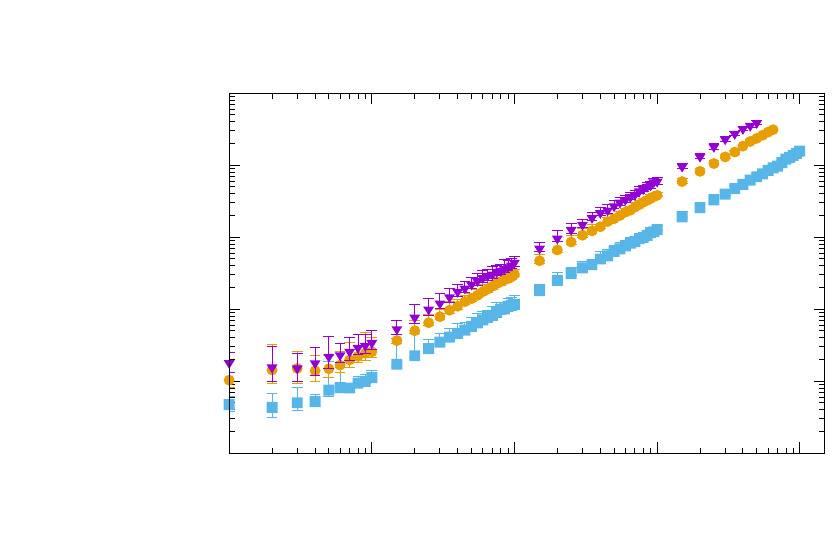}}\gplfronttext
  \end{picture}\endgroup
         \caption{Runtime performance as a function of particles.}
        \label{fig:perf-particles2}
    \end{minipage}
\end{figure*}

\begin{figure*}[p]
  \begin{center}
    \small\sf
    \pfbullet~PF $\quad$ \bdsbullet~BDS $\quad$ \pmdsbullet~SDS $\quad$ \ndsbullet~DS
  \end{center}
    \begin{minipage}[t]{0.48\textwidth}
      \small\sf
        \hspace*{-1.8em}
        \begingroup
  \makeatletter
  \providecommand\color[2][]{\GenericError{(gnuplot) \space\space\space\@spaces}{Package color not loaded in conjunction with
      terminal option `colourtext'}{See the gnuplot documentation for explanation.}{Either use 'blacktext' in gnuplot or load the package
      color.sty in LaTeX.}\renewcommand\color[2][]{}}\providecommand\includegraphics[2][]{\GenericError{(gnuplot) \space\space\space\@spaces}{Package graphicx or graphics not loaded}{See the gnuplot documentation for explanation.}{The gnuplot epslatex terminal needs graphicx.sty or graphics.sty.}\renewcommand\includegraphics[2][]{}}\providecommand\rotatebox[2]{#2}\@ifundefined{ifGPcolor}{\newif\ifGPcolor
    \GPcolorfalse
  }{}\@ifundefined{ifGPblacktext}{\newif\ifGPblacktext
    \GPblacktexttrue
  }{}\let\gplgaddtomacro\g@addto@macro
\gdef\gplbacktext{}\gdef\gplfronttext{}\makeatother
  \ifGPblacktext
\def\colorrgb#1{}\def\colorgray#1{}\else
\ifGPcolor
      \def\colorrgb#1{\color[rgb]{#1}}\def\colorgray#1{\color[gray]{#1}}\expandafter\def\csname LTw\endcsname{\color{white}}\expandafter\def\csname LTb\endcsname{\color{black}}\expandafter\def\csname LTa\endcsname{\color{black}}\expandafter\def\csname LT0\endcsname{\color[rgb]{1,0,0}}\expandafter\def\csname LT1\endcsname{\color[rgb]{0,1,0}}\expandafter\def\csname LT2\endcsname{\color[rgb]{0,0,1}}\expandafter\def\csname LT3\endcsname{\color[rgb]{1,0,1}}\expandafter\def\csname LT4\endcsname{\color[rgb]{0,1,1}}\expandafter\def\csname LT5\endcsname{\color[rgb]{1,1,0}}\expandafter\def\csname LT6\endcsname{\color[rgb]{0,0,0}}\expandafter\def\csname LT7\endcsname{\color[rgb]{1,0.3,0}}\expandafter\def\csname LT8\endcsname{\color[rgb]{0.5,0.5,0.5}}\else
\def\colorrgb#1{\color{black}}\def\colorgray#1{\color[gray]{#1}}\expandafter\def\csname LTw\endcsname{\color{white}}\expandafter\def\csname LTb\endcsname{\color{black}}\expandafter\def\csname LTa\endcsname{\color{black}}\expandafter\def\csname LT0\endcsname{\color{black}}\expandafter\def\csname LT1\endcsname{\color{black}}\expandafter\def\csname LT2\endcsname{\color{black}}\expandafter\def\csname LT3\endcsname{\color{black}}\expandafter\def\csname LT4\endcsname{\color{black}}\expandafter\def\csname LT5\endcsname{\color{black}}\expandafter\def\csname LT6\endcsname{\color{black}}\expandafter\def\csname LT7\endcsname{\color{black}}\expandafter\def\csname LT8\endcsname{\color{black}}\fi
  \fi
    \setlength{\unitlength}{0.0500bp}\ifx\gptboxheight\undefined \newlength{\gptboxheight}\newlength{\gptboxwidth}\newsavebox{\gptboxtext}\fi \setlength{\fboxrule}{0.5pt}\setlength{\fboxsep}{1pt}\begin{picture}(4818.00,3174.00)\gplgaddtomacro\gplbacktext{\csname LTb\endcsname \put(1188,550){\makebox(0,0)[r]{\strut{}$10^{-2}$}}\put(1188,1587){\makebox(0,0)[r]{\strut{}$10^{-1}$}}\put(1188,2623){\makebox(0,0)[r]{\strut{}$10^{0}$}}\put(1320,330){\makebox(0,0){\strut{}$0$}}\put(1663,330){\makebox(0,0){\strut{}$50$}}\put(2006,330){\makebox(0,0){\strut{}$100$}}\put(2349,330){\makebox(0,0){\strut{}$150$}}\put(2692,330){\makebox(0,0){\strut{}$200$}}\put(3036,330){\makebox(0,0){\strut{}$250$}}\put(3379,330){\makebox(0,0){\strut{}$300$}}\put(3722,330){\makebox(0,0){\strut{}$350$}}\put(4065,330){\makebox(0,0){\strut{}$400$}}\put(4408,330){\makebox(0,0){\strut{}$450$}}\put(4751,330){\makebox(0,0){\strut{}$500$}}}\gplgaddtomacro\gplfronttext{\csname LTb\endcsname \put(3035,2733){\makebox(0,0){\strut{}Beta-Bernoulli Latency}}}\gplbacktext
    \put(0,0){\includegraphics{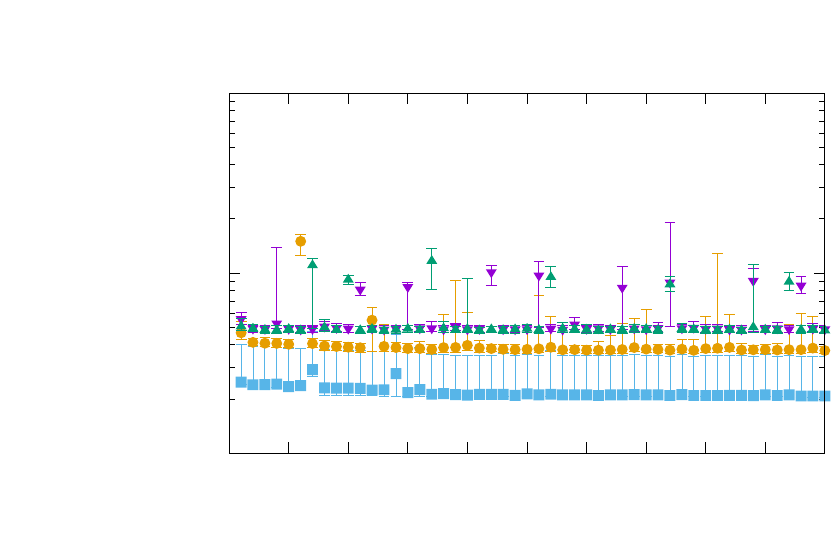}}\gplfronttext
  \end{picture}\endgroup
         \\[-0.3cm]
        \hspace*{-1.8em}
        \begingroup
  \makeatletter
  \providecommand\color[2][]{\GenericError{(gnuplot) \space\space\space\@spaces}{Package color not loaded in conjunction with
      terminal option `colourtext'}{See the gnuplot documentation for explanation.}{Either use 'blacktext' in gnuplot or load the package
      color.sty in LaTeX.}\renewcommand\color[2][]{}}\providecommand\includegraphics[2][]{\GenericError{(gnuplot) \space\space\space\@spaces}{Package graphicx or graphics not loaded}{See the gnuplot documentation for explanation.}{The gnuplot epslatex terminal needs graphicx.sty or graphics.sty.}\renewcommand\includegraphics[2][]{}}\providecommand\rotatebox[2]{#2}\@ifundefined{ifGPcolor}{\newif\ifGPcolor
    \GPcolorfalse
  }{}\@ifundefined{ifGPblacktext}{\newif\ifGPblacktext
    \GPblacktexttrue
  }{}\let\gplgaddtomacro\g@addto@macro
\gdef\gplbacktext{}\gdef\gplfronttext{}\makeatother
  \ifGPblacktext
\def\colorrgb#1{}\def\colorgray#1{}\else
\ifGPcolor
      \def\colorrgb#1{\color[rgb]{#1}}\def\colorgray#1{\color[gray]{#1}}\expandafter\def\csname LTw\endcsname{\color{white}}\expandafter\def\csname LTb\endcsname{\color{black}}\expandafter\def\csname LTa\endcsname{\color{black}}\expandafter\def\csname LT0\endcsname{\color[rgb]{1,0,0}}\expandafter\def\csname LT1\endcsname{\color[rgb]{0,1,0}}\expandafter\def\csname LT2\endcsname{\color[rgb]{0,0,1}}\expandafter\def\csname LT3\endcsname{\color[rgb]{1,0,1}}\expandafter\def\csname LT4\endcsname{\color[rgb]{0,1,1}}\expandafter\def\csname LT5\endcsname{\color[rgb]{1,1,0}}\expandafter\def\csname LT6\endcsname{\color[rgb]{0,0,0}}\expandafter\def\csname LT7\endcsname{\color[rgb]{1,0.3,0}}\expandafter\def\csname LT8\endcsname{\color[rgb]{0.5,0.5,0.5}}\else
\def\colorrgb#1{\color{black}}\def\colorgray#1{\color[gray]{#1}}\expandafter\def\csname LTw\endcsname{\color{white}}\expandafter\def\csname LTb\endcsname{\color{black}}\expandafter\def\csname LTa\endcsname{\color{black}}\expandafter\def\csname LT0\endcsname{\color{black}}\expandafter\def\csname LT1\endcsname{\color{black}}\expandafter\def\csname LT2\endcsname{\color{black}}\expandafter\def\csname LT3\endcsname{\color{black}}\expandafter\def\csname LT4\endcsname{\color{black}}\expandafter\def\csname LT5\endcsname{\color{black}}\expandafter\def\csname LT6\endcsname{\color{black}}\expandafter\def\csname LT7\endcsname{\color{black}}\expandafter\def\csname LT8\endcsname{\color{black}}\fi
  \fi
    \setlength{\unitlength}{0.0500bp}\ifx\gptboxheight\undefined \newlength{\gptboxheight}\newlength{\gptboxwidth}\newsavebox{\gptboxtext}\fi \setlength{\fboxrule}{0.5pt}\setlength{\fboxsep}{1pt}\begin{picture}(4818.00,3174.00)\gplgaddtomacro\gplbacktext{\csname LTb\endcsname \put(1188,550){\makebox(0,0)[r]{\strut{}$10^{-2}$}}\put(1188,1587){\makebox(0,0)[r]{\strut{}$10^{-1}$}}\put(1188,2623){\makebox(0,0)[r]{\strut{}$10^{0}$}}\put(1320,330){\makebox(0,0){\strut{}$0$}}\put(1663,330){\makebox(0,0){\strut{}$50$}}\put(2006,330){\makebox(0,0){\strut{}$100$}}\put(2349,330){\makebox(0,0){\strut{}$150$}}\put(2692,330){\makebox(0,0){\strut{}$200$}}\put(3036,330){\makebox(0,0){\strut{}$250$}}\put(3379,330){\makebox(0,0){\strut{}$300$}}\put(3722,330){\makebox(0,0){\strut{}$350$}}\put(4065,330){\makebox(0,0){\strut{}$400$}}\put(4408,330){\makebox(0,0){\strut{}$450$}}\put(4751,330){\makebox(0,0){\strut{}$500$}}}\gplgaddtomacro\gplfronttext{\csname LTb\endcsname \put(583,1586){\rotatebox{-270}{\makebox(0,0){\strut{}Step latency in ms (log scale)}}}\put(3035,2733){\makebox(0,0){\strut{}Gaussian-Gaussian Latency}}}\gplbacktext
    \put(0,0){\includegraphics{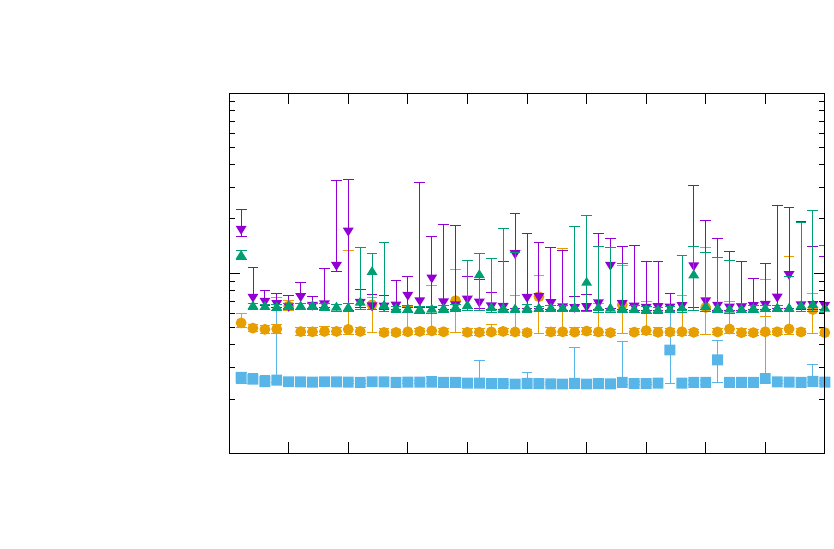}}\gplfronttext
  \end{picture}\endgroup
         \\[-0.3cm]
        \hspace*{-1.8em}
        \begingroup
  \makeatletter
  \providecommand\color[2][]{\GenericError{(gnuplot) \space\space\space\@spaces}{Package color not loaded in conjunction with
      terminal option `colourtext'}{See the gnuplot documentation for explanation.}{Either use 'blacktext' in gnuplot or load the package
      color.sty in LaTeX.}\renewcommand\color[2][]{}}\providecommand\includegraphics[2][]{\GenericError{(gnuplot) \space\space\space\@spaces}{Package graphicx or graphics not loaded}{See the gnuplot documentation for explanation.}{The gnuplot epslatex terminal needs graphicx.sty or graphics.sty.}\renewcommand\includegraphics[2][]{}}\providecommand\rotatebox[2]{#2}\@ifundefined{ifGPcolor}{\newif\ifGPcolor
    \GPcolorfalse
  }{}\@ifundefined{ifGPblacktext}{\newif\ifGPblacktext
    \GPblacktexttrue
  }{}\let\gplgaddtomacro\g@addto@macro
\gdef\gplbacktext{}\gdef\gplfronttext{}\makeatother
  \ifGPblacktext
\def\colorrgb#1{}\def\colorgray#1{}\else
\ifGPcolor
      \def\colorrgb#1{\color[rgb]{#1}}\def\colorgray#1{\color[gray]{#1}}\expandafter\def\csname LTw\endcsname{\color{white}}\expandafter\def\csname LTb\endcsname{\color{black}}\expandafter\def\csname LTa\endcsname{\color{black}}\expandafter\def\csname LT0\endcsname{\color[rgb]{1,0,0}}\expandafter\def\csname LT1\endcsname{\color[rgb]{0,1,0}}\expandafter\def\csname LT2\endcsname{\color[rgb]{0,0,1}}\expandafter\def\csname LT3\endcsname{\color[rgb]{1,0,1}}\expandafter\def\csname LT4\endcsname{\color[rgb]{0,1,1}}\expandafter\def\csname LT5\endcsname{\color[rgb]{1,1,0}}\expandafter\def\csname LT6\endcsname{\color[rgb]{0,0,0}}\expandafter\def\csname LT7\endcsname{\color[rgb]{1,0.3,0}}\expandafter\def\csname LT8\endcsname{\color[rgb]{0.5,0.5,0.5}}\else
\def\colorrgb#1{\color{black}}\def\colorgray#1{\color[gray]{#1}}\expandafter\def\csname LTw\endcsname{\color{white}}\expandafter\def\csname LTb\endcsname{\color{black}}\expandafter\def\csname LTa\endcsname{\color{black}}\expandafter\def\csname LT0\endcsname{\color{black}}\expandafter\def\csname LT1\endcsname{\color{black}}\expandafter\def\csname LT2\endcsname{\color{black}}\expandafter\def\csname LT3\endcsname{\color{black}}\expandafter\def\csname LT4\endcsname{\color{black}}\expandafter\def\csname LT5\endcsname{\color{black}}\expandafter\def\csname LT6\endcsname{\color{black}}\expandafter\def\csname LT7\endcsname{\color{black}}\expandafter\def\csname LT8\endcsname{\color{black}}\fi
  \fi
    \setlength{\unitlength}{0.0500bp}\ifx\gptboxheight\undefined \newlength{\gptboxheight}\newlength{\gptboxwidth}\newsavebox{\gptboxtext}\fi \setlength{\fboxrule}{0.5pt}\setlength{\fboxsep}{1pt}\begin{picture}(4818.00,3174.00)\gplgaddtomacro\gplbacktext{\csname LTb\endcsname \put(1188,550){\makebox(0,0)[r]{\strut{}$10^{-2}$}}\put(1188,1068){\makebox(0,0)[r]{\strut{}$10^{-1}$}}\put(1188,1587){\makebox(0,0)[r]{\strut{}$10^{0}$}}\put(1188,2105){\makebox(0,0)[r]{\strut{}$10^{1}$}}\put(1188,2623){\makebox(0,0)[r]{\strut{}$10^{2}$}}\put(1320,330){\makebox(0,0){\strut{}$0$}}\put(1663,330){\makebox(0,0){\strut{}$50$}}\put(2006,330){\makebox(0,0){\strut{}$100$}}\put(2349,330){\makebox(0,0){\strut{}$150$}}\put(2692,330){\makebox(0,0){\strut{}$200$}}\put(3036,330){\makebox(0,0){\strut{}$250$}}\put(3379,330){\makebox(0,0){\strut{}$300$}}\put(3722,330){\makebox(0,0){\strut{}$350$}}\put(4065,330){\makebox(0,0){\strut{}$400$}}\put(4408,330){\makebox(0,0){\strut{}$450$}}\put(4751,330){\makebox(0,0){\strut{}$500$}}}\gplgaddtomacro\gplfronttext{\csname LTb\endcsname \put(3035,2733){\makebox(0,0){\strut{}Kalman-1D Latency}}}\gplbacktext
    \put(0,0){\includegraphics{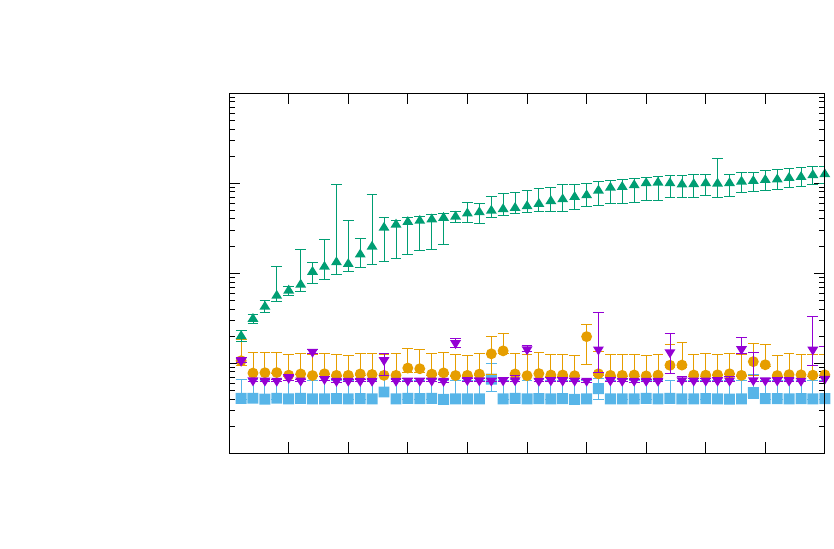}}\gplfronttext
  \end{picture}\endgroup
         \\[-0.3cm]
        \hspace*{-1.8em}
        \begingroup
  \makeatletter
  \providecommand\color[2][]{\GenericError{(gnuplot) \space\space\space\@spaces}{Package color not loaded in conjunction with
      terminal option `colourtext'}{See the gnuplot documentation for explanation.}{Either use 'blacktext' in gnuplot or load the package
      color.sty in LaTeX.}\renewcommand\color[2][]{}}\providecommand\includegraphics[2][]{\GenericError{(gnuplot) \space\space\space\@spaces}{Package graphicx or graphics not loaded}{See the gnuplot documentation for explanation.}{The gnuplot epslatex terminal needs graphicx.sty or graphics.sty.}\renewcommand\includegraphics[2][]{}}\providecommand\rotatebox[2]{#2}\@ifundefined{ifGPcolor}{\newif\ifGPcolor
    \GPcolorfalse
  }{}\@ifundefined{ifGPblacktext}{\newif\ifGPblacktext
    \GPblacktexttrue
  }{}\let\gplgaddtomacro\g@addto@macro
\gdef\gplbacktext{}\gdef\gplfronttext{}\makeatother
  \ifGPblacktext
\def\colorrgb#1{}\def\colorgray#1{}\else
\ifGPcolor
      \def\colorrgb#1{\color[rgb]{#1}}\def\colorgray#1{\color[gray]{#1}}\expandafter\def\csname LTw\endcsname{\color{white}}\expandafter\def\csname LTb\endcsname{\color{black}}\expandafter\def\csname LTa\endcsname{\color{black}}\expandafter\def\csname LT0\endcsname{\color[rgb]{1,0,0}}\expandafter\def\csname LT1\endcsname{\color[rgb]{0,1,0}}\expandafter\def\csname LT2\endcsname{\color[rgb]{0,0,1}}\expandafter\def\csname LT3\endcsname{\color[rgb]{1,0,1}}\expandafter\def\csname LT4\endcsname{\color[rgb]{0,1,1}}\expandafter\def\csname LT5\endcsname{\color[rgb]{1,1,0}}\expandafter\def\csname LT6\endcsname{\color[rgb]{0,0,0}}\expandafter\def\csname LT7\endcsname{\color[rgb]{1,0.3,0}}\expandafter\def\csname LT8\endcsname{\color[rgb]{0.5,0.5,0.5}}\else
\def\colorrgb#1{\color{black}}\def\colorgray#1{\color[gray]{#1}}\expandafter\def\csname LTw\endcsname{\color{white}}\expandafter\def\csname LTb\endcsname{\color{black}}\expandafter\def\csname LTa\endcsname{\color{black}}\expandafter\def\csname LT0\endcsname{\color{black}}\expandafter\def\csname LT1\endcsname{\color{black}}\expandafter\def\csname LT2\endcsname{\color{black}}\expandafter\def\csname LT3\endcsname{\color{black}}\expandafter\def\csname LT4\endcsname{\color{black}}\expandafter\def\csname LT5\endcsname{\color{black}}\expandafter\def\csname LT6\endcsname{\color{black}}\expandafter\def\csname LT7\endcsname{\color{black}}\expandafter\def\csname LT8\endcsname{\color{black}}\fi
  \fi
    \setlength{\unitlength}{0.0500bp}\ifx\gptboxheight\undefined \newlength{\gptboxheight}\newlength{\gptboxwidth}\newsavebox{\gptboxtext}\fi \setlength{\fboxrule}{0.5pt}\setlength{\fboxsep}{1pt}\begin{picture}(4818.00,3174.00)\gplgaddtomacro\gplbacktext{\csname LTb\endcsname \put(1188,550){\makebox(0,0)[r]{\strut{}$10^{-2}$}}\put(1188,1068){\makebox(0,0)[r]{\strut{}$10^{-1}$}}\put(1188,1587){\makebox(0,0)[r]{\strut{}$10^{0}$}}\put(1188,2105){\makebox(0,0)[r]{\strut{}$10^{1}$}}\put(1188,2623){\makebox(0,0)[r]{\strut{}$10^{2}$}}\put(1320,330){\makebox(0,0){\strut{}$0$}}\put(1663,330){\makebox(0,0){\strut{}$50$}}\put(2006,330){\makebox(0,0){\strut{}$100$}}\put(2349,330){\makebox(0,0){\strut{}$150$}}\put(2692,330){\makebox(0,0){\strut{}$200$}}\put(3036,330){\makebox(0,0){\strut{}$250$}}\put(3379,330){\makebox(0,0){\strut{}$300$}}\put(3722,330){\makebox(0,0){\strut{}$350$}}\put(4065,330){\makebox(0,0){\strut{}$400$}}\put(4408,330){\makebox(0,0){\strut{}$450$}}\put(4751,330){\makebox(0,0){\strut{}$500$}}}\gplgaddtomacro\gplfronttext{\csname LTb\endcsname \put(3035,0){\makebox(0,0){\strut{}Step}}\put(3035,2733){\makebox(0,0){\strut{}Outlier Latency}}}\gplbacktext
    \put(0,0){\includegraphics{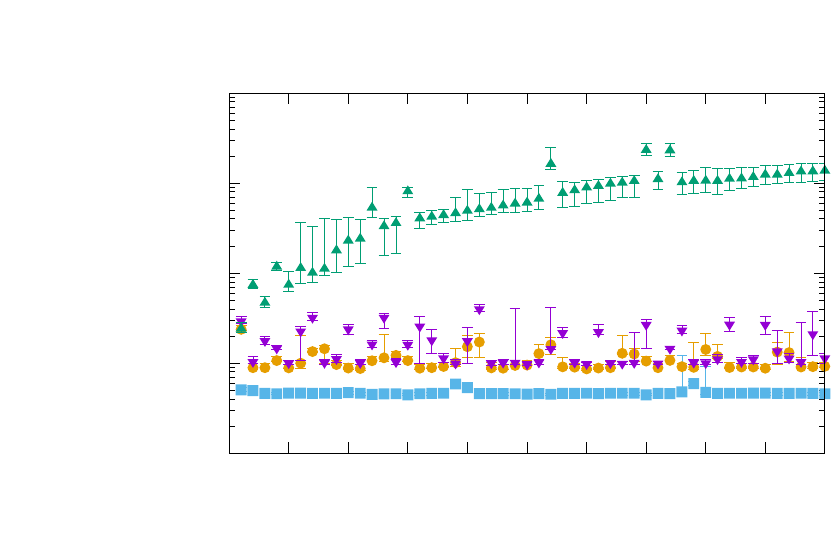}}\gplfronttext
  \end{picture}\endgroup
         \caption{Runtime performance at each step of a run.}
        \label{fig:perf-step1}
    \end{minipage}
    \hfill
    \begin{minipage}[t]{0.48\textwidth}
      \small\sf
        \hspace*{-1.8em}
        \begingroup
  \makeatletter
  \providecommand\color[2][]{\GenericError{(gnuplot) \space\space\space\@spaces}{Package color not loaded in conjunction with
      terminal option `colourtext'}{See the gnuplot documentation for explanation.}{Either use 'blacktext' in gnuplot or load the package
      color.sty in LaTeX.}\renewcommand\color[2][]{}}\providecommand\includegraphics[2][]{\GenericError{(gnuplot) \space\space\space\@spaces}{Package graphicx or graphics not loaded}{See the gnuplot documentation for explanation.}{The gnuplot epslatex terminal needs graphicx.sty or graphics.sty.}\renewcommand\includegraphics[2][]{}}\providecommand\rotatebox[2]{#2}\@ifundefined{ifGPcolor}{\newif\ifGPcolor
    \GPcolorfalse
  }{}\@ifundefined{ifGPblacktext}{\newif\ifGPblacktext
    \GPblacktexttrue
  }{}\let\gplgaddtomacro\g@addto@macro
\gdef\gplbacktext{}\gdef\gplfronttext{}\makeatother
  \ifGPblacktext
\def\colorrgb#1{}\def\colorgray#1{}\else
\ifGPcolor
      \def\colorrgb#1{\color[rgb]{#1}}\def\colorgray#1{\color[gray]{#1}}\expandafter\def\csname LTw\endcsname{\color{white}}\expandafter\def\csname LTb\endcsname{\color{black}}\expandafter\def\csname LTa\endcsname{\color{black}}\expandafter\def\csname LT0\endcsname{\color[rgb]{1,0,0}}\expandafter\def\csname LT1\endcsname{\color[rgb]{0,1,0}}\expandafter\def\csname LT2\endcsname{\color[rgb]{0,0,1}}\expandafter\def\csname LT3\endcsname{\color[rgb]{1,0,1}}\expandafter\def\csname LT4\endcsname{\color[rgb]{0,1,1}}\expandafter\def\csname LT5\endcsname{\color[rgb]{1,1,0}}\expandafter\def\csname LT6\endcsname{\color[rgb]{0,0,0}}\expandafter\def\csname LT7\endcsname{\color[rgb]{1,0.3,0}}\expandafter\def\csname LT8\endcsname{\color[rgb]{0.5,0.5,0.5}}\else
\def\colorrgb#1{\color{black}}\def\colorgray#1{\color[gray]{#1}}\expandafter\def\csname LTw\endcsname{\color{white}}\expandafter\def\csname LTb\endcsname{\color{black}}\expandafter\def\csname LTa\endcsname{\color{black}}\expandafter\def\csname LT0\endcsname{\color{black}}\expandafter\def\csname LT1\endcsname{\color{black}}\expandafter\def\csname LT2\endcsname{\color{black}}\expandafter\def\csname LT3\endcsname{\color{black}}\expandafter\def\csname LT4\endcsname{\color{black}}\expandafter\def\csname LT5\endcsname{\color{black}}\expandafter\def\csname LT6\endcsname{\color{black}}\expandafter\def\csname LT7\endcsname{\color{black}}\expandafter\def\csname LT8\endcsname{\color{black}}\fi
  \fi
    \setlength{\unitlength}{0.0500bp}\ifx\gptboxheight\undefined \newlength{\gptboxheight}\newlength{\gptboxwidth}\newsavebox{\gptboxtext}\fi \setlength{\fboxrule}{0.5pt}\setlength{\fboxsep}{1pt}\begin{picture}(4818.00,3174.00)\gplgaddtomacro\gplbacktext{\csname LTb\endcsname \put(1188,550){\makebox(0,0)[r]{\strut{}$10^{1}$}}\put(1188,2623){\makebox(0,0)[r]{\strut{}$10^{2}$}}\put(1320,330){\makebox(0,0){\strut{}$0$}}\put(1663,330){\makebox(0,0){\strut{}$50$}}\put(2006,330){\makebox(0,0){\strut{}$100$}}\put(2349,330){\makebox(0,0){\strut{}$150$}}\put(2692,330){\makebox(0,0){\strut{}$200$}}\put(3036,330){\makebox(0,0){\strut{}$250$}}\put(3379,330){\makebox(0,0){\strut{}$300$}}\put(3722,330){\makebox(0,0){\strut{}$350$}}\put(4065,330){\makebox(0,0){\strut{}$400$}}\put(4408,330){\makebox(0,0){\strut{}$450$}}\put(4751,330){\makebox(0,0){\strut{}$500$}}}\gplgaddtomacro\gplfronttext{\csname LTb\endcsname \put(3035,2733){\makebox(0,0){\strut{}Beta-Bernoulli Ideal Memory}}}\gplbacktext
    \put(0,0){\includegraphics{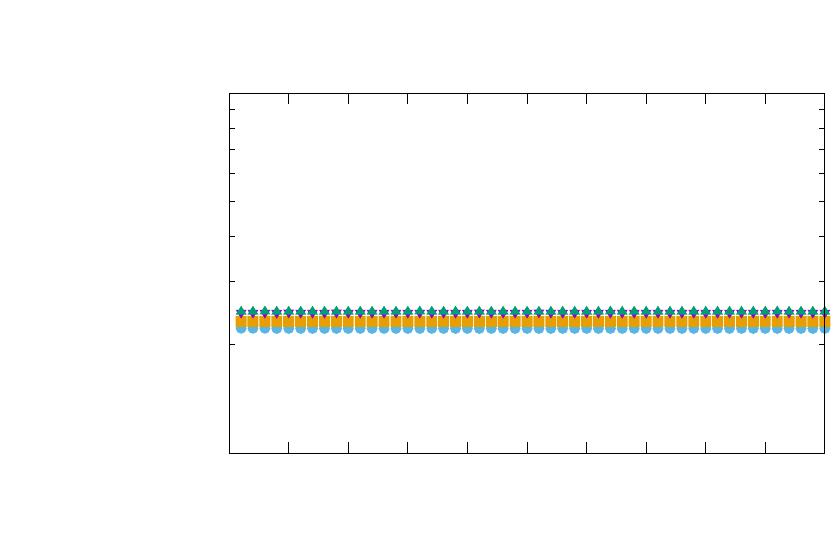}}\gplfronttext
  \end{picture}\endgroup
         \\[-.3cm]
        \hspace*{-1.8em}
        \begingroup
  \makeatletter
  \providecommand\color[2][]{\GenericError{(gnuplot) \space\space\space\@spaces}{Package color not loaded in conjunction with
      terminal option `colourtext'}{See the gnuplot documentation for explanation.}{Either use 'blacktext' in gnuplot or load the package
      color.sty in LaTeX.}\renewcommand\color[2][]{}}\providecommand\includegraphics[2][]{\GenericError{(gnuplot) \space\space\space\@spaces}{Package graphicx or graphics not loaded}{See the gnuplot documentation for explanation.}{The gnuplot epslatex terminal needs graphicx.sty or graphics.sty.}\renewcommand\includegraphics[2][]{}}\providecommand\rotatebox[2]{#2}\@ifundefined{ifGPcolor}{\newif\ifGPcolor
    \GPcolorfalse
  }{}\@ifundefined{ifGPblacktext}{\newif\ifGPblacktext
    \GPblacktexttrue
  }{}\let\gplgaddtomacro\g@addto@macro
\gdef\gplbacktext{}\gdef\gplfronttext{}\makeatother
  \ifGPblacktext
\def\colorrgb#1{}\def\colorgray#1{}\else
\ifGPcolor
      \def\colorrgb#1{\color[rgb]{#1}}\def\colorgray#1{\color[gray]{#1}}\expandafter\def\csname LTw\endcsname{\color{white}}\expandafter\def\csname LTb\endcsname{\color{black}}\expandafter\def\csname LTa\endcsname{\color{black}}\expandafter\def\csname LT0\endcsname{\color[rgb]{1,0,0}}\expandafter\def\csname LT1\endcsname{\color[rgb]{0,1,0}}\expandafter\def\csname LT2\endcsname{\color[rgb]{0,0,1}}\expandafter\def\csname LT3\endcsname{\color[rgb]{1,0,1}}\expandafter\def\csname LT4\endcsname{\color[rgb]{0,1,1}}\expandafter\def\csname LT5\endcsname{\color[rgb]{1,1,0}}\expandafter\def\csname LT6\endcsname{\color[rgb]{0,0,0}}\expandafter\def\csname LT7\endcsname{\color[rgb]{1,0.3,0}}\expandafter\def\csname LT8\endcsname{\color[rgb]{0.5,0.5,0.5}}\else
\def\colorrgb#1{\color{black}}\def\colorgray#1{\color[gray]{#1}}\expandafter\def\csname LTw\endcsname{\color{white}}\expandafter\def\csname LTb\endcsname{\color{black}}\expandafter\def\csname LTa\endcsname{\color{black}}\expandafter\def\csname LT0\endcsname{\color{black}}\expandafter\def\csname LT1\endcsname{\color{black}}\expandafter\def\csname LT2\endcsname{\color{black}}\expandafter\def\csname LT3\endcsname{\color{black}}\expandafter\def\csname LT4\endcsname{\color{black}}\expandafter\def\csname LT5\endcsname{\color{black}}\expandafter\def\csname LT6\endcsname{\color{black}}\expandafter\def\csname LT7\endcsname{\color{black}}\expandafter\def\csname LT8\endcsname{\color{black}}\fi
  \fi
    \setlength{\unitlength}{0.0500bp}\ifx\gptboxheight\undefined \newlength{\gptboxheight}\newlength{\gptboxwidth}\newsavebox{\gptboxtext}\fi \setlength{\fboxrule}{0.5pt}\setlength{\fboxsep}{1pt}\begin{picture}(4818.00,3174.00)\gplgaddtomacro\gplbacktext{\csname LTb\endcsname \put(1188,550){\makebox(0,0)[r]{\strut{}$10^{1}$}}\put(1188,2623){\makebox(0,0)[r]{\strut{}$10^{2}$}}\put(1320,330){\makebox(0,0){\strut{}$0$}}\put(1663,330){\makebox(0,0){\strut{}$50$}}\put(2006,330){\makebox(0,0){\strut{}$100$}}\put(2349,330){\makebox(0,0){\strut{}$150$}}\put(2692,330){\makebox(0,0){\strut{}$200$}}\put(3036,330){\makebox(0,0){\strut{}$250$}}\put(3379,330){\makebox(0,0){\strut{}$300$}}\put(3722,330){\makebox(0,0){\strut{}$350$}}\put(4065,330){\makebox(0,0){\strut{}$400$}}\put(4408,330){\makebox(0,0){\strut{}$450$}}\put(4751,330){\makebox(0,0){\strut{}$500$}}}\gplgaddtomacro\gplfronttext{\csname LTb\endcsname \put(715,1586){\rotatebox{-270}{\makebox(0,0){\strut{}Thousands of words in heap (log scale)}}}\put(3035,2733){\makebox(0,0){\strut{}Gaussian-Gaussian Ideal Memory}}}\gplbacktext
    \put(0,0){\includegraphics{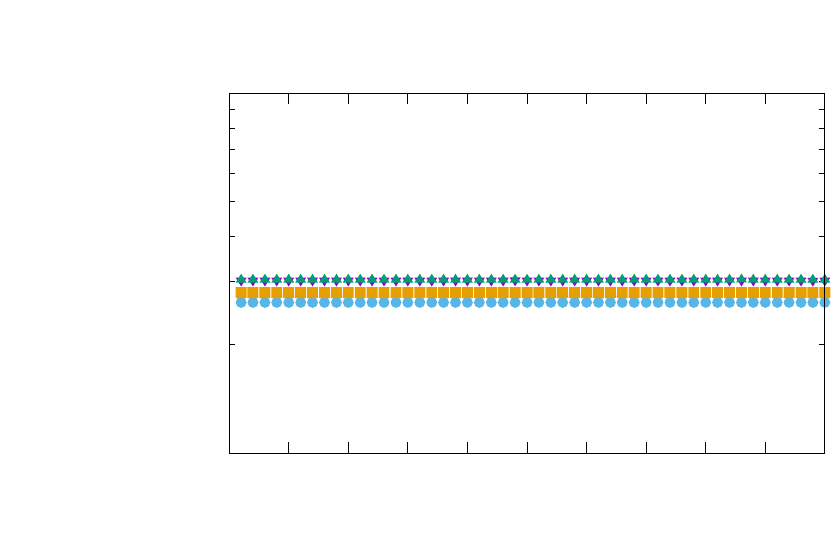}}\gplfronttext
  \end{picture}\endgroup
         \\[-.3cm]
        \hspace*{-1.8em}
        \begingroup
  \makeatletter
  \providecommand\color[2][]{\GenericError{(gnuplot) \space\space\space\@spaces}{Package color not loaded in conjunction with
      terminal option `colourtext'}{See the gnuplot documentation for explanation.}{Either use 'blacktext' in gnuplot or load the package
      color.sty in LaTeX.}\renewcommand\color[2][]{}}\providecommand\includegraphics[2][]{\GenericError{(gnuplot) \space\space\space\@spaces}{Package graphicx or graphics not loaded}{See the gnuplot documentation for explanation.}{The gnuplot epslatex terminal needs graphicx.sty or graphics.sty.}\renewcommand\includegraphics[2][]{}}\providecommand\rotatebox[2]{#2}\@ifundefined{ifGPcolor}{\newif\ifGPcolor
    \GPcolorfalse
  }{}\@ifundefined{ifGPblacktext}{\newif\ifGPblacktext
    \GPblacktexttrue
  }{}\let\gplgaddtomacro\g@addto@macro
\gdef\gplbacktext{}\gdef\gplfronttext{}\makeatother
  \ifGPblacktext
\def\colorrgb#1{}\def\colorgray#1{}\else
\ifGPcolor
      \def\colorrgb#1{\color[rgb]{#1}}\def\colorgray#1{\color[gray]{#1}}\expandafter\def\csname LTw\endcsname{\color{white}}\expandafter\def\csname LTb\endcsname{\color{black}}\expandafter\def\csname LTa\endcsname{\color{black}}\expandafter\def\csname LT0\endcsname{\color[rgb]{1,0,0}}\expandafter\def\csname LT1\endcsname{\color[rgb]{0,1,0}}\expandafter\def\csname LT2\endcsname{\color[rgb]{0,0,1}}\expandafter\def\csname LT3\endcsname{\color[rgb]{1,0,1}}\expandafter\def\csname LT4\endcsname{\color[rgb]{0,1,1}}\expandafter\def\csname LT5\endcsname{\color[rgb]{1,1,0}}\expandafter\def\csname LT6\endcsname{\color[rgb]{0,0,0}}\expandafter\def\csname LT7\endcsname{\color[rgb]{1,0.3,0}}\expandafter\def\csname LT8\endcsname{\color[rgb]{0.5,0.5,0.5}}\else
\def\colorrgb#1{\color{black}}\def\colorgray#1{\color[gray]{#1}}\expandafter\def\csname LTw\endcsname{\color{white}}\expandafter\def\csname LTb\endcsname{\color{black}}\expandafter\def\csname LTa\endcsname{\color{black}}\expandafter\def\csname LT0\endcsname{\color{black}}\expandafter\def\csname LT1\endcsname{\color{black}}\expandafter\def\csname LT2\endcsname{\color{black}}\expandafter\def\csname LT3\endcsname{\color{black}}\expandafter\def\csname LT4\endcsname{\color{black}}\expandafter\def\csname LT5\endcsname{\color{black}}\expandafter\def\csname LT6\endcsname{\color{black}}\expandafter\def\csname LT7\endcsname{\color{black}}\expandafter\def\csname LT8\endcsname{\color{black}}\fi
  \fi
    \setlength{\unitlength}{0.0500bp}\ifx\gptboxheight\undefined \newlength{\gptboxheight}\newlength{\gptboxwidth}\newsavebox{\gptboxtext}\fi \setlength{\fboxrule}{0.5pt}\setlength{\fboxsep}{1pt}\begin{picture}(4818.00,3174.00)\gplgaddtomacro\gplbacktext{\csname LTb\endcsname \put(1188,550){\makebox(0,0)[r]{\strut{}$10^{1}$}}\put(1188,1241){\makebox(0,0)[r]{\strut{}$10^{2}$}}\put(1188,1932){\makebox(0,0)[r]{\strut{}$10^{3}$}}\put(1188,2623){\makebox(0,0)[r]{\strut{}$10^{4}$}}\put(1320,330){\makebox(0,0){\strut{}$0$}}\put(1663,330){\makebox(0,0){\strut{}$50$}}\put(2006,330){\makebox(0,0){\strut{}$100$}}\put(2349,330){\makebox(0,0){\strut{}$150$}}\put(2692,330){\makebox(0,0){\strut{}$200$}}\put(3036,330){\makebox(0,0){\strut{}$250$}}\put(3379,330){\makebox(0,0){\strut{}$300$}}\put(3722,330){\makebox(0,0){\strut{}$350$}}\put(4065,330){\makebox(0,0){\strut{}$400$}}\put(4408,330){\makebox(0,0){\strut{}$450$}}\put(4751,330){\makebox(0,0){\strut{}$500$}}}\gplgaddtomacro\gplfronttext{\csname LTb\endcsname \put(3035,2733){\makebox(0,0){\strut{}Kalman-1D Ideal Memory}}}\gplbacktext
    \put(0,0){\includegraphics{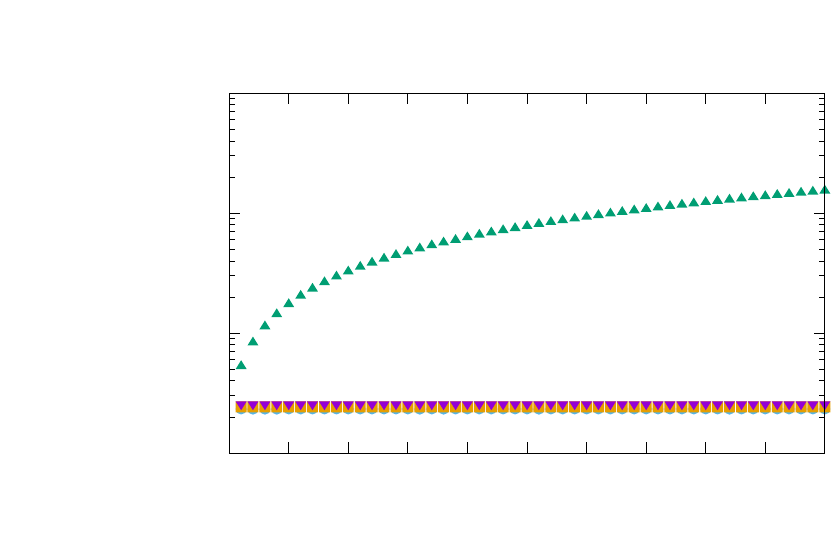}}\gplfronttext
  \end{picture}\endgroup
         \\[-.3cm]
        \hspace*{-1.8em}
        \begingroup
  \makeatletter
  \providecommand\color[2][]{\GenericError{(gnuplot) \space\space\space\@spaces}{Package color not loaded in conjunction with
      terminal option `colourtext'}{See the gnuplot documentation for explanation.}{Either use 'blacktext' in gnuplot or load the package
      color.sty in LaTeX.}\renewcommand\color[2][]{}}\providecommand\includegraphics[2][]{\GenericError{(gnuplot) \space\space\space\@spaces}{Package graphicx or graphics not loaded}{See the gnuplot documentation for explanation.}{The gnuplot epslatex terminal needs graphicx.sty or graphics.sty.}\renewcommand\includegraphics[2][]{}}\providecommand\rotatebox[2]{#2}\@ifundefined{ifGPcolor}{\newif\ifGPcolor
    \GPcolorfalse
  }{}\@ifundefined{ifGPblacktext}{\newif\ifGPblacktext
    \GPblacktexttrue
  }{}\let\gplgaddtomacro\g@addto@macro
\gdef\gplbacktext{}\gdef\gplfronttext{}\makeatother
  \ifGPblacktext
\def\colorrgb#1{}\def\colorgray#1{}\else
\ifGPcolor
      \def\colorrgb#1{\color[rgb]{#1}}\def\colorgray#1{\color[gray]{#1}}\expandafter\def\csname LTw\endcsname{\color{white}}\expandafter\def\csname LTb\endcsname{\color{black}}\expandafter\def\csname LTa\endcsname{\color{black}}\expandafter\def\csname LT0\endcsname{\color[rgb]{1,0,0}}\expandafter\def\csname LT1\endcsname{\color[rgb]{0,1,0}}\expandafter\def\csname LT2\endcsname{\color[rgb]{0,0,1}}\expandafter\def\csname LT3\endcsname{\color[rgb]{1,0,1}}\expandafter\def\csname LT4\endcsname{\color[rgb]{0,1,1}}\expandafter\def\csname LT5\endcsname{\color[rgb]{1,1,0}}\expandafter\def\csname LT6\endcsname{\color[rgb]{0,0,0}}\expandafter\def\csname LT7\endcsname{\color[rgb]{1,0.3,0}}\expandafter\def\csname LT8\endcsname{\color[rgb]{0.5,0.5,0.5}}\else
\def\colorrgb#1{\color{black}}\def\colorgray#1{\color[gray]{#1}}\expandafter\def\csname LTw\endcsname{\color{white}}\expandafter\def\csname LTb\endcsname{\color{black}}\expandafter\def\csname LTa\endcsname{\color{black}}\expandafter\def\csname LT0\endcsname{\color{black}}\expandafter\def\csname LT1\endcsname{\color{black}}\expandafter\def\csname LT2\endcsname{\color{black}}\expandafter\def\csname LT3\endcsname{\color{black}}\expandafter\def\csname LT4\endcsname{\color{black}}\expandafter\def\csname LT5\endcsname{\color{black}}\expandafter\def\csname LT6\endcsname{\color{black}}\expandafter\def\csname LT7\endcsname{\color{black}}\expandafter\def\csname LT8\endcsname{\color{black}}\fi
  \fi
    \setlength{\unitlength}{0.0500bp}\ifx\gptboxheight\undefined \newlength{\gptboxheight}\newlength{\gptboxwidth}\newsavebox{\gptboxtext}\fi \setlength{\fboxrule}{0.5pt}\setlength{\fboxsep}{1pt}\begin{picture}(4818.00,3174.00)\gplgaddtomacro\gplbacktext{\csname LTb\endcsname \put(1188,550){\makebox(0,0)[r]{\strut{}$10^{1}$}}\put(1188,1241){\makebox(0,0)[r]{\strut{}$10^{2}$}}\put(1188,1932){\makebox(0,0)[r]{\strut{}$10^{3}$}}\put(1188,2623){\makebox(0,0)[r]{\strut{}$10^{4}$}}\put(1320,330){\makebox(0,0){\strut{}$0$}}\put(1663,330){\makebox(0,0){\strut{}$50$}}\put(2006,330){\makebox(0,0){\strut{}$100$}}\put(2349,330){\makebox(0,0){\strut{}$150$}}\put(2692,330){\makebox(0,0){\strut{}$200$}}\put(3036,330){\makebox(0,0){\strut{}$250$}}\put(3379,330){\makebox(0,0){\strut{}$300$}}\put(3722,330){\makebox(0,0){\strut{}$350$}}\put(4065,330){\makebox(0,0){\strut{}$400$}}\put(4408,330){\makebox(0,0){\strut{}$450$}}\put(4751,330){\makebox(0,0){\strut{}$500$}}}\gplgaddtomacro\gplfronttext{\csname LTb\endcsname \put(3035,0){\makebox(0,0){\strut{}Step}}\put(3035,2733){\makebox(0,0){\strut{}Outlier Ideal Memory}}}\gplbacktext
    \put(0,0){\includegraphics{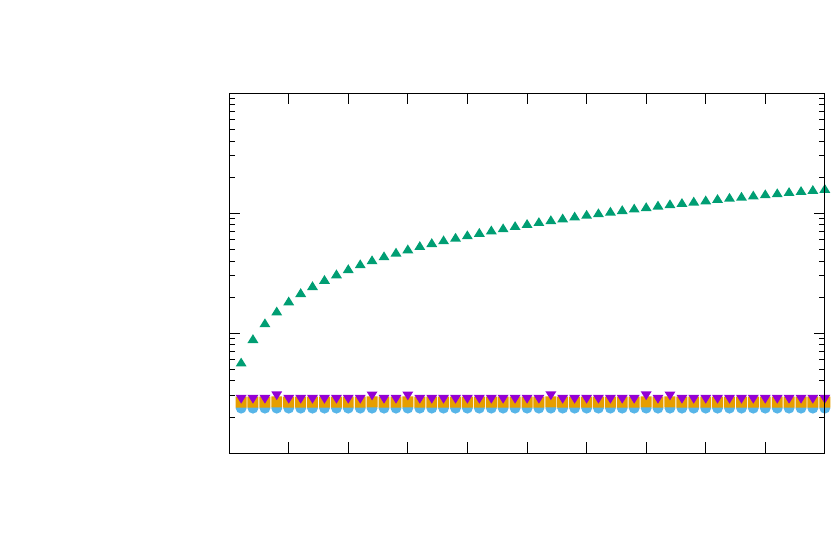}}\gplfronttext
  \end{picture}\endgroup
         \caption{Memory consumption at each step of a run.}
        \label{fig:mem1}
    \end{minipage}
\end{figure*}

\begin{figure*}
  \begin{center}
    \small\sf
    \pfbullet~PF $\quad$ \bdsbullet~BDS $\quad$ \pmdsbullet~SDS $\quad$ \ndsbullet~DS
  \end{center}
    \begin{minipage}[t]{0.48\textwidth}
      \small\sf
        \hspace*{-1.8em}
        \begingroup
  \makeatletter
  \providecommand\color[2][]{\GenericError{(gnuplot) \space\space\space\@spaces}{Package color not loaded in conjunction with
      terminal option `colourtext'}{See the gnuplot documentation for explanation.}{Either use 'blacktext' in gnuplot or load the package
      color.sty in LaTeX.}\renewcommand\color[2][]{}}\providecommand\includegraphics[2][]{\GenericError{(gnuplot) \space\space\space\@spaces}{Package graphicx or graphics not loaded}{See the gnuplot documentation for explanation.}{The gnuplot epslatex terminal needs graphicx.sty or graphics.sty.}\renewcommand\includegraphics[2][]{}}\providecommand\rotatebox[2]{#2}\@ifundefined{ifGPcolor}{\newif\ifGPcolor
    \GPcolorfalse
  }{}\@ifundefined{ifGPblacktext}{\newif\ifGPblacktext
    \GPblacktexttrue
  }{}\let\gplgaddtomacro\g@addto@macro
\gdef\gplbacktext{}\gdef\gplfronttext{}\makeatother
  \ifGPblacktext
\def\colorrgb#1{}\def\colorgray#1{}\else
\ifGPcolor
      \def\colorrgb#1{\color[rgb]{#1}}\def\colorgray#1{\color[gray]{#1}}\expandafter\def\csname LTw\endcsname{\color{white}}\expandafter\def\csname LTb\endcsname{\color{black}}\expandafter\def\csname LTa\endcsname{\color{black}}\expandafter\def\csname LT0\endcsname{\color[rgb]{1,0,0}}\expandafter\def\csname LT1\endcsname{\color[rgb]{0,1,0}}\expandafter\def\csname LT2\endcsname{\color[rgb]{0,0,1}}\expandafter\def\csname LT3\endcsname{\color[rgb]{1,0,1}}\expandafter\def\csname LT4\endcsname{\color[rgb]{0,1,1}}\expandafter\def\csname LT5\endcsname{\color[rgb]{1,1,0}}\expandafter\def\csname LT6\endcsname{\color[rgb]{0,0,0}}\expandafter\def\csname LT7\endcsname{\color[rgb]{1,0.3,0}}\expandafter\def\csname LT8\endcsname{\color[rgb]{0.5,0.5,0.5}}\else
\def\colorrgb#1{\color{black}}\def\colorgray#1{\color[gray]{#1}}\expandafter\def\csname LTw\endcsname{\color{white}}\expandafter\def\csname LTb\endcsname{\color{black}}\expandafter\def\csname LTa\endcsname{\color{black}}\expandafter\def\csname LT0\endcsname{\color{black}}\expandafter\def\csname LT1\endcsname{\color{black}}\expandafter\def\csname LT2\endcsname{\color{black}}\expandafter\def\csname LT3\endcsname{\color{black}}\expandafter\def\csname LT4\endcsname{\color{black}}\expandafter\def\csname LT5\endcsname{\color{black}}\expandafter\def\csname LT6\endcsname{\color{black}}\expandafter\def\csname LT7\endcsname{\color{black}}\expandafter\def\csname LT8\endcsname{\color{black}}\fi
  \fi
    \setlength{\unitlength}{0.0500bp}\ifx\gptboxheight\undefined \newlength{\gptboxheight}\newlength{\gptboxwidth}\newsavebox{\gptboxtext}\fi \setlength{\fboxrule}{0.5pt}\setlength{\fboxsep}{1pt}\begin{picture}(4818.00,3174.00)\gplgaddtomacro\gplbacktext{\csname LTb\endcsname \put(1188,550){\makebox(0,0)[r]{\strut{}$10^{0}$}}\put(1188,1587){\makebox(0,0)[r]{\strut{}$10^{1}$}}\put(1188,2623){\makebox(0,0)[r]{\strut{}$10^{2}$}}\put(1320,330){\makebox(0,0){\strut{}$0$}}\put(1663,330){\makebox(0,0){\strut{}$50$}}\put(2006,330){\makebox(0,0){\strut{}$100$}}\put(2349,330){\makebox(0,0){\strut{}$150$}}\put(2692,330){\makebox(0,0){\strut{}$200$}}\put(3036,330){\makebox(0,0){\strut{}$250$}}\put(3379,330){\makebox(0,0){\strut{}$300$}}\put(3722,330){\makebox(0,0){\strut{}$350$}}\put(4065,330){\makebox(0,0){\strut{}$400$}}\put(4408,330){\makebox(0,0){\strut{}$450$}}\put(4751,330){\makebox(0,0){\strut{}$500$}}}\gplgaddtomacro\gplfronttext{\csname LTb\endcsname \put(3035,2733){\makebox(0,0){\strut{}Robot Latency}}}\gplbacktext
    \put(0,0){\includegraphics{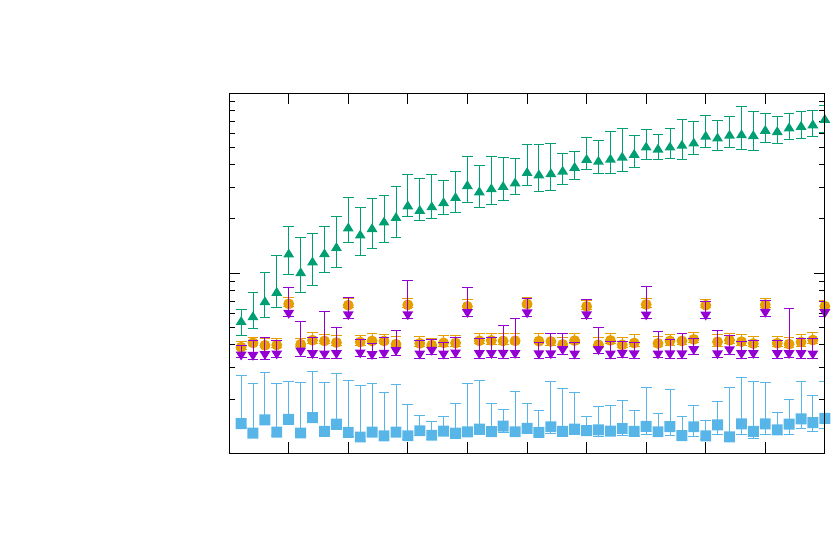}}\gplfronttext
  \end{picture}\endgroup
         \\[-0.3cm]
        \hspace*{-1.8em}
        \begingroup
  \makeatletter
  \providecommand\color[2][]{\GenericError{(gnuplot) \space\space\space\@spaces}{Package color not loaded in conjunction with
      terminal option `colourtext'}{See the gnuplot documentation for explanation.}{Either use 'blacktext' in gnuplot or load the package
      color.sty in LaTeX.}\renewcommand\color[2][]{}}\providecommand\includegraphics[2][]{\GenericError{(gnuplot) \space\space\space\@spaces}{Package graphicx or graphics not loaded}{See the gnuplot documentation for explanation.}{The gnuplot epslatex terminal needs graphicx.sty or graphics.sty.}\renewcommand\includegraphics[2][]{}}\providecommand\rotatebox[2]{#2}\@ifundefined{ifGPcolor}{\newif\ifGPcolor
    \GPcolorfalse
  }{}\@ifundefined{ifGPblacktext}{\newif\ifGPblacktext
    \GPblacktexttrue
  }{}\let\gplgaddtomacro\g@addto@macro
\gdef\gplbacktext{}\gdef\gplfronttext{}\makeatother
  \ifGPblacktext
\def\colorrgb#1{}\def\colorgray#1{}\else
\ifGPcolor
      \def\colorrgb#1{\color[rgb]{#1}}\def\colorgray#1{\color[gray]{#1}}\expandafter\def\csname LTw\endcsname{\color{white}}\expandafter\def\csname LTb\endcsname{\color{black}}\expandafter\def\csname LTa\endcsname{\color{black}}\expandafter\def\csname LT0\endcsname{\color[rgb]{1,0,0}}\expandafter\def\csname LT1\endcsname{\color[rgb]{0,1,0}}\expandafter\def\csname LT2\endcsname{\color[rgb]{0,0,1}}\expandafter\def\csname LT3\endcsname{\color[rgb]{1,0,1}}\expandafter\def\csname LT4\endcsname{\color[rgb]{0,1,1}}\expandafter\def\csname LT5\endcsname{\color[rgb]{1,1,0}}\expandafter\def\csname LT6\endcsname{\color[rgb]{0,0,0}}\expandafter\def\csname LT7\endcsname{\color[rgb]{1,0.3,0}}\expandafter\def\csname LT8\endcsname{\color[rgb]{0.5,0.5,0.5}}\else
\def\colorrgb#1{\color{black}}\def\colorgray#1{\color[gray]{#1}}\expandafter\def\csname LTw\endcsname{\color{white}}\expandafter\def\csname LTb\endcsname{\color{black}}\expandafter\def\csname LTa\endcsname{\color{black}}\expandafter\def\csname LT0\endcsname{\color{black}}\expandafter\def\csname LT1\endcsname{\color{black}}\expandafter\def\csname LT2\endcsname{\color{black}}\expandafter\def\csname LT3\endcsname{\color{black}}\expandafter\def\csname LT4\endcsname{\color{black}}\expandafter\def\csname LT5\endcsname{\color{black}}\expandafter\def\csname LT6\endcsname{\color{black}}\expandafter\def\csname LT7\endcsname{\color{black}}\expandafter\def\csname LT8\endcsname{\color{black}}\fi
  \fi
    \setlength{\unitlength}{0.0500bp}\ifx\gptboxheight\undefined \newlength{\gptboxheight}\newlength{\gptboxwidth}\newsavebox{\gptboxtext}\fi \setlength{\fboxrule}{0.5pt}\setlength{\fboxsep}{1pt}\begin{picture}(4818.00,3174.00)\gplgaddtomacro\gplbacktext{\csname LTb\endcsname \put(1188,550){\makebox(0,0)[r]{\strut{}$10^{-2}$}}\put(1188,1241){\makebox(0,0)[r]{\strut{}$10^{-1}$}}\put(1188,1932){\makebox(0,0)[r]{\strut{}$10^{0}$}}\put(1188,2623){\makebox(0,0)[r]{\strut{}$10^{1}$}}\put(1320,330){\makebox(0,0){\strut{}$0$}}\put(1663,330){\makebox(0,0){\strut{}$50$}}\put(2006,330){\makebox(0,0){\strut{}$100$}}\put(2349,330){\makebox(0,0){\strut{}$150$}}\put(2692,330){\makebox(0,0){\strut{}$200$}}\put(3036,330){\makebox(0,0){\strut{}$250$}}\put(3379,330){\makebox(0,0){\strut{}$300$}}\put(3722,330){\makebox(0,0){\strut{}$350$}}\put(4065,330){\makebox(0,0){\strut{}$400$}}\put(4408,330){\makebox(0,0){\strut{}$450$}}\put(4751,330){\makebox(0,0){\strut{}$500$}}}\gplgaddtomacro\gplfronttext{\csname LTb\endcsname \put(583,1586){\rotatebox{-270}{\makebox(0,0){\strut{}Step latency in ms (log scale)}}}\put(3035,2733){\makebox(0,0){\strut{}SLAM Latency}}}\gplbacktext
    \put(0,0){\includegraphics{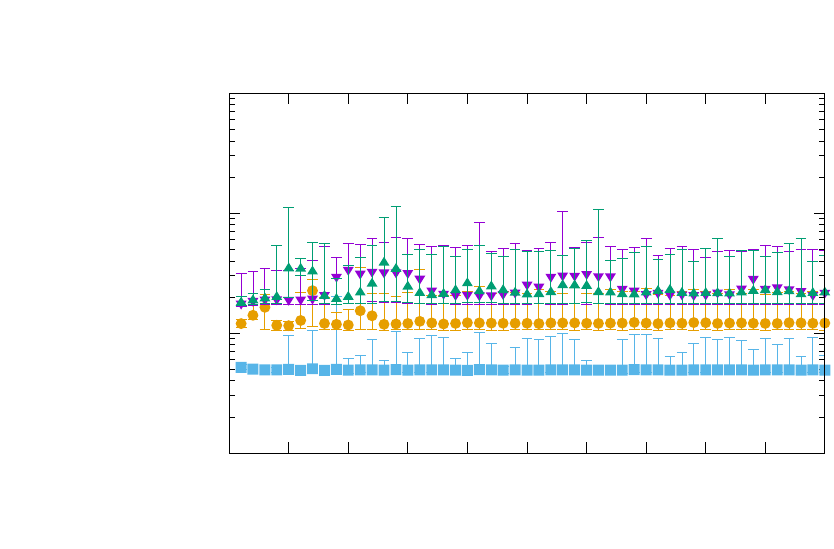}}\gplfronttext
  \end{picture}\endgroup
         \\[-0.3cm]
        \hspace*{-1.8em}
        \begingroup
  \makeatletter
  \providecommand\color[2][]{\GenericError{(gnuplot) \space\space\space\@spaces}{Package color not loaded in conjunction with
      terminal option `colourtext'}{See the gnuplot documentation for explanation.}{Either use 'blacktext' in gnuplot or load the package
      color.sty in LaTeX.}\renewcommand\color[2][]{}}\providecommand\includegraphics[2][]{\GenericError{(gnuplot) \space\space\space\@spaces}{Package graphicx or graphics not loaded}{See the gnuplot documentation for explanation.}{The gnuplot epslatex terminal needs graphicx.sty or graphics.sty.}\renewcommand\includegraphics[2][]{}}\providecommand\rotatebox[2]{#2}\@ifundefined{ifGPcolor}{\newif\ifGPcolor
    \GPcolorfalse
  }{}\@ifundefined{ifGPblacktext}{\newif\ifGPblacktext
    \GPblacktexttrue
  }{}\let\gplgaddtomacro\g@addto@macro
\gdef\gplbacktext{}\gdef\gplfronttext{}\makeatother
  \ifGPblacktext
\def\colorrgb#1{}\def\colorgray#1{}\else
\ifGPcolor
      \def\colorrgb#1{\color[rgb]{#1}}\def\colorgray#1{\color[gray]{#1}}\expandafter\def\csname LTw\endcsname{\color{white}}\expandafter\def\csname LTb\endcsname{\color{black}}\expandafter\def\csname LTa\endcsname{\color{black}}\expandafter\def\csname LT0\endcsname{\color[rgb]{1,0,0}}\expandafter\def\csname LT1\endcsname{\color[rgb]{0,1,0}}\expandafter\def\csname LT2\endcsname{\color[rgb]{0,0,1}}\expandafter\def\csname LT3\endcsname{\color[rgb]{1,0,1}}\expandafter\def\csname LT4\endcsname{\color[rgb]{0,1,1}}\expandafter\def\csname LT5\endcsname{\color[rgb]{1,1,0}}\expandafter\def\csname LT6\endcsname{\color[rgb]{0,0,0}}\expandafter\def\csname LT7\endcsname{\color[rgb]{1,0.3,0}}\expandafter\def\csname LT8\endcsname{\color[rgb]{0.5,0.5,0.5}}\else
\def\colorrgb#1{\color{black}}\def\colorgray#1{\color[gray]{#1}}\expandafter\def\csname LTw\endcsname{\color{white}}\expandafter\def\csname LTb\endcsname{\color{black}}\expandafter\def\csname LTa\endcsname{\color{black}}\expandafter\def\csname LT0\endcsname{\color{black}}\expandafter\def\csname LT1\endcsname{\color{black}}\expandafter\def\csname LT2\endcsname{\color{black}}\expandafter\def\csname LT3\endcsname{\color{black}}\expandafter\def\csname LT4\endcsname{\color{black}}\expandafter\def\csname LT5\endcsname{\color{black}}\expandafter\def\csname LT6\endcsname{\color{black}}\expandafter\def\csname LT7\endcsname{\color{black}}\expandafter\def\csname LT8\endcsname{\color{black}}\fi
  \fi
    \setlength{\unitlength}{0.0500bp}\ifx\gptboxheight\undefined \newlength{\gptboxheight}\newlength{\gptboxwidth}\newsavebox{\gptboxtext}\fi \setlength{\fboxrule}{0.5pt}\setlength{\fboxsep}{1pt}\begin{picture}(4818.00,3174.00)\gplgaddtomacro\gplbacktext{\csname LTb\endcsname \put(1188,550){\makebox(0,0)[r]{\strut{}$10^{-1}$}}\put(1188,1068){\makebox(0,0)[r]{\strut{}$10^{0}$}}\put(1188,1587){\makebox(0,0)[r]{\strut{}$10^{1}$}}\put(1188,2105){\makebox(0,0)[r]{\strut{}$10^{2}$}}\put(1188,2623){\makebox(0,0)[r]{\strut{}$10^{3}$}}\put(1320,330){\makebox(0,0){\strut{}$0$}}\put(1663,330){\makebox(0,0){\strut{}$50$}}\put(2006,330){\makebox(0,0){\strut{}$100$}}\put(2349,330){\makebox(0,0){\strut{}$150$}}\put(2692,330){\makebox(0,0){\strut{}$200$}}\put(3036,330){\makebox(0,0){\strut{}$250$}}\put(3379,330){\makebox(0,0){\strut{}$300$}}\put(3722,330){\makebox(0,0){\strut{}$350$}}\put(4065,330){\makebox(0,0){\strut{}$400$}}\put(4408,330){\makebox(0,0){\strut{}$450$}}\put(4751,330){\makebox(0,0){\strut{}$500$}}}\gplgaddtomacro\gplfronttext{\csname LTb\endcsname \put(3035,0){\makebox(0,0){\strut{}Step}}\put(3035,2733){\makebox(0,0){\strut{}MTT Latency}}}\gplbacktext
    \put(0,0){\includegraphics{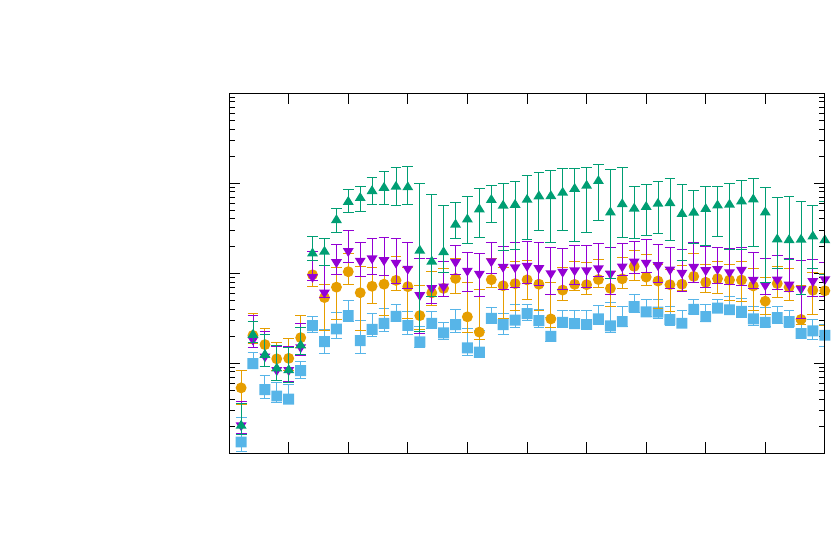}}\gplfronttext
  \end{picture}\endgroup
         \caption{Runtime performance at each step of a run.}
        \label{fig:perf-step2}
    \end{minipage}
    \hfill
    \begin{minipage}[t]{0.48\textwidth}
      \small\sf
        \hspace*{-1.8em}
        \begingroup
  \makeatletter
  \providecommand\color[2][]{\GenericError{(gnuplot) \space\space\space\@spaces}{Package color not loaded in conjunction with
      terminal option `colourtext'}{See the gnuplot documentation for explanation.}{Either use 'blacktext' in gnuplot or load the package
      color.sty in LaTeX.}\renewcommand\color[2][]{}}\providecommand\includegraphics[2][]{\GenericError{(gnuplot) \space\space\space\@spaces}{Package graphicx or graphics not loaded}{See the gnuplot documentation for explanation.}{The gnuplot epslatex terminal needs graphicx.sty or graphics.sty.}\renewcommand\includegraphics[2][]{}}\providecommand\rotatebox[2]{#2}\@ifundefined{ifGPcolor}{\newif\ifGPcolor
    \GPcolorfalse
  }{}\@ifundefined{ifGPblacktext}{\newif\ifGPblacktext
    \GPblacktexttrue
  }{}\let\gplgaddtomacro\g@addto@macro
\gdef\gplbacktext{}\gdef\gplfronttext{}\makeatother
  \ifGPblacktext
\def\colorrgb#1{}\def\colorgray#1{}\else
\ifGPcolor
      \def\colorrgb#1{\color[rgb]{#1}}\def\colorgray#1{\color[gray]{#1}}\expandafter\def\csname LTw\endcsname{\color{white}}\expandafter\def\csname LTb\endcsname{\color{black}}\expandafter\def\csname LTa\endcsname{\color{black}}\expandafter\def\csname LT0\endcsname{\color[rgb]{1,0,0}}\expandafter\def\csname LT1\endcsname{\color[rgb]{0,1,0}}\expandafter\def\csname LT2\endcsname{\color[rgb]{0,0,1}}\expandafter\def\csname LT3\endcsname{\color[rgb]{1,0,1}}\expandafter\def\csname LT4\endcsname{\color[rgb]{0,1,1}}\expandafter\def\csname LT5\endcsname{\color[rgb]{1,1,0}}\expandafter\def\csname LT6\endcsname{\color[rgb]{0,0,0}}\expandafter\def\csname LT7\endcsname{\color[rgb]{1,0.3,0}}\expandafter\def\csname LT8\endcsname{\color[rgb]{0.5,0.5,0.5}}\else
\def\colorrgb#1{\color{black}}\def\colorgray#1{\color[gray]{#1}}\expandafter\def\csname LTw\endcsname{\color{white}}\expandafter\def\csname LTb\endcsname{\color{black}}\expandafter\def\csname LTa\endcsname{\color{black}}\expandafter\def\csname LT0\endcsname{\color{black}}\expandafter\def\csname LT1\endcsname{\color{black}}\expandafter\def\csname LT2\endcsname{\color{black}}\expandafter\def\csname LT3\endcsname{\color{black}}\expandafter\def\csname LT4\endcsname{\color{black}}\expandafter\def\csname LT5\endcsname{\color{black}}\expandafter\def\csname LT6\endcsname{\color{black}}\expandafter\def\csname LT7\endcsname{\color{black}}\expandafter\def\csname LT8\endcsname{\color{black}}\fi
  \fi
    \setlength{\unitlength}{0.0500bp}\ifx\gptboxheight\undefined \newlength{\gptboxheight}\newlength{\gptboxwidth}\newsavebox{\gptboxtext}\fi \setlength{\fboxrule}{0.5pt}\setlength{\fboxsep}{1pt}\begin{picture}(4818.00,3174.00)\gplgaddtomacro\gplbacktext{\csname LTb\endcsname \put(1188,550){\makebox(0,0)[r]{\strut{}$10^{1}$}}\put(1188,1241){\makebox(0,0)[r]{\strut{}$10^{2}$}}\put(1188,1932){\makebox(0,0)[r]{\strut{}$10^{3}$}}\put(1188,2623){\makebox(0,0)[r]{\strut{}$10^{4}$}}\put(1320,330){\makebox(0,0){\strut{}$0$}}\put(1663,330){\makebox(0,0){\strut{}$50$}}\put(2006,330){\makebox(0,0){\strut{}$100$}}\put(2349,330){\makebox(0,0){\strut{}$150$}}\put(2692,330){\makebox(0,0){\strut{}$200$}}\put(3036,330){\makebox(0,0){\strut{}$250$}}\put(3379,330){\makebox(0,0){\strut{}$300$}}\put(3722,330){\makebox(0,0){\strut{}$350$}}\put(4065,330){\makebox(0,0){\strut{}$400$}}\put(4408,330){\makebox(0,0){\strut{}$450$}}\put(4751,330){\makebox(0,0){\strut{}$500$}}}\gplgaddtomacro\gplfronttext{\csname LTb\endcsname \put(3035,2733){\makebox(0,0){\strut{}Robot Ideal Memory}}}\gplbacktext
    \put(0,0){\includegraphics{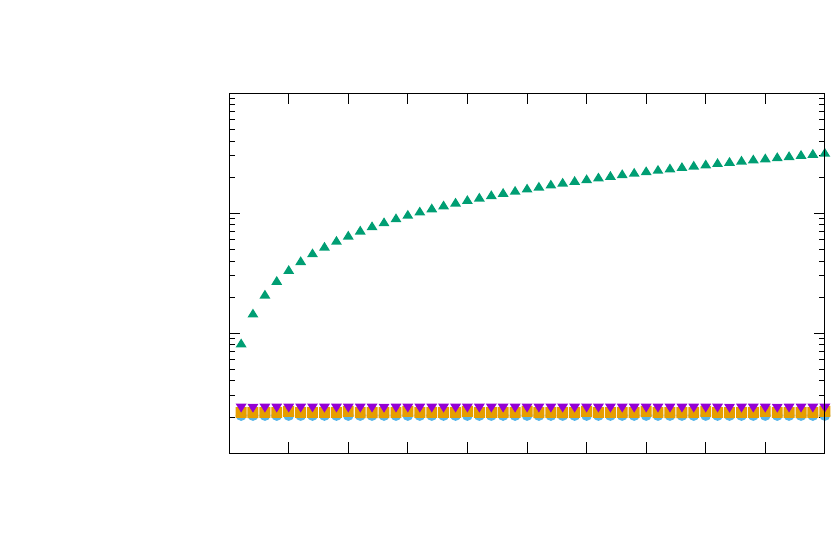}}\gplfronttext
  \end{picture}\endgroup
         \\[-.3cm]
        \hspace*{-1.8em}
        \begingroup
  \makeatletter
  \providecommand\color[2][]{\GenericError{(gnuplot) \space\space\space\@spaces}{Package color not loaded in conjunction with
      terminal option `colourtext'}{See the gnuplot documentation for explanation.}{Either use 'blacktext' in gnuplot or load the package
      color.sty in LaTeX.}\renewcommand\color[2][]{}}\providecommand\includegraphics[2][]{\GenericError{(gnuplot) \space\space\space\@spaces}{Package graphicx or graphics not loaded}{See the gnuplot documentation for explanation.}{The gnuplot epslatex terminal needs graphicx.sty or graphics.sty.}\renewcommand\includegraphics[2][]{}}\providecommand\rotatebox[2]{#2}\@ifundefined{ifGPcolor}{\newif\ifGPcolor
    \GPcolorfalse
  }{}\@ifundefined{ifGPblacktext}{\newif\ifGPblacktext
    \GPblacktexttrue
  }{}\let\gplgaddtomacro\g@addto@macro
\gdef\gplbacktext{}\gdef\gplfronttext{}\makeatother
  \ifGPblacktext
\def\colorrgb#1{}\def\colorgray#1{}\else
\ifGPcolor
      \def\colorrgb#1{\color[rgb]{#1}}\def\colorgray#1{\color[gray]{#1}}\expandafter\def\csname LTw\endcsname{\color{white}}\expandafter\def\csname LTb\endcsname{\color{black}}\expandafter\def\csname LTa\endcsname{\color{black}}\expandafter\def\csname LT0\endcsname{\color[rgb]{1,0,0}}\expandafter\def\csname LT1\endcsname{\color[rgb]{0,1,0}}\expandafter\def\csname LT2\endcsname{\color[rgb]{0,0,1}}\expandafter\def\csname LT3\endcsname{\color[rgb]{1,0,1}}\expandafter\def\csname LT4\endcsname{\color[rgb]{0,1,1}}\expandafter\def\csname LT5\endcsname{\color[rgb]{1,1,0}}\expandafter\def\csname LT6\endcsname{\color[rgb]{0,0,0}}\expandafter\def\csname LT7\endcsname{\color[rgb]{1,0.3,0}}\expandafter\def\csname LT8\endcsname{\color[rgb]{0.5,0.5,0.5}}\else
\def\colorrgb#1{\color{black}}\def\colorgray#1{\color[gray]{#1}}\expandafter\def\csname LTw\endcsname{\color{white}}\expandafter\def\csname LTb\endcsname{\color{black}}\expandafter\def\csname LTa\endcsname{\color{black}}\expandafter\def\csname LT0\endcsname{\color{black}}\expandafter\def\csname LT1\endcsname{\color{black}}\expandafter\def\csname LT2\endcsname{\color{black}}\expandafter\def\csname LT3\endcsname{\color{black}}\expandafter\def\csname LT4\endcsname{\color{black}}\expandafter\def\csname LT5\endcsname{\color{black}}\expandafter\def\csname LT6\endcsname{\color{black}}\expandafter\def\csname LT7\endcsname{\color{black}}\expandafter\def\csname LT8\endcsname{\color{black}}\fi
  \fi
    \setlength{\unitlength}{0.0500bp}\ifx\gptboxheight\undefined \newlength{\gptboxheight}\newlength{\gptboxwidth}\newsavebox{\gptboxtext}\fi \setlength{\fboxrule}{0.5pt}\setlength{\fboxsep}{1pt}\begin{picture}(4818.00,3174.00)\gplgaddtomacro\gplbacktext{\csname LTb\endcsname \put(1188,550){\makebox(0,0)[r]{\strut{}$10^{1}$}}\put(1188,2623){\makebox(0,0)[r]{\strut{}$10^{2}$}}\put(1320,330){\makebox(0,0){\strut{}$0$}}\put(1663,330){\makebox(0,0){\strut{}$50$}}\put(2006,330){\makebox(0,0){\strut{}$100$}}\put(2349,330){\makebox(0,0){\strut{}$150$}}\put(2692,330){\makebox(0,0){\strut{}$200$}}\put(3036,330){\makebox(0,0){\strut{}$250$}}\put(3379,330){\makebox(0,0){\strut{}$300$}}\put(3722,330){\makebox(0,0){\strut{}$350$}}\put(4065,330){\makebox(0,0){\strut{}$400$}}\put(4408,330){\makebox(0,0){\strut{}$450$}}\put(4751,330){\makebox(0,0){\strut{}$500$}}}\gplgaddtomacro\gplfronttext{\csname LTb\endcsname \put(715,1586){\rotatebox{-270}{\makebox(0,0){\strut{}Thousands of words in heap (log scale)}}}\put(3035,2733){\makebox(0,0){\strut{}SLAM Ideal Memory}}}\gplbacktext
    \put(0,0){\includegraphics{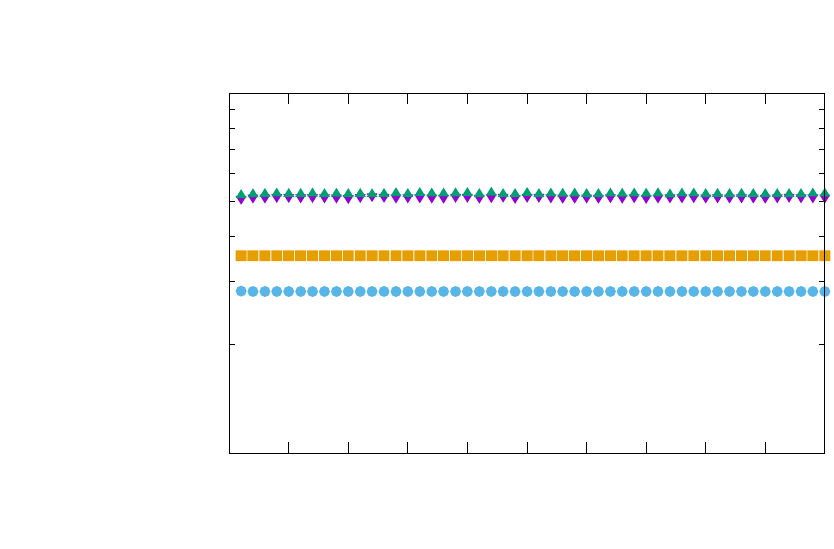}}\gplfronttext
  \end{picture}\endgroup
         \\[-.3cm]
        \hspace*{-1.8em}
        \begingroup
  \makeatletter
  \providecommand\color[2][]{\GenericError{(gnuplot) \space\space\space\@spaces}{Package color not loaded in conjunction with
      terminal option `colourtext'}{See the gnuplot documentation for explanation.}{Either use 'blacktext' in gnuplot or load the package
      color.sty in LaTeX.}\renewcommand\color[2][]{}}\providecommand\includegraphics[2][]{\GenericError{(gnuplot) \space\space\space\@spaces}{Package graphicx or graphics not loaded}{See the gnuplot documentation for explanation.}{The gnuplot epslatex terminal needs graphicx.sty or graphics.sty.}\renewcommand\includegraphics[2][]{}}\providecommand\rotatebox[2]{#2}\@ifundefined{ifGPcolor}{\newif\ifGPcolor
    \GPcolorfalse
  }{}\@ifundefined{ifGPblacktext}{\newif\ifGPblacktext
    \GPblacktexttrue
  }{}\let\gplgaddtomacro\g@addto@macro
\gdef\gplbacktext{}\gdef\gplfronttext{}\makeatother
  \ifGPblacktext
\def\colorrgb#1{}\def\colorgray#1{}\else
\ifGPcolor
      \def\colorrgb#1{\color[rgb]{#1}}\def\colorgray#1{\color[gray]{#1}}\expandafter\def\csname LTw\endcsname{\color{white}}\expandafter\def\csname LTb\endcsname{\color{black}}\expandafter\def\csname LTa\endcsname{\color{black}}\expandafter\def\csname LT0\endcsname{\color[rgb]{1,0,0}}\expandafter\def\csname LT1\endcsname{\color[rgb]{0,1,0}}\expandafter\def\csname LT2\endcsname{\color[rgb]{0,0,1}}\expandafter\def\csname LT3\endcsname{\color[rgb]{1,0,1}}\expandafter\def\csname LT4\endcsname{\color[rgb]{0,1,1}}\expandafter\def\csname LT5\endcsname{\color[rgb]{1,1,0}}\expandafter\def\csname LT6\endcsname{\color[rgb]{0,0,0}}\expandafter\def\csname LT7\endcsname{\color[rgb]{1,0.3,0}}\expandafter\def\csname LT8\endcsname{\color[rgb]{0.5,0.5,0.5}}\else
\def\colorrgb#1{\color{black}}\def\colorgray#1{\color[gray]{#1}}\expandafter\def\csname LTw\endcsname{\color{white}}\expandafter\def\csname LTb\endcsname{\color{black}}\expandafter\def\csname LTa\endcsname{\color{black}}\expandafter\def\csname LT0\endcsname{\color{black}}\expandafter\def\csname LT1\endcsname{\color{black}}\expandafter\def\csname LT2\endcsname{\color{black}}\expandafter\def\csname LT3\endcsname{\color{black}}\expandafter\def\csname LT4\endcsname{\color{black}}\expandafter\def\csname LT5\endcsname{\color{black}}\expandafter\def\csname LT6\endcsname{\color{black}}\expandafter\def\csname LT7\endcsname{\color{black}}\expandafter\def\csname LT8\endcsname{\color{black}}\fi
  \fi
    \setlength{\unitlength}{0.0500bp}\ifx\gptboxheight\undefined \newlength{\gptboxheight}\newlength{\gptboxwidth}\newsavebox{\gptboxtext}\fi \setlength{\fboxrule}{0.5pt}\setlength{\fboxsep}{1pt}\begin{picture}(4818.00,3174.00)\gplgaddtomacro\gplbacktext{\csname LTb\endcsname \put(1188,550){\makebox(0,0)[r]{\strut{}$10^{1}$}}\put(1188,1241){\makebox(0,0)[r]{\strut{}$10^{2}$}}\put(1188,1932){\makebox(0,0)[r]{\strut{}$10^{3}$}}\put(1188,2623){\makebox(0,0)[r]{\strut{}$10^{4}$}}\put(1320,330){\makebox(0,0){\strut{}$0$}}\put(1663,330){\makebox(0,0){\strut{}$50$}}\put(2006,330){\makebox(0,0){\strut{}$100$}}\put(2349,330){\makebox(0,0){\strut{}$150$}}\put(2692,330){\makebox(0,0){\strut{}$200$}}\put(3036,330){\makebox(0,0){\strut{}$250$}}\put(3379,330){\makebox(0,0){\strut{}$300$}}\put(3722,330){\makebox(0,0){\strut{}$350$}}\put(4065,330){\makebox(0,0){\strut{}$400$}}\put(4408,330){\makebox(0,0){\strut{}$450$}}\put(4751,330){\makebox(0,0){\strut{}$500$}}}\gplgaddtomacro\gplfronttext{\csname LTb\endcsname \put(3035,0){\makebox(0,0){\strut{}Step}}\put(3035,2733){\makebox(0,0){\strut{}MTT Ideal Memory}}}\gplbacktext
    \put(0,0){\includegraphics{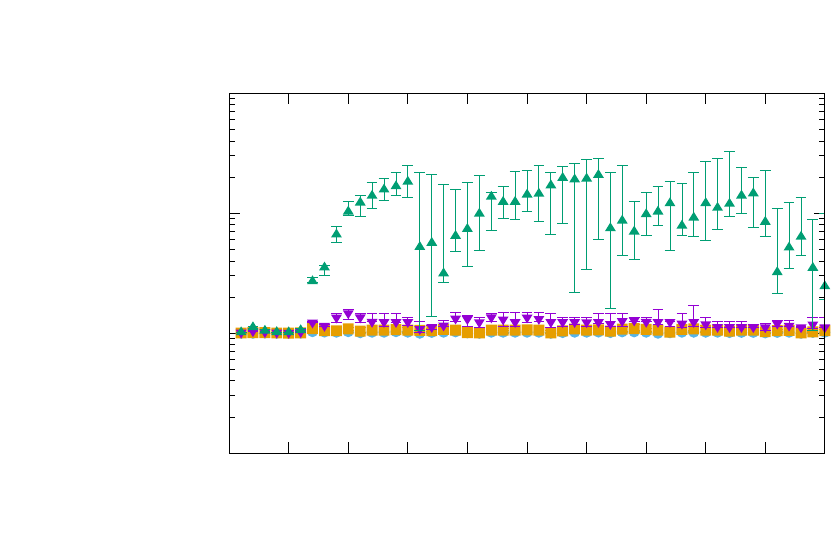}}\gplfronttext
  \end{picture}\endgroup
         \caption{Memory consumption at each step of a run.}
        \label{fig:mem2}
    \end{minipage}
\end{figure*}

 \else
\fi

\end{document}